\title{Parameterized complexity of reconfiguration of atoms} %TODO Please add
\titlerunning{Parameterized complexity of reconfiguration of atoms} %TODO optional, please use if title is longer than one line
\author{Alexandre Cooper}{Institute of Quantum Computing, University of Waterloo, Canada}{alexander.cooper@uwaterloo.ca}{[orcid]}{This research was undertaken thanks in part to funding from the Canada First Research Excellence Fund.}
\author{Stephanie Maaz}{Department of Computer Science, American University of Beirut, Lebanon}{sjm14@mail.aub.edu}{[orcid]}{Supported by URB project ``A theory of change through the lens of reconfiguration''}
\author{Amer~E.~Mouawad}{Department of Computer Science, American University of Beirut, Lebanon}{amer.mouawad@gmail.com}{[orcid]}{Supported by URB project ``A theory of change through the lens of reconfiguration''}
\author{Naomi Nishimura}{David R. Cheriton School of Computer Science, Univeristy of Waterloo, Canada}{nishi@uwaterloo.ca}{[orcid]}{Research supported by the Natural Sciences and Engineering Research Council of Canada}
\authorrunning{A.\,Cooper, S. Maaz, A.~E.~Mouawad, and N.\, Nishimura} %TODO mandatory. First: Use abbreviated first/middle names. Second (only in severe cases): Use first author plus 'et al.'
\keywords{Atom sorting, Parameterized algorithms, Reconfiguration} %TODO mandatory; please add comma-separated list of keywords
\begin{document}
\maketitle
\begin{abstract}
Our work is motivated by the challenges presented in preparing arrays of atoms for use in quantum simulation.  The recently-developed process of loading atoms into traps results in approximately half of the traps being filled.  To consolidate the atoms so that they form a dense and regular arrangement, such as all locations in a grid, atoms are rearranged using moving optical tweezers.  Time is of the essence, as the longer that the process takes and the more that atoms are moved, the higher the chance that atoms will be lost in the process.

Viewed as a problem on graphs, we wish to solve the problem of reconfiguring one arrangement of tokens (representing atoms) to another using as few moves as possible.  
Because the problem is NP-complete on general graphs as well as on grids~\cite{DBLP:journals/siamdm/CalinescuDP08}, we focus on the 
parameterized complexity for various parameters, considering both undirected and directed graphs, and tokens with and without labels. 
For unlabelled tokens, the problem is in FPT when parameterizing by the number of tokens, the number of moves, or the number of moves plus the number 
of vertices without tokens in either the source or target configuration, but intractable when parameterizing by the difference between the number of moves and the number of differences in the 
placement of tokens in the source and target configurations. When labels are added to tokens, however, most of the tractability results are replaced by hardness results.
\end{abstract}

\section{Introduction}\label{sec-intro}

To maximize the probability of success in arranging atoms, approaches need to minimize the probability of atoms being lost during the time between the array being loaded and the atoms being arranged.   The lifetime of trapped atoms is short and limited, and the process of moving an atom may result in the loss of the atom.  Previous work~\cite{PhysRevA.102.063107} has focused on minimizing the total time required, including both the generation and the execution of the sequence of steps, and consequently has aimed to minimize the number of moves.

The rearrangement of atoms can be framed as a reconfiguration problem;
the \emph{reconfiguration framework}~\cite{IDHPSUU11, H13, DBLP:journals/algorithms/Nishimura18, CHJ08,IDHPSUU11,IKD12,DBLP:journals/talg/LokshtanovM19} characterizes the transformation between \emph{configurations} by means of a sequence of \emph{reconfiguration steps}.  By representing atoms as tokens, we can define each configuration of unlabelled tokens as a subset of vertices of a graph,  indicating that there is a token placed on each vertex in the subset; for tokens with labels,  a \emph{labelled configuration} consists of a sequence of vertices in a graph, where the position of a vertex in the sequence corresponds to its label. One configuration can be transformed into another by a sequence of moves, where in each move a token is  moved from one vertex to another along a token-free path.

Since finding a shortest sequence of moves between configurations is NP-hard, even when restricted to grids~\cite{DBLP:journals/siamdm/CalinescuDP08}, we turn to 
the field of parameterized complexity~\cite{DF97,flumgrohe,niedermeier2006,saurabh-book}, which studies the impact of one or more \emph{parameters} on the running time of algorithms.  
A problem is in \emph{FPT} if there exists an algorithm with worst-case running time bounded by $f(k) \cdot n^{O(1)}$ for $n$ the size of the instance, 
$k$ the size of the parameters, and $f$ a computable function; analogous to NP-hardness in the realm of classical complexity are the classes of intractable problems in the \emph{W-hierarchy} such as W[1] and W[2].

We explore the fixed-parameter tractability of the problem for unlabelled and labelled tokens on undirected and directed graphs, with 
respect to various parameters, namely, the number of tokens ($k$), the number of moves ($\ell$), the number of token-less vertices outside the source and target configurations ($f$), and the number of moves exceeding the minimum possible for any instance (namely, the number of differences between the source and target configurations). 
Our results are summarized in Table~\ref{table-summary}.

\begin{table}[h]
\begin{centering}
  \begin{tabular}{|l||l|l|l|l|l|l|l|}
    \hline
  & $k$ & $\ell$ & $\ell + f$ & $\ell - |S \setminus T|$ \cr
  \hline
\hline  
\textsc{UUTM} & FPT (Cor~\ref{cor-kernel}) & FPT (Thm~\ref{thm-fpt-ell}) & FPT (Thm~\ref{thm-fpt-ell}) & W[2]-hard  (Thm~\ref{thm-broom})\cr
\hline
\textsc{UDTM} & FPT (Cor~\ref{cor-kernel}) & FPT (Thm~\ref{thm-dir-ell}) & FPT (Thm~\ref{thm-dir-ell}) & W[2]-hard  (Thm~\ref{thm-broom})\cr
\hline
\textsc{LUTM} & Open & W[1]-hard (Thm~\ref{thm-undir-free}) & W[1]-hard (Thm~\ref{thm-undir-free}) & W[2]-hard  (Thm~\ref{thm-broom})\cr
  \hline
\textsc{LDTM} & Open & W[1]-hard (Thm~\ref{thm-dir-free}) & W[1]-hard (Thm~\ref{thm-dir-free}) & W[2]-hard  (Thm~\ref{thm-broom})\cr
\hline
  \end{tabular}
  \caption{Summary of results for \textbf{U}nlabelled/\textbf{L}abelled and \textbf{U}ndirected/\textbf{D}irected problem variants}
  \label{table-summary}
\end{centering}
\end{table}

\section{Terminology}\label{sec-term}
We formulate our problems in terms of the moving of tokens in a graph, using the notation $G = (V(G),E(G))$ for an undirected 
graph and $D = (V(D),E(D))$ for a directed graph.  The reader is directed to a standard textbook on graph theory~\cite{DBLP:books/daglib/0030488} for 
definitions of graph classes and other terminology.

We define a \emph{move} as a pair $(s,t)$, where $s$ is the \emph{source vertex} of the move and $t$ is the \emph{target vertex} of the move.  
The \emph{execution} of the move $(s,t)$ results in the change from a configuration containing $s$ to a configuration containing $t$, where the same label is mapped to $s$ and $t$, with the rest of the configuration remaining unchanged.  
%\todo[inline]{it is not true that they have to share the same label since a token can move more than once in the labelled case}
%\todo[inline]{NN: I'm confused by this.  I'm just saying that when we move a token, it retains its label.  Since I've defined a labelled configuration as a sequence, I've changed the word to ``position''.}
  In order to ensure that atoms do not collide, a move cannot pass through a vertex that contains a token.  
A vertex is \emph{free} if there is no token on it, and  a path (directed or undirected) is \emph{free} if all intermediate vertices in the path are free. 
A move $(s,t)$ in a sequence is \emph{valid} if, after the execution of the previous moves in the sequence, there is a token on $s$, $t$ is free, and there is 
a free path from $s$ to $t$, which we designate as the \emph{path for the move}.  For a sequence of valid moves $\alpha$ in a graph $G$ (respectively, $D$), we use $G_{\alpha}$ ($D_{\alpha}$) to 
denote the graph induced on the union of edges in the paths for the moves in $\alpha$.  

The execution of a sequence of valid moves \emph{transforms} a configuration $S$ into another configuration $T$ if executing 
the moves starting from $S$ results in tokens being placed as in configuration $T$.   We will call such a sequence of valid moves 
a \emph{transforming sequence for $S$ and $T$} or, when $S$ and $T$ are clear from context, simply a \emph{transforming sequence}.  
In defining a sequence of indices, we use $[n]$ to denote $\{1, \ldots, n\}$. 

We define the four problems \textsc{Unlabelled Undirected Token Moving (UUTM)},
\textsc{Unlabelled Directed Token Moving (UDTM)}, \textsc{Labelled Undirected Token Moving (LUTM)}, and \textsc{Labelled Directed Token Moving (LDTM)} as follows:

\begin{description}
\item[Input (\textsc{UUTM})] An undirected graph $G$ and configurations $S \subseteq V(G)$ and $T \subseteq V(G)$ such that $|S| = |T| = k$ and an integer $\ell$
\item[Input (\textsc{UDTM})] A digraph $D$ and configurations $S \subseteq V(D)$ and $T \subseteq V(D)$ such that $|S| = |T| = k$ and an integer $\ell$
\item[Input (\textsc{LUTM})] An undirected graph $G$ and labelled configurations $S \subseteq V(G)$ and $T \subseteq V(G)$ such that $|S| = |T| = k$ and an integer $\ell$
\item[Input (\textsc{LDTM})] A digraph $D$ and labelled configurations $S \subseteq V(D)$ and $T \subseteq V(D)$ such that $|S| = |T| = k$ and an integer $\ell$  
\item[Output (all problems)] A transforming sequence of length at most $\ell$ 
\end{description}

%\begin{description}
%\item[Problem] \textsc{Unlabelled Undirected Token Moving (UUTM)}
%\item[Input] An undirected graph $G$ and configurations $S \subseteq V(G)$ and $T \subseteq V(G)$ such that $|S| = |T| = k$ and an integer $\ell$
%\item[Output] A transforming sequence of length at most $\ell$ 
%\end{description}

%\begin{description}
%\item[Problem] \textsc{Unlabelled Directed Token Moving (UDTM)}
%\item[Input] A digraph $D$ and configurations $S \subseteq V(D)$ and $T \subseteq V(D)$ such that $|S| = |T| = k$ and an integer $\ell$
%\item[Output] A transforming sequence of length at most $\ell$
%\end{description}

%\begin{description}
%\item[Problem] \textsc{Labelled Undirected Token Moving (LUTM)}
%\item[Input] An undirected graph $G$ and labelled configurations $S \subseteq V(G)$ and $T \subseteq V(G)$ such that $|S| = |T| = k$ and an integer $\ell$
%\item[Output] A transforming sequence of length at most $\ell$ 
%\end{description}

%\begin{description}
%\item[Problem] \textsc{Labelled Directed Token Moving (LDTM)}
%\item[Input] A digraph $D$ and labelled configurations $S \subseteq V(D)$ and $T \subseteq V(D)$ such that $|S| = |T| = k$ and an integer $\ell$
%\item[Output] A transforming sequence of length at most $\ell$
%\end{description}

Unless specified otherwise, all definitions apply to instances $(G, S, T, \ell)$ of \textsc{UUTM} and \textsc{LUTM} as well as 
instances $(D, S, T, \ell)$ of \textsc{UDTM} and \textsc{LDTM}. 
We refer to $S$ as the \emph{source configuration}, $T$ as the \emph{target configuration}, $O = S \cap T$ as the 
set of \emph{obstacles}, and $S \Delta T$ as the \emph{symmetric difference} of $S$ and $T$.
In a \emph{clearing move}, the source vertex is an obstacle and in a \emph{filling move}, the target vertex is an obstacle.
%Of the moves that involve obstacles, we designate a \emph{clearing move} as one for which 
%the source vertex is an obstacle and a \emph{filling move} as one for which the target vertex is an obstacle.

When discussing parameters, we use $k$ to denote $|S| = |T|$ and
%$\ell^\star$ to denote the length of the minimum sequence of moves,
$f$ to denote $|V(G)| - |S \cup T|$ or $|V(D)| - |S \cup T|$ (the number of vertices without tokens in either $S$ or $T$).
%$tw$ to denote the treewidth of $G$ or $D$, and $d$ to denote the maximum degree of any vertex in $G$ or $D$.
%\todo[inline]{Amer: I defined the f parameter as G - S cup T since this is how we use it}
%\todo[inline]{NN: I changed the description of $f$. }

We will refer to an instance $(G, S, T)$ (respectively, $(D, S, T)$) when discussing the length of a shortest transforming sequence from $S$ to $T$ in $G$ ($D$).  
Two instances are \emph{equivalent} if the lengths of the shortest transforming sequences of the instances are equal.

For a (directed) path between a pair of vertices $s$ and $t$, any token on a vertex other than $s$ or $t$ is said to \emph{block} that path.  
If all paths between a pair of vertices $s$ and $t$ are blocked, then we say that the move $(s,t)$ is \emph{blocked}.  
For shorthand, when the presence of a token on a vertex $v$ results in a move $m$ being blocked, we'll say that $m$ \emph{is blocked by} $v$.  

The observations below follow from the definitions:

\begin{observation}\label{obs-block-path}
  If a vertex $v$ blocks move $(s,t)$ in $G$ (respectively, $D$), then there exists a path (resp., directed path) from $v$ to $t$.
\end{observation}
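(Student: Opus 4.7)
The plan is to prove the observation by unpacking the definition of ``blocks.'' Recall from Section~\ref{sec-term} that a move $(s,t)$ is blocked by $v$ precisely when the presence of a token on $v$ is what causes $(s,t)$ to be blocked; equivalently, if we removed the token from $v$ the move would no longer be blocked. I would begin by using this to argue that there must exist at least one $s$-$t$ (directed) walk $P$ whose only token-bearing intermediate vertex is $v$. If no such walk existed, then every $s$-$t$ walk would be blocked by some vertex other than $v$, so the token on $v$ would be irrelevant to the status of $(s,t)$, contradicting the hypothesis that $v$ blocks $(s,t)$.

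Given such a walk $P$, the next step is to note that $v$ must be an intermediate vertex of $P$: it is distinct from $s$ (which holds a token being moved) and from $t$ (which is free by the definition of a valid-candidate move). I would then take the sub-walk of $P$ from $v$ to $t$; this is a (directed) walk in $G$ (resp.\ $D$), and therefore contains a (directed) path from $v$ to $t$, which is what the observation asserts.

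The entire argument is essentially a direct unpacking of the definitions of ``free path,'' ``blocked,'' and ``blocked by,'' so I do not anticipate a substantive obstacle. The only subtlety worth flagging is the distinction between the undirected and directed cases: in the directed setting one must be careful that $P$ is a \emph{directed} walk from $s$ to $t$, so the sub-walk from $v$ to $t$ is likewise directed; this follows immediately from the fact that the notion of free path used in defining ``blocked'' is directed in the digraph variant. Hence a single argument handles both parts of the statement.
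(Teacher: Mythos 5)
Your argument is correct and matches the paper's intent exactly: the paper offers no explicit proof, stating only that the observation follows from the definitions, and your unpacking of ``blocked by $v$'' into an $s$-$t$ (directed) path whose sole blocking token sits on $v$, followed by taking the subpath from $v$ to $t$, is precisely that definitional argument (including the correct handling of the directed case).
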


\begin{observation}\label{obs-remove-last}
Suppose that $m = (s,t)$ is the last move in a transforming sequence $\alpha$ from $S$ to $T$.  The sequence $\alpha'$ formed from $\alpha$ by removing $m$ is a transforming sequence from $S$ to a configuration $T'$, where $T'$ differs from $T$ by having a token on $s$ instead of on $t$.  
\end{observation}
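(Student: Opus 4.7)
The plan is to unpack definitions: write $\alpha = \alpha' \cdot m$, argue that $\alpha'$ remains a valid sequence in its own right, then compute the configuration it produces.

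First I would observe that validity of each move in $\alpha'$ is inherited directly from $\alpha$. Each move $m_i$ of $\alpha'$ occurs at the same position and is preceded by exactly the same prefix of moves as in $\alpha$; hence the token-placement when $m_i$ is to be executed is identical, so $m_i$ remains valid (source occupied, target free, a free path exists). Thus $\alpha'$ is a sequence of valid moves starting from $S$, and therefore transforms $S$ into some well-defined configuration $T'$.

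Next I would identify $T'$ by reversing the final move. Since $m = (s,t)$ is valid when applied to $T'$, by the definition of a move's execution we have a token on $s$ in $T'$, no token on $t$ in $T'$, and the execution of $m$ changes exactly these two positions (token removed from $s$, token placed on $t$), with all other vertices unchanged and with the label associated to $s$ in $T'$ becoming the label associated to $t$ in $T$. Applying $m$ to $T'$ yields $T$, so $T$ and $T'$ agree everywhere outside $\{s,t\}$, while $T'$ carries the token (with its label) on $s$ and $T$ carries it on $t$. This is precisely the claim.

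The argument is essentially bookkeeping; the only point requiring a moment of attention is that removing the \emph{last} move is needed so that no subsequent move in $\alpha$ depends on the new state at $t$ (or on $s$ being free). That observation is what rules out the symmetric statement for arbitrary intermediate moves and is the only conceptual subtlety; I expect no technical obstacle beyond stating it carefully.
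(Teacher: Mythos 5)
Your proof is correct and takes essentially the same route the paper intends: the paper states this observation as following directly from the definitions, and your argument (validity of each move in $\alpha'$ is inherited since it has the same prefix as in $\alpha$, and the final move $m=(s,t)$ changes only the token on $s$ versus $t$, labels included) is exactly that definitional unpacking. No gaps.
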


Taken together, we can use the observations to detect the existence of free paths:

\begin{observation}\label{obs-only-block}
For $\alpha$, $\alpha'$, $m = (s,t)$, $S$, and $T$ as in Observation~\ref{obs-remove-last}, suppose there exists a transforming sequence $\gamma$ from $S$ to a configuration $U$, where $U$ differs from $T'$ by a single token, where $U$ has a token on a vertex $u$ and $T'$ has a token on a vertex $v \ne s$.  Then if $m$ is blocked after the execution of $\gamma$ on $S$, there must be a free (directed, if in $D$) path from $u$ to $t$.
\end{observation}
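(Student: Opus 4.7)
The plan is to produce the required free path in $U$ as a sub-path of the free path witnessing the validity of $m$ in $T'$. Since $\alpha$ is a transforming sequence from $S$ to $T$ whose last move is $m = (s,t)$, the move $m$ must be valid immediately after $\alpha'$ is executed on $S$, i.e., in configuration $T'$. By the definition of a valid move, there is a free (directed, in the case of $D$) path $P$ from $s$ to $t$ in $T'$: every intermediate vertex of $P$ carries no token in $T'$.

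By hypothesis, $U$ is obtained from $T'$ by removing the token at $v$ and placing a token at a new vertex $u \notin T'$; no other vertex changes. Removing a token cannot turn a previously free path into a blocked one, so the only way the path $P$ can fail to be free in $U$ is for $u$ to appear as an intermediate vertex of $P$. Since we assume that $m$ is blocked in $U$, $P$ in particular is no longer free in $U$, forcing $u$ to lie on $P$ strictly between $s$ and $t$.

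Let $P'$ be the portion of $P$ from $u$ to $t$. Its intermediate vertices form a subset of those of $P$ and do not include $u$ itself, so they are free in $T'$; since passing from $T'$ to $U$ only adds a token at $u$ (and removes one at $v$), they remain free in $U$. Hence $P'$ is a free (directed, if in $D$) path from $u$ to $t$ in $U$, as required. The argument has no real obstacle; the only subtle point is observing where the hypothesis $v \ne s$ is used, namely to guarantee that $s$ still carries a token in $U$, so that the phrase ``$m$ is blocked in $U$'' refers meaningfully to obstructed paths rather than to an empty source vertex. The directed and undirected cases are handled uniformly, since $P'$ inherits its orientation from $P$.
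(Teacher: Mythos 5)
Your proof is correct and follows essentially the same reasoning as the paper's: $m$ is valid in $T'$ since it is the last move of $\alpha$, the only vertex gaining a token in passing from $T'$ to $U$ is $u$, so $u$ must be the blocker and it lies on the free $s$--$t$ path, whose suffix from $u$ to $t$ stays free. The only cosmetic difference is that you extract the suffix of the witness path directly, whereas the paper cites Observation~\ref{obs-block-path}; the underlying argument is the same.
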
  

\begin{proof}
We know that $m$ is valid in $\alpha$ and hence $m$ is valid after the execution of $\alpha'$.  The only difference between $U$ and $T'$ is the presence of a token on $u$ instead of $v$, so the only vertex that can block $m$ is $u$.  The fact that $u$ is the only reason that there is not a free path from $s$ to $t$, combined with Observation~\ref{obs-block-path}, implies that there is a free (directed) path from  $u$ to $t$.
\end{proof}

\section{Fixed-parameter tractability results}\label{sec-fpt}

\subsection{Preliminaries}
We first establish properties of shortest transforming sequences that allow for clean proofs of our results. 
C{\u{a}}linescu et al.~\cite{DBLP:journals/siamdm/CalinescuDP08} have shown that in any unlabelled undirected graph, it is possible to transform any configuration 
into another by a single move of each token; in Lemma~\ref{lemma-move-once}, we show that even for shortest transforming sequences, we can assume no token moves twice.

\begin{lemma}\label{lemma-move-once}
For any instance $(G, S, T, \ell)$ of \textsc{UUTM} or any instance $(D, S, T, \ell)$ of \textsc{UDTM}, there exists a shortest transforming sequence 
in which no token moves more than once.
\end{lemma}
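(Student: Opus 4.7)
The plan is to argue by contradiction via an exchange modification of a shortest transforming sequence. Suppose that every shortest transforming sequence contains a token that moves more than once. Among all shortest sequences, choose $\alpha = m_1, \ldots, m_\ell$ to minimize the total number of moves contributed by tokens that move more than once. Let $\tau$ be a multi-move token, let $i < j$ be the first two steps at which $\tau$ moves, and write $m_i = (a, b)$ and $m_j = (b, c)$. Because $\tau$ sits at $b$ throughout $[i, j)$ and $\tau$ is the only token at $b$ during that interval, none of the intermediate moves $m_{i+1}, \ldots, m_{j-1}$ uses $b$ as source, target, or interior vertex of its path.

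My candidate shorter sequence $\alpha'$ of length $\ell - 1$ is obtained by deleting $m_i$ and replacing $m_j$ with a single merged move $(a, c)$; the intermediate moves are then validated in the new configurations, where $\tau$ sits at $a$ rather than at $b$. Validity of $\alpha'$ requires three checks. First, the intermediate moves $m_{i+1}, \ldots, m_{j-1}$ must remain valid after the deletion of $m_i$; this is the delicate step and is discussed below. Second, the merged move $(a, c)$ must have a free (directed) path in the configuration immediately before it. Here I would consider two cases: if $a$ lies on the interior of the path $P_j$ of $m_j$, then the suffix of $P_j$ starting at $a$ is itself a free path from $a$ to $c$; otherwise I would concatenate $P_i$ with $P_j$ and argue that its interior is free, invoking Observation~\ref{obs-only-block} to certify freeness whenever the only potential obstruction is the token $\tau$ whose position has changed. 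Third, the tail $m_{j+1}, \ldots, m_\ell$ carries through unchanged, because immediately after the merged move the configuration agrees with that of $\alpha$ immediately after $m_j$ (in both cases $\tau$ has arrived at $c$ and no other token has shifted).

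The main obstacle will be the first check: an intermediate move $m_k$ may have used $a$ as an interior or target vertex (since $a$ was free after $m_i$ in $\alpha$), and this is blocked in $\alpha'$ by the repositioned $\tau$. My plan is a path-substitution argument: for each problematic $m_k$ I reroute its path through $b$, which is free in $\alpha'$, using $P_i$ as an $a$-to-$b$ bridge; when $a$ is the target of $m_k$, the unlabelled setting lets me instead retarget $m_k$ to $b$, since the resulting configuration is indistinguishable up to the identity of $\tau$. For \textsc{UUTM} the symmetry of undirected edges keeps the substitution clean, and the core argument aligns with the construction of C{\u{a}}linescu, Dumitrescu, and Pach~\cite{DBLP:journals/siamdm/CalinescuDP08}, who established that in the undirected unlabelled setting any configuration can be reached with each token moving at most once. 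For \textsc{UDTM} the substitution has to respect edge directions, and the main technical tool is Observation~\ref{obs-only-block}, which I would invoke repeatedly to certify the existence of the replacement free directed paths whenever a naive substitution fails. Once all three checks succeed, $\alpha'$ is a strictly shorter transforming sequence from $S$ to $T$, contradicting the shortness of $\alpha$ and establishing the lemma.
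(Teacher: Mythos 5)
Your overall route differs from the paper's (the paper re-threads the whole window between the two moves of the repeated token using ``offset'' moves $(s_i,t_{i+1})$, matching moves, and a linking move per segment, rather than keeping the intermediate moves intact), but as it stands the proposal has genuine gaps at exactly the step you flag as delicate. First, the fix of retargeting an intermediate move $m_k$ from $a$ to $b$ changes the resulting unlabelled configuration: at the end of your modified window, $b$ is occupied and $a$ is free, whereas in $\alpha$ after $m_j$ it is $a$ that is occupied and $b$ that is free. So your third check (``the tail carries through unchanged'') fails whenever this fix is used; repairing it would need either an extra move from $b$ back to $a$ (which cancels the length saving, and in \textsc{UDTM} such a move may not even exist, since the only path you know of runs from $a$ to $b$) or a cascade of further reassignments of targets to other tokens -- which is precisely the machinery you have not supplied and which the paper builds explicitly. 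Second, the fix of rerouting an intermediate path ``through $b$ using $P_i$ as a bridge'' does not avoid the occupied vertex $a$, because $a$ is an endpoint of $P_i$; moreover nothing guarantees that the interior of $P_i$ is still free at step $k$, and in the directed case the orientation of $P_i$ adds a further obstruction.

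The merged move $(a,c)$ is also not secured. In the case where $a$ is not interior to $P_j$, the concatenation of $P_i$ and $P_j$ can be blocked by tokens that the intermediate moves $m_{i+1},\ldots,m_{j-1}$ placed on interior vertices of $P_i$; these are obstructions other than $\tau$, so the claim that $\tau$ is the only potential obstruction is unjustified. Nor can Observation~\ref{obs-only-block} be invoked as you intend: the configuration before the merged move in $\alpha'$ and the configuration before $m_j$ in $\alpha$ differ exactly in the token sitting on $b$, which is the source of $m_j$, i.e.\ the case $v = s$ that the observation's hypothesis explicitly excludes. The paper sidesteps all of these issues by abandoning the requirement that the intermediate moves be preserved: it rebuilds the window out of offset moves, falls back to matching moves when an offset move is blocked, stitches with a linking move, and proves validity by induction using Observation~\ref{obs-only-block} in configurations where its hypothesis holds. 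Without an argument of that kind (or a correct resolution of the $a$-collision and merged-path cases), your exchange step does not yield a valid shorter sequence ending in $T$.
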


\begin{proof}
  We provide the proof for the directed case, which implies the undirected case.
  We assume to the contrary that there exists an instance for which any shortest transforming sequence requires the move of at least one token a second time.  
  We minimize the length of the subsequence of moves $\alpha$ starting at the first move of a token and ending at the second move of the same token.
  We can represent $\alpha$ as $m_0, m_1, m_2, \ldots, m_{\ell-1}, m_\ell$, where the move $m_i = (s_i,t_i)$. 
  Since the first and last move are of the same token, $t_0 = s_\ell$.   

  To prove the lemma, it suffices to show that we can form a shorter sequence of moves, $\alpha'$, such that $\alpha$ and $\alpha'$ result in the same configuration and $\alpha'$ is shorter than $\alpha$.  We will refer to  moves of the form $(s_i,t_i)$ as \emph{matching moves} and moves of the form $(s_i,t_{i+1})$ as \emph{offset moves}; a \emph{sequence} of matching (offset) moves consists of moves for consecutive values of $i$.

  We construct $\alpha'$ as a sequence of \emph{segments} (Figure~\ref{fig-move-once}), where each segment is formed by adding zero or more valid offset moves until a blocked offset move $(s_x,t_{x+1})$ is reached, adding all valid matching moves starting at $(s_{x+1},t_{x+1})$ until a blocked matching move $(s_y,t_y)$ is reached, and then adding the \emph{linking move} $(s_x,t_y)$. The next segment will start by attempting to add the offset move $(s_y,t_{y+1})$.

  %  When we compare the placement of tokens after the first $y$ moves of $\alpha$ and the first $y$ moves of $\alpha'$, they differ only in the presence of a token on $t_0$ in $\alpha$ and the presence of a token on $t_{y+1}$ in $\alpha'$.

  To prove that the moves in the segments are all valid, we first show that $(s_0,t_1)$, the first move in the first segment, is valid.
We know that in $\alpha$, $m_1$ is valid after $m_0$ is executed, but 
since $\alpha$ is of minimum length, $m_1$ cannot be executed before $m_0$.  Since $(s_0,t_0)$ results in a free path between 
$s_1$ and $t_1$, we know that $s_0$ must contain the only token on an intermediate vertex on a path from $s_1$ to $t_1$, and hence 
there are free paths from $s_0$ to each of $s_1$ and $t_1$.

To show that all of $\alpha'$ is valid, we prove the following statements by induction on the number of segments executed so far.
  \begin{enumerate}
  \item{} The first matching move in any segment is valid.
  \item{} The linking move in any segment is valid.
  \end{enumerate}

  To show that part (1) holds, we suppose instead that the matching move $(s_{x+1},t_{x+1})$ is blocked after the execution of $(s_{x-1},t_x)$. 
  By the definition of the algorithm, we are attempting to add $(s_{x+1},t_{x+1})$
because $(s_x,t_{x+1})$ is not valid.

Because the matching move $(s_{x+1},t_{x+1})$ is valid in $\alpha$, by Observation~\ref{obs-only-block}, we can then conclude that $s_x$ blocks the move, and that there is a free path from $s_x$ to $t_{x+1}$, contradicting the assumption $(s_x,t_{x+1})$ is not valid.

For part (2), we consider the placement of tokens in $\alpha$ and $\alpha'$ at the point at which $(s_y,t_y)$ is blocked.  By Observation~\ref{obs-only-block}, the fact that the $(s_y,t_y)$ is valid in $\alpha$ but not in $\alpha'$ implies that there is a free directed path from $s_x$ to $t_y$, and hence $(s_x,t_y)$ is valid.

  Finally, we complete the proof by showing that $\alpha$ and $\alpha'$ result in the same configuration.
  By the end of the first $\ell-1$ moves in $\alpha'$, the only difference between the configuration obtained by the end of the first $\ell-1$ moves in $\alpha$ is that there is no token on $t_0 = s_\ell$ and a token on $t_\ell$.  Thus, no further moves are required in $\alpha'$ to 
  achieve the configuration reached in $\alpha$ by the final move, $(s_\ell, t_\ell)$.  
\end{proof}

\begin{figure}[h]
\begin{centering}
\includegraphics[scale=.4]{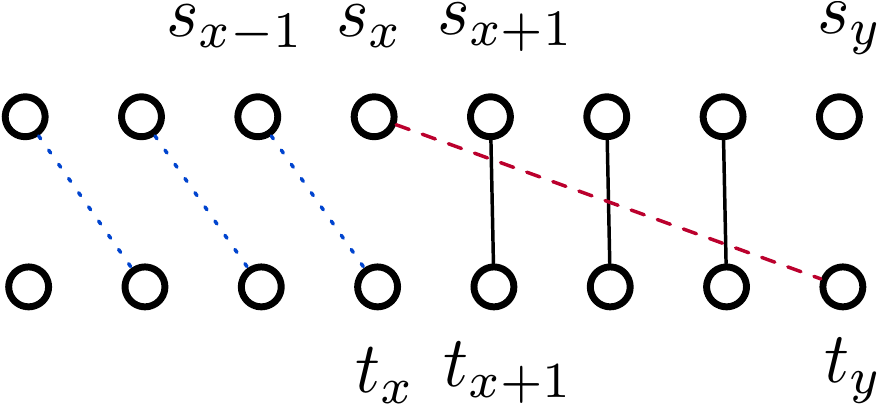}
\caption{Formation of a segment; sources and targets for consecutive values of $i$ are depicted by rows of circles, offset moves by dotted blue lines, matching moves by solid black lines, and the linking move by a dashed red line.}  
\label{fig-move-once}
\end{centering}
\end{figure}

We use Lemma~\ref{lemma-move-once} to show that we can find an equivalent instance in 
which every vertex is in $S \cup T$; we refer to such an instance as a \emph{contracted instance}. We use contracted instances to form algorithms when parameterizing by $k$.

\begin{lemma}\label{lemma-cliquify}
For any instance $(G, S, T, \ell)$ of \textsc{UUTM} or any instance $(D, S, T, \ell)$ of \textsc{UDTM}, we can form an equivalent 
contracted instance $(G', S', T', \ell)$ or $(D', S', T', \ell)$.
\end{lemma}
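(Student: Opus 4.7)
The plan is to take $V(G') = S \cup T$ and, for each distinct pair $u, v \in S \cup T$, insert an edge $\{u,v\}$ in $E(G')$ (respectively an arc $(u,v)$ in $E(D')$) exactly when there is a path in $G$ (respectively a directed path in $D$) from $u$ to $v$ whose internal vertices all lie in $V(G) \setminus (S \cup T)$ (respectively $V(D) \setminus (S \cup T)$). I set $S' = S$ and $T' = T$ and keep $\ell$ unchanged. The task is then to show that the shortest transforming sequences in $(G,S,T)$ and $(G',S',T')$ have the same length.

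First I would invoke Lemma~\ref{lemma-move-once} to restrict attention to shortest sequences in which no token moves more than once. In any such sequence, tokens on $S \cap T$ never move and every other token moves exactly once from $S \setminus T$ to $T \setminus S$; in particular, every token lies in $S \cup T$ at every step and every vertex of $V(G) \setminus (S \cup T)$ (or $V(D) \setminus (S \cup T)$) is free throughout. For the forward direction, given such a shortest sequence $\alpha$ in $G$, I replay the same moves in $G'$: for each move $(s,t)$ in $\alpha$ with free path $P$, I list the vertices of $P$ that lie in $S \cup T$ in order as $s = v_0, v_1, \ldots, v_p = t$. By construction each consecutive pair $(v_i, v_{i+1})$ is an edge of $G'$, and each internal $v_i$ is free on $P$ in $G$ and hence free in $G'$ at that step (the two instances carry identical token placements on $S \cup T$ throughout). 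This yields a free path in $G'$ witnessing the same move, so $\alpha$ is a valid transforming sequence of the same length in $(G',S',T')$.

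For the backward direction, I expand any shortest sequence $\alpha'$ in $G'$ into a sequence of the same length in $G$ by substituting, in the free path of each move $(s,t)$, every edge $(v_i,v_{i+1})$ with the corresponding internal-$V(G)\setminus(S\cup T)$ path in $G$ and concatenating. The internal vertices of the resulting walk are either some $v_j$ (free by hypothesis in $G'$ and hence in $G$) or vertices outside $S \cup T$ (always free, as noted above). Extracting a simple subpath gives a free path in $G$, so the same move is valid in $G$. The main technical point is the freeness bookkeeping across this substitution, and it is exactly the step for which Lemma~\ref{lemma-move-once} is tailor-made: it guarantees that tokens never occupy $V(G) \setminus (S \cup T)$, so the non-configuration vertices used by the substituted paths cannot be blocked, and the two directions combined show that the instances are equivalent.
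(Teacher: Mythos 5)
Your construction and argument are essentially the paper's: the paper forms $G'$ (resp.\ $D'$) by shortcutting every vertex $v \notin S \cup T$, adding edges between its neighbours (arcs from in- to out-neighbours), which—applied until all such vertices are eliminated—produces exactly your graph on $S \cup T$ with an edge whenever a connecting path with internal vertices outside $S \cup T$ exists; both proofs rest on Lemma~\ref{lemma-move-once} to argue that vertices outside $S \cup T$ serve only as path intermediaries. Your write-up is in fact more explicit about why the optimal lengths coincide in both directions, which the paper leaves implicit. One intermediate claim, however, is false as stated: it is not true that in a move-once shortest sequence ``tokens on $S \cap T$ never move and every other token moves exactly once from $S \setminus T$ to $T \setminus S$.'' Obstacles may be the sources of clearing moves and the targets of filling moves (the paper's Corollary~\ref{cor-kernel} and the broom reductions rely on exactly this), and a token starting in $S \setminus T$ may end on a vertex of $S \cap T$. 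Fortunately the only consequence your argument actually uses—that every token is at all times on a vertex of $S \cup T$, so vertices outside $S \cup T$ stay free and every move has its source in $S$ and target in $T$—still follows directly from Lemma~\ref{lemma-move-once}: a token that moves at most once is always either on its initial vertex (in $S$) or on its final vertex (in $T$). With that sentence corrected, your proof goes through.
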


\begin{proof}
Since by Lemma~\ref{lemma-move-once} we know that a vertex $v \notin S \cup T$ is never the source vertex nor the target vertex 
of a move, the only role it can play is in connecting its neighbours.  We form $G'$ from $G$ by adding an edge between each pair of neighbours of each such vertex $v$.

To form $D'$, we instead add an arc from each in-neighbour of $v$ to each out-neighbour of $v$, where $v \notin S \cup T$.  Vertices corresponding to those in $S$ and $T$ form $S'$ and $T'$.
\end{proof}

\begin{corollary}\label{cor-kernel}
\textsc{UUTM} and \textsc{UDTM} admit an $O(k)$ vertex-kernel when parameterized by $k$. Moreover, the problems can be solved in $k^{O(\ell)} \cdot n^{O(1)}$ time. 
\end{corollary}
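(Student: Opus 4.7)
The plan is to read off both claims directly from Lemma~\ref{lemma-cliquify}. For the kernel, given an input instance $(G,S,T,\ell)$ (or its directed counterpart), I would apply Lemma~\ref{lemma-cliquify} in polynomial time to produce an equivalent contracted instance $(G',S',T',\ell)$ in which every vertex lies in $S' \cup T'$. Since $|S'| = |T'| = k$, the vertex set of $G'$ has size at most $|S'| + |T'| \le 2k$, giving the desired $O(k)$ vertex-kernel. Equivalence is already guaranteed by the lemma, so there is nothing further to verify at this step.

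For the algorithm, I would work inside the contracted instance. A move is a pair $(s,t)$ of vertices, so the number of candidate moves at any point is at most $|V(G')|^2 = O(k^2)$. I would then exhaustively enumerate all sequences of at most $\ell$ such pairs, of which there are $(k^2)^{\ell} = k^{O(\ell)}$. For each candidate sequence I would simulate the moves in order on the contracted instance, checking validity at each step (i.e.\ that the source carries a token, the target is free, and a free path exists from source to target), which amounts to a reachability query in a graph on $O(k)$ vertices and can be done in $k^{O(1)}$ time per move. Adding the $n^{O(1)}$ cost of producing the contracted instance gives the overall bound $k^{O(\ell)} \cdot n^{O(1)}$.

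The only thing to justify carefully is that the enumeration is complete, that is, if a transforming sequence of length at most $\ell$ exists in $(G',S',T',\ell)$ then the brute-force search will find one. This is immediate because equivalence of the two instances (Lemma~\ref{lemma-cliquify}) means the shortest transforming sequences have the same length, and any transforming sequence in $G'$ of length $\le \ell$ is itself one of the sequences enumerated. I do not expect a real obstacle here; the corollary is essentially a packaging of Lemmas~\ref{lemma-move-once} and \ref{lemma-cliquify}, where all the substantive work has already been done in showing that tokens need not move twice and that non-configuration vertices can be contracted away.
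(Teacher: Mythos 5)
Your proof is correct and follows essentially the same route as the paper: the $O(k)$ vertex-kernel is read off directly from Lemma~\ref{lemma-cliquify}, and the $k^{O(\ell)} \cdot n^{O(1)}$ algorithm is a brute-force search over move sequences in the contracted instance. The only cosmetic difference is that the paper organizes the enumeration via Lemma~\ref{lemma-move-once} (choosing which obstacles serve as sources, the pairing of sources with targets, and the order of moves) rather than enumerating all sequences of at most $\ell$ vertex pairs and simulating them, but both yield the same stated bound.
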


\begin{proof}
  The size of the kernel follows directly from Lemma~\ref{lemma-cliquify}.  
  To form the algorithm, we observe by Lemma~\ref{lemma-move-once} that it suffices to determine the set of obstacles $O$ that are sources of 
  moves, the pairing of sources and targets into moves, and the order of moves. 
  The number of moves $\ell$ will be at most $|S \setminus T| + |O|$, allowing us try all possible choices of vertices in $O$, pairings, and orders in the stated time bound.
\end{proof}  

\subsection{\textsc{Unlabelled Undirected Token Moving}}

Our algorithm for \textsc{Unlabelled Undirected Token Moving} relies on the characterization of the graph $G_{\alpha}$ of a transforming sequence $\alpha$ of minimum length of a contracted instance.  
In Lemmas~\ref{lemma-forest} and \ref{lemma-steiner}, we show that there exists $\alpha$ such that $G_{\alpha}$ is a forest of minimum Steiner trees.
By considering all possible ways of partitioning vertices in $S \Delta T$ into trees, and counting the number of moves required by 
each choice, in Theorem~\ref{thm-fpt-ell} we are able to obtain an FPT algorithm for the \textsc{UUTM} problem 
parameterized by $\ell$ on contracted instances, and hence by Lemma~\ref{lemma-cliquify}, for all instances. 

%\begin{theorem}[Calinescu et al.~\cite{DBLP:journals/siamdm/CalinescuDP08}]\label{thm-calinescu}
%  For any undirected graph $G$ and configurations $S$ and $T$ of size $k$,
%  there exists a sequence of valid moves of length at most $k$ that transforms $S$ into $T$.
%  Moreover, if $G$ is a tree, a shortest such sequence can be computed in time polynomial in $|V(G)| = n$. 
%\end{theorem}

\begin{lemma}\label{lemma-forest}
For any contracted instance $(G, S, T)$ of \textsc{UUTM}, there exists a transforming sequence $\alpha$ of minimum length such that $G_{\alpha}$ is a forest.
\end{lemma}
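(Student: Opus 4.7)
My plan is to take a shortest transforming sequence $\alpha$ and show that the witness paths for its moves can be chosen so that $G_\alpha$ is a forest. By Lemma~\ref{lemma-move-once}, I may assume that $\alpha$ is a shortest transforming sequence in which no token moves more than once. The key observation that enables the plan is that the configurations visited by $\alpha$ depend only on the sequence of pairs $(s_i,t_i)$ and not on which free $s_i$-$t_i$ path is designated as the witness for each move; hence I may independently choose, for each move $m_i$, any free $s_i$-$t_i$ path available at the moment $m_i$ is executed.

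Among all such independent choices of witness paths for the moves of $\alpha$, I fix one that minimizes $|E(G_\alpha)|$ and argue by contradiction that the minimum is attained by a forest. Suppose $G_\alpha$ contains a cycle $C$. Pick an edge $uv\in C$ and a path $P_i$ containing $uv$. The walk $C-uv$ is an alternative $u$-$v$ walk $W$ inside $G_\alpha$, built from edges lying on other paths $P_j$. Provided every intermediate vertex of $W$ is free at time $i$, splicing $W$ into $P_i$ in place of the edge $uv$ produces a valid $s_i$-$t_i$ walk, whose simple $s_i$-$t_i$ subpath $P_i'$ is a new witness for $m_i$ that avoids $uv$. Applying this substitution to every path whose witness uses $uv$ eliminates $uv$ from $G_\alpha$ without introducing any new edges, strictly decreasing $|E(G_\alpha)|$ and contradicting minimality.

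The main obstacle is verifying that the intermediate vertices of $W$ are all free at time $i$. To address this, I would choose the pair $(uv,i)$ carefully---for instance, taking $i$ to be the smallest index of any move whose witness path uses an edge of $C$---so that the moves responsible for the remaining edges of $C$ have not yet been executed. For an intermediate vertex $w$ of $W$, if $w$ is never a source or target in $\alpha$ then $w$ is free throughout $\alpha$ since in a contracted instance $V(G)=S\cup T$ and Lemma~\ref{lemma-move-once} controls movement; if $w=s_j$ or $w=t_j$ for some $j$, the freeness at time $i$ reduces to a case analysis on the relative order of $i$ and $j$, where Observation~\ref{obs-only-block} together with the minimality of $\alpha$ can be used to forbid the obstructing configurations, since any such obstruction would either admit a shorter sequence or permit an alternate splicing via a different edge of $C$. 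Performing this case analysis carefully, and simultaneously verifying that the rerouting genuinely removes edges from the union rather than merely trading one set of edges for another, constitutes the bulk of the technical work.
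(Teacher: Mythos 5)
There is a genuine gap, and it sits exactly where you defer the work. Your plan fixes the moves of a shortest sequence $\alpha$ (using Lemma~\ref{lemma-move-once}) and only re-chooses witness paths, hoping edge-minimality of $G_\alpha$ forces a forest. The unproven step---that the alternative walk $W$ around the cycle has all intermediate vertices free at time $i$---is not a technicality to be handled by ``careful case analysis''; it is the crux, and for a fixed move sequence it is simply false in general. In a contracted instance every vertex lies in $S\cup T$, so an internal vertex of $W$ is typically either the target $t_j$ of an earlier move ($j<i$), which is occupied from time $j$ onward and, since no token moves twice, never vacated again, or the source $s_j$ of a later move ($j>i$), which is occupied from the start through time $i$. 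Choosing $i$ to be the earliest move touching the cycle makes the second kind of blockage generic; choosing it latest makes the first kind generic. No re-choice of witness paths for the same $(s_j,t_j)$ pairs can clear such tokens, so the splicing step need not be available for any edge of the cycle, and the contradiction with minimality of $|E(G_\alpha)|$ does not follow. Observation~\ref{obs-only-block} identifies a unique blocker in a one-token-difference situation; it gives you no handle on these permanently parked tokens.

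The paper's proof works precisely because it does \emph{not} keep the moves fixed: when the cycle-closing move $(s,t)$ must use the side $P'$ because the other side $P$ carries a token on some vertex $b$, the argument reassigns targets---an earlier token is sent to $t$ instead of to $b$, and the final move becomes $(s,b)$ along the now-free portion of $P$---and then rebuilds the prefix from scratch as a transforming sequence on a tree $G'$ via the algorithm of C{\u{a}}linescu et al.~\cite{DBLP:journals/siamdm/CalinescuDP08}, using the fact that every token of $G_\gamma$ moved in the original prefix to certify that the rebuilt prefix (one move per token) is no longer. Your proposal is missing both ingredients: a mechanism for altering which token ends where (needed when rerouting is blocked), and a length-control argument for the altered sequence. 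Note also that the lemma only claims existence of \emph{some} minimum-length $\alpha$ with $G_\alpha$ a forest; your strategy is implicitly trying to prove the stronger statement that \emph{every} shortest no-repeat sequence admits a forest routing, which you have not established and which your tools do not obviously support.
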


\begin{proof}
  Suppose to the contrary that each transforming sequence of minimum length induces a graph containing a cycle.  
  Of these, we consider the minimum length of the subsequence $\beta$ of $\alpha$ consisting of the moves bracketing the 
  formation of the first cycle in $G_{\alpha}$: for the first cycle $C$ formed, we start with the first move such that the path for the move contains a 
  vertex in $C$, and end with the first move that completes $C$.  We use $\gamma$ to refer to the sequence 
  formed by removing the last move from $\beta$, and $(s,t)$ to refer to the last move in $\beta$.

  In the graph $G_{\beta}$, we use $S_0$ to denote the configuration before $\beta$ is executed, $S_{\gamma}$ to denote the 
  configuration after $\gamma$ is executed, and $S_{\beta}$ to denote the configuration after $\beta$ is executed (Figure~\ref{fig-forest} (a)). 
  By definition, $S_{\gamma}$ and $S_{\beta}$ differ only in the placement of a single token, which can be found on $s$ in $S_{\gamma}$ and on $t$ in $S_{\beta}$. 

  Since $(s,t)$ completes the cycle $C$, we can decompose $C$ into two paths between a pair of vertices $s'$ and $t'$, where $s'$ and $t'$ are both on the path for the move $(s,t)$.  We use $P$ to denote the path between $s'$ and $t'$ that is part of $G_{\gamma}$ and $P'$ to denote the path between $s'$ and $t'$ that is part of the path for the move $(s,t)$ (Figure~\ref{fig-forest} (b)).

\begin{figure}[h]
\begin{centering}
\includegraphics[scale=.3]{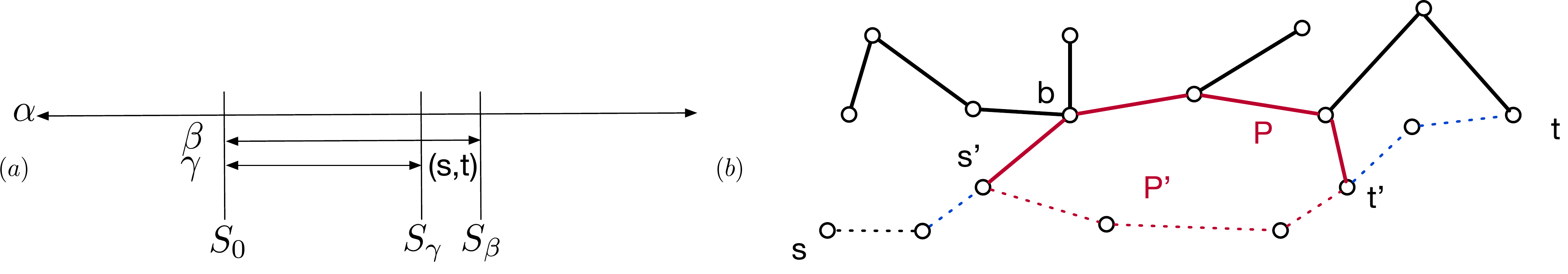}
\caption{(a) Depiction of sequences by horizontal lines and timing of configurations by vertical lines (b) $G_\beta$ with the dotted path for the move $(s,t)$;
$G'$ consists of all solid edges as well as the path from $s$ to $s'$.  No extra dotted edges are added to connect $t$, as it is already part of $G_{\gamma}$.
}
\label{fig-forest}
\end{centering}
\end{figure}

  Our goal is to form an alternative sequence of valid moves such that $P$ can be used instead of $P'$ to reach configuration $S_{\beta}$.  To make this possible, we identify the token on a vertex $b$ on $P$ in $S_{\gamma}$ that is closest to $s$ in $G_\beta$. We will show that we can find a sequence of moves $\gamma'$ that when executed on configuration $S_0$, results in a configuration $S_{\gamma}'$ that differs from $S_{\gamma}$ only in the placement of a single token, where $S_{\gamma}'$ has a token on $t$ and $S_{\gamma}$ has a token on $b$.  We can then execute the move $(s,b)$ to reach configuration $S_{\beta}$ from $S_{\gamma}'$.

  To obtain a contradiction and complete the proof, we need to show that $\gamma'$ uses no intermediate vertices of $P'$ and that the length of $\gamma'$ is no greater than the length of $\gamma$.  We form a tree $G'$ that we  initialize to be $G_{\gamma}$, and for each of $s$ and $t$ that is not in $G_{\gamma}$, we add to $G'$ a subpath of the path for the move $(s,t)$ with one endpoint in $G_{\gamma}$, the other endpoint the vertex $s$ or $t$, and all intermediate vertices (if any) not in $G_{\gamma}$.  The fact that $(s,t)$ closes the cycle $C$ guarantees that if $s$ and $t$ are not in $G_{\gamma}$, such paths exist, and that they contain no intermediate vertices of $P$ (Figure~\ref{fig-forest} (b)). 

To compare the lengths of $\gamma$ and $\gamma'$, we first observe that by construction, every token in $G_{\gamma}$ must be moved in order to transform $S_0$ into $S_\gamma$.  
Since $G'$ is a tree, we can invoke the polynomial-time algorithm of
C{\u{a}}linescu et al.~\cite{DBLP:journals/siamdm/CalinescuDP08}, allowing us to obtain a shortest sequence of valid moves from $S_0$ to $S_{\gamma}'$ in $G'$ to form $\gamma'$.  Since the modification of $G_{\gamma}$ to $G'$ did not increase the number of tokens, by Lemma~\ref{lemma-move-once}, the number of moves in $\gamma'$ cannot be greater than the number of tokens, and hence not greater than the number of moves in $\gamma$, as required to complete the proof.
\end{proof}  

\begin{lemma}\label{lemma-steiner}
  For a contracted instance $(G, S, T)$ of \textsc{UUTM} and a transforming sequence $\alpha$ of minimum length such that $G_{\alpha}$ is a forest, each tree in the forest is a minimum Steiner tree with terminals and leaves in $S \Delta T$ and internal vertices in $S \cup T$, and such that each internal vertex in $O$ is the source vertex of a move.  
\end{lemma}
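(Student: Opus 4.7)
The plan is to verify the claims about each tree $H$ in the forest $G_\alpha$ in turn, using the contracted assumption, Lemma~\ref{lemma-move-once}, and the minimality of $\alpha$; the main obstacle is the minimum Steiner tree property, for which I would argue by contradiction using the polynomial-time tree algorithm of C{\u{a}}linescu et al.~\cite{DBLP:journals/siamdm/CalinescuDP08}. The claim that every internal vertex of $H$ lies in $S \cup T$ is immediate from the contracted assumption $V(G) = S \cup T$.

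For internal obstacles being sources, I would argue by contradiction: suppose $v \in O$ is internal in $H$ but is the source of no move. By Lemma~\ref{lemma-move-once} the original token at $v$ never moves, so $v$ holds a token throughout the execution of $\alpha$. Hence $v$ can neither serve as an intermediate vertex of any move's path (paths must be token-free) nor as the target of any move (targets must be free). This leaves $v$ with no incident edge in $G_\alpha$, contradicting that $v$ is internal in $H$.

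For leaves of $H$ lying in $S \Delta T$, I would again argue by contradiction and assume some leaf $v$ is in $O$. Since $v$ has degree one in $G_\alpha$, every move whose path touches $v$ traverses the unique incident edge $\{v,u\}$. The token at $v$ is present in both $S$ and $T$; if it never moves, then as above $v$ is incident to no edge in $G_\alpha$, a contradiction. Otherwise $v$ is the source of exactly one move $(v, u_1)$ and, since a token must be present at $v$ in the target, the target of exactly one move $(u_2, v)$ that happens later. Mirroring the rerouting argument of Lemma~\ref{lemma-forest}, I would replace this pair of moves by the single move $(u_2, u_1)$: concatenating the two original paths and bypassing $v$ through $u$ yields the required free path at the appropriate time, and the resulting transforming sequence is strictly shorter than $\alpha$, contradicting minimality.

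Finally, for the minimum Steiner tree property, let $L = V(H) \cap (S \Delta T)$ be the terminal set. Since $G_\alpha$ is a forest, distinct trees are vertex-disjoint, so the moves of $\alpha$ whose paths lie in $H$ form an independent transforming sub-sequence from $S \cap V(H)$ to $T \cap V(H)$ within $H$. If there existed a strictly smaller Steiner tree $H^\star$ in $G$ spanning $L$ with internal vertices in $S \cup T$, I would invoke the tree algorithm of C{\u{a}}linescu et al.\ on $H^\star$ (with the induced source and target configurations) to produce a transforming sub-sequence with strictly fewer moves; substituting it for the $H$-portion of $\alpha$ would yield a transforming sequence shorter than $\alpha$, a contradiction. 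The hard part is the move-count comparison: by Lemma~\ref{lemma-move-once} the moves used on any tree are at most the number of tokens in it, and a smaller Steiner tree contains strictly fewer internal obstacles---each of which, by the preceding claim, costs a move on $H$---so the substitution genuinely reduces the total count.
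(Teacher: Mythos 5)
Your overall route is essentially the paper's, fleshed out: in a contracted instance every token on a vertex of $G_{\alpha}$ must move (and, by Lemma~\ref{lemma-move-once}, moves exactly once), so the number of moves charged to a tree equals the number of its token-carrying vertices, and Steiner-minimality then follows by exchanging the tree for a smaller one and re-solving with the tree algorithm of C{\u{a}}linescu et al.~\cite{DBLP:journals/siamdm/CalinescuDP08} --- these are precisely the two facts the paper's (very terse) proof invokes. Your arguments that internal vertices lie in $S \cup T$ and that internal obstacles are sources of moves are correct.

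The genuine gap is the leaf step. Replacing the clearing move $(v,u_1)$ and the later filling move $(u_2,v)$ by the single move $(u_2,u_1)$ along the concatenation of their two paths does not work as stated: the two paths are free at different times, and there need not be any single time at which the whole concatenated path is free --- intermediate vertices of the path of $(v,u_1)$ may be filled by moves executed between the two times, and intermediate vertices of the path of $(u_2,v)$ may still be occupied at the earlier time. The robust repair is global, in the style of Lemma~\ref{lemma-forest} (which is what the paper's appeal to ``minimality'' amounts to): since all $|S \cap V(H)|$ tokens of the tree $H$ move exactly once in $\alpha$, re-solve the sub-instance on $H - v$ (equivalently, keep the token on $v$ fixed) with the C{\u{a}}linescu et al.\ tree algorithm, using at most $|S \cap V(H)| - 1$ moves, and splice this in using the vertex-disjointness of the trees; this contradicts minimality. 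A related caution applies to your final exchange: a smaller Steiner tree $H^\star$ may use Steiner points in $S \Delta T$ outside $V(H)$, or vertices belonging to other trees of $G_{\alpha}$, so the ``induced source and target configurations'' on $H^\star$ need not be balanced and its moves need not be independent of the rest of $\alpha$; also its non-terminal vertices need not be obstacles, so the count should be argued via ``every non-terminal vertex of $H$ carries a token, while at most every non-terminal vertex of $H^\star$ does.'' The paper glosses over these points as well, but they are where the actual work in this lemma lies.
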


\begin{proof}
The fact that the leaves of the tree are in $S \Delta T$ follows from the minimality of the length of $\alpha$.  Because each token in $G_{\alpha}$ must be moved in a contracted instance $(G, S, T)$, a minimum Steiner tree will contain the minimum number of vertices and hence the minimum number of moves.
\end{proof}

\begin{theorem}\label{thm-fpt-ell}
\textsc{UUTM} is in FPT when parameterized by $\ell$.
\end{theorem}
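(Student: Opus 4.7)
My plan is to combine the structural content of Lemmas~\ref{lemma-forest} and \ref{lemma-steiner} with a Steiner-tree style dynamic program over subsets of $S \Delta T$. First I would invoke Lemma~\ref{lemma-cliquify} to reduce to a contracted instance, so every vertex lies in $S \cup T$ and the structural lemmas apply; these lemmas then hand us some shortest transforming sequence $\alpha^*$ whose induced forest $G_{\alpha^*}$ consists of Steiner trees with leaves in $S \Delta T$ and with every internal obstacle serving as a move source.

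The bookkeeping observation driving the algorithm is that $|\alpha^*| = |V(G_{\alpha^*}) \cap S|$: by Lemma~\ref{lemma-move-once} no token moves twice, and every vertex of $V(G_{\alpha^*}) \cap S$ is forced to be a source---those in $S \setminus T$ because their tokens must eventually leave, and the internal obstacles by Lemma~\ref{lemma-steiner}. In particular, $|S \Delta T| \le 2\ell$ whenever an $\ell$-step solution exists, so we may reject outright when this fails. With $|S \Delta T|$ bounded by $2\ell$, the space of possible ``tree memberships'' of $S \Delta T$-vertices is small enough to enumerate.

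Concretely, I would iterate over the $B_{|S \Delta T|} \le 2^{O(\ell \log \ell)}$ partitions of $S \Delta T$---reading each block as the $S \Delta T$-vertices assigned to one tree of the forest---and for each $X \subseteq S \Delta T$ precompute $g(X) = \min\{|V(T) \cap S| : T \text{ is a Steiner tree of } G \text{ containing } X\}$ via a node-weighted Dreyfus--Wagner dynamic program with weights $w(v) = \mathbf{1}[v \in S]$; this tabulates $g$ for all subsets in $3^{|S \Delta T|} \cdot n^{O(1)} = 9^\ell \cdot n^{O(1)}$ time. A subset-sum DP over $2^{S \Delta T}$ then selects the partition minimising $\sum_X g(X)$, and the algorithm outputs YES iff that minimum is at most $\ell$.

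The step I expect to demand the most care is the correctness direction ``DP-optimum $\ge$ true optimum'', since minimum-weight Steiner trees for different blocks, computed independently, need not be vertex-disjoint and so need not directly assemble into a forest. My plan is a merging argument: if two min-weight Steiner trees for blocks $X_i, X_j$ share a vertex, any spanning tree of their union is itself a Steiner tree for $X_i \cup X_j$ of total weight at most $g(X_i) + g(X_j)$, so $g(X_i \cup X_j) \le g(X_i) + g(X_j)$ and coarsening the partition never hurts the DP. Iterating, the DP-optimum is realised by a partition whose chosen Steiner trees are already vertex-disjoint and hence form a legitimate forest, from which the tree-sorting algorithm of C{\u{a}}linescu et al.~\cite{DBLP:journals/siamdm/CalinescuDP08} produces a transforming sequence of total length $\sum |V(T) \cap S|$. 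The overall running time is $f(\ell) \cdot n^{O(1)}$, yielding FPT.
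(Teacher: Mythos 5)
Your overall strategy is the paper's own (contract via Lemma~\ref{lemma-cliquify}, partition $S \Delta T$ into blocks realised by Steiner trees, charge $|V(\mathcal{T}) \cap S|$ moves per tree, justify via Lemmas~\ref{lemma-forest} and~\ref{lemma-steiner}), and your explicit merging argument for making the per-block trees vertex-disjoint is a point where you are actually more careful than the paper. However, there is a genuine gap: your DP ranges over \emph{all} partitions of $S \Delta T$, with no requirement that each block contain equally many vertices of $S \setminus T$ and of $T \setminus S$, and no other per-block feasibility check. This breaks the soundness direction (DP value $\le \ell$ implies yes-instance). Concretely, take the contracted instance where $G$ is the path $s, o_1, \ldots, o_m, t$ with $S = \{s, o_1, \ldots, o_m\}$, $T = \{o_1, \ldots, o_m, t\}$ and $\ell = 1$. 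By Lemma~\ref{lemma-steiner} every solution needs $m+1$ moves (every internal obstacle is a move source), so this is a no-instance for $m \ge 1$. Yet the partition $\{\{s\}, \{t\}\}$ has $g(\{s\}) = 1$ and $g(\{t\}) = 0$, realised by vertex-disjoint one-vertex trees that your coarsening step never merges, so your algorithm reports a value of $1 \le \ell$ and answers yes. The underlying problem is that a tree whose intersection with $S$ and with $T$ have different sizes cannot be sorted internally at all, so the final appeal to the tree algorithm of C{\u{a}}linescu et al.\ is unavailable for such blocks; your claim that the disjoint trees yield a transforming sequence of length $\sum |V(\mathcal{T}) \cap S|$ silently assumes each tree is individually transformable.

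The fix is exactly the constraint the paper imposes: only consider partitions in which every block has equally many $S \setminus T$ and $T \setminus S$ vertices. With that restriction your argument does go through: balance is preserved under taking unions of blocks, so your coarsening step keeps blocks balanced; after coarsening the trees are vertex-disjoint, hence each tree meets $S \Delta T$ exactly in its own block, is balanced (in a contracted instance $|V(\mathcal{T}) \cap S| = |V(\mathcal{T}) \cap T|$ iff the block is balanced), and can therefore be sorted within itself using at most $|V(\mathcal{T}) \cap S|$ moves by Lemma~\ref{lemma-move-once}. The completeness direction and the $|S \Delta T| \le 2\ell$ bound in your write-up are fine as stated.
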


\begin{proof}
  We first form an equivalent contracted instance (Lemma~\ref{lemma-cliquify}), and then attempt all possible partitions of vertices in $S \Delta T$ into 
  at most $\ell$ Steiner trees, starting first with a single tree, then two, and so on.

  When considering use of $d$ trees, we first consider all possible ways of partitioning the vertices of $S \Delta T$ into $d$ groups, where each group 
  has equal number of vertices in $S \setminus T$ and $T \setminus S$, and then run the FPT Steiner tree algorithm~\cite{DreyfusWagner} on 
  each such set of vertices.  Because in each Steiner tree each token must move, the number of moves associated with each tree $\cal T$ will 
  be $|(S \setminus T) \cap V({\cal T})| + |O \cap V({\cal T})|$.  
  If the total number of moves is at most $\ell$ we have a yes-instance.  
  If we have not succeeded using $d$ trees to verify that the instance is a yes-instance, we try with the next value of $d$. 
  If none of the values succeed, we conclude that $(G, S, T, \ell)$ is a no-instance.

  The correctness of the algorithm follows from Lemmas~\ref{lemma-forest} and \ref{lemma-steiner}.
\end{proof}

\subsection{\textsc{Unlabelled Directed Token Moving}}

Like in the case of undirected graphs, our algorithm for the directed case relies on the characterization of 
the graph $D_{\alpha}$ of a transforming sequence $\alpha$ of minimum length of a contracted instance. 
We show, in Lemma~\ref{lemma-directed-forest}, that for any yes-instance there exists an $\alpha$ such that $D_{\alpha}$ is a 
directed forest.  As a replacement for the Steiner tree approach, the fact that we can bound the size of $D_{\alpha}$ suggests the use of the machinery of color coding, introduced by Alon et al.~\cite{DBLP:reference/algo/AlonYZ08} (similar to a result obtained by Plehn and Voigt~\cite{DBLP:conf/wg/PlehnV90}), to determine whether $D$ contains a labelled subgraph of the correct form to be $D_{\alpha}$ for a  contracted yes-instance.  Unfortunately, Theorem~\ref{thm-alon} cannot be used directly; in Theorem~\ref{thm-dir-ell} we adapt the technique for our purposes.

\begin{theorem}[\cite{DBLP:reference/algo/AlonYZ08}]\label{thm-alon}
Let $H$ be a directed forest on $q$ vertices. Let $D = (V,E)$ be a
directed $n$-vertex graph. A subgraph of $D$ isomorphic to $H$, if one exists, can be found 
in $2^{O(q)} \cdot n^{2} \cdot \log n$ worst-case time. Moreover, if a real-weight function $\beta: E \rightarrow \mathcal{R}$ is defined on the 
edges of $D$, then the algorithm can be adapted to find the copy of $H$ in $D$ with the maximal total weight. 
\end{theorem}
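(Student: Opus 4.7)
The plan is to apply the color coding framework of Alon, Yuster, and Zwick. First, color each vertex of $D$ with one of $q$ colors independently and uniformly at random; any fixed subgraph of $D$ isomorphic to $H$ then receives $q$ distinct colors on its vertices (is \emph{colorful}) with probability at least $q!/q^{q} \geq e^{-q}$. It therefore suffices to (i) decide in time $2^{O(q)} \cdot n^{O(1)}$ per coloring whether a colorful copy of $H$ exists, and (ii) derandomize by iterating over an $(n,q)$-perfect hash family of size $2^{O(q)} \cdot \log n$.

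For step (i), root each tree of $H$ at an arbitrary vertex. For every $v \in V(D)$, every $u \in V(H)$, and every color subset $C \subseteq [q]$, define a Boolean $T(v, u, C)$ that is true iff the subtree of $H$ below $u$ admits a colorful embedding in $D$ mapping $u$ to $v$ and using exactly the colors in $C$. Fill the table bottom-up on $H$: for a leaf, $T(v, u, \{c(v)\})$ holds; for an internal vertex $u$ with children $u_{1}, \ldots, u_{d}$, process the children one at a time, combining the color sets already achievable with those supplied by each $u_{i}$ across the out-neighbours of $v$ in $D$ via a disjoint-union convolution. A final convolution of the same type across the different trees of $H$ decides whether the full forest embeds colorfully.

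For step (ii), replace the random colorings with an explicit $(n,q)$-perfect hash family, constructed as in~\cite{DBLP:reference/algo/AlonYZ08}. Since for any fixed copy of $H$ at least one function in the family colors it colorfully, running the DP once per hash function and taking the disjunction is correct, with total running time $2^{O(q)} \cdot n^{2} \cdot \log n$. For the weighted extension, I would replace the Boolean table $T(v,u,C)$ by the maximum total edge weight of a witnessing embedding; the same recurrence, now using max-plus arithmetic on the child-combining step, carries through without changing the asymptotic cost.

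The hard part will be matching the stated $n^{2} \log n$ bound rather than a looser polynomial in $n$: the per-coloring DP has to be implemented so that the disjoint-union convolutions at each internal vertex of $H$ touch each arc of $D$ only $O(1)$ times per state, and this must be paired with a perfect hash family whose size and evaluation cost are both appropriately controlled. The correctness of the framework itself (colorfulness probability plus exact DP on a bounded-size rooted forest) is essentially automatic; the careful accounting of the running time is what takes work.
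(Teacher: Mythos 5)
This theorem is quoted verbatim from the cited reference of Alon, Yuster, and Zwick and the paper supplies no proof of its own, so there is nothing internal to compare against; your sketch is precisely the standard color-coding argument of that reference (colorful-copy probability $q!/q^q \ge e^{-q}$, subset dynamic programming over a rooted forest, derandomization by an $(n,q)$-perfect hash family of size $2^{O(q)}\log n$, and a max-plus variant of the table for the weighted version), and it is essentially correct. The only detail to tighten is that the trees of a directed forest may contain arcs oriented both toward and away from the chosen roots, so in the child-combining step you must scan the out-neighbours or the in-neighbours of $v$ in $D$ according to the orientation of the arc between $u$ and the child in $H$, rather than out-neighbours only.
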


After removing extraneous vertices (Lemma~\ref{lem-clean-stuff}), 
we demonstrate that $D_\alpha$ forms a forest (Lemma~\ref{lemma-directed-forest}); to show that we can 
ignore cycles, we focus on a minimal graph containing a cycle as a counterexample.
More formally, we call a directed graph $D$ a \emph{circle graph} if $D$ is connected and the 
vertices in $V(D)$ can be partitioned into \emph{cycle vertices}, forming a simple cycle $C$ in the 
underlying undirected graph, and \emph{forest vertices}, forming a forest of trees attached to the cycle vertices, where in each tree either all arcs are directed towards the 
root or all arcs are directed away from the root.
An instance $(D, S, T, \ell)$ of \textsc{UDTM} is said to be a \emph{contracted circle instance} whenever $D$ is a circle graph and $S \cup T = V(D)$. 

\begin{lemma}\label{lem-clean-stuff}
  For $(D, S, T, \ell)$ a contracted instance of \textsc{UDTM} and $v \in S \cap T$, $(D, S, T, \ell)$ is a yes-instance if and only if 
  $(D - v, S \setminus \{v\}, T \setminus \{v\}, \ell)$ is a yes-instance when any of the following conditions hold:
  \begin{enumerate}
  \item{} There is no directed path from any vertex in $S \setminus T$ to $v$.
  \item{} There is no directed path from $v$ to any vertex in $T \setminus S$.
  \item{} Every directed path from any vertex in $S \setminus T$ to $v$ contains at least $\ell + 1$ obstacles and every directed path from $v$ to any
vertex in $T \setminus S$ contains at least $\ell + 1$ obstacles. 
  \end{enumerate}
\end{lemma}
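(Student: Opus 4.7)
For the easy ($\Leftarrow$) direction, a transforming sequence $\beta$ of length at most $\ell$ in $(D-v, S\setminus\{v\}, T\setminus\{v\})$ transplants verbatim into $(D,S,T,\ell)$: the token on $v$ is never touched, and the free paths used by moves of $\beta$ lie in $D-v$, so $v$'s stationary token cannot block them; the resulting configuration in $D$ agrees with $T$. This direction needs no hypothesis.

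For the harder ($\Rightarrow$) direction, the plan is to show that under each of conditions~(1), (2), (3), there is a shortest transforming sequence $\alpha$ in which the token on $v$ does not move. Given such an $\alpha$, no move's free path can pass through $v$ (since $v$'s stationary token blocks it), so $\alpha$ is already a valid transforming sequence from $S\setminus\{v\}$ to $T\setminus\{v\}$ in $D-v$. I would start with a shortest $\alpha$ in which no token moves twice (Lemma~\ref{lemma-move-once}), and define the bijection $\pi\colon S\to T$ sending each vertex to its token's final position. Suppose for contradiction that $\pi(v)\ne v$; then some $u\ne v$ satisfies $\pi(u)=v$, and the move $(u,v)$ exhibits a directed path from $u$ to $v$ in $D$.

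Under condition~(1), $u\notin S\setminus T$, so $u\in O$; iterating along the $\pi^{-1}$-orbit of $v$ and concatenating move paths at each step, every successive vertex must also lie in $O$ (else the concatenated path would violate condition~(1)), so the entire orbit lies in $O$. Condition~(2) yields the symmetric statement for the $\pi$-orbit of $v$. By finiteness of $O$, the orbit closes into a cycle $c_0\to c_1\to\cdots\to c_{m-1}\to c_0$ in $O$ with $c_0=v$. The contradiction then comes from temporal ordering: for each cycle move $M_i=(c_i,c_{i+1\bmod m})$ to be valid, the target must be free, and by the no-repeat property the only move in $\alpha$ that can empty $c_{i+1\bmod m}$ is $M_{i+1\bmod m}$ itself, forcing $M_{i+1\bmod m}\prec M_i$ for every $i$; this yields an impossible cyclic precedence among $M_0,\ldots,M_{m-1}$.

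The main obstacle is condition~(3), because the $\pi^{-1}$-chain from $v$ might reach some $u_k\in S\setminus T$ and escape $O$, preventing the direct cyclic argument. Here the plan is to concatenate the corresponding move paths $(u_k,u_{k-1}),(u_{k-1},u_{k-2}),\ldots,(u_2,v)$ into a directed walk from $u_k$ to $v$ in $D$ and extract a simple directed path; by condition~(3) this simple path contains at least $\ell+1$ obstacles. Each such obstacle is either a chain endpoint or an intermediate vertex on some chain move's free path, so in either case its original token must have been moved at some point in $\alpha$; by no-repeat, distinct obstacles force distinct moves in $\alpha$, giving $|\alpha|\ge\ell+1$ and contradicting $|\alpha|\le\ell$. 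Thus the chain cannot escape $O$ under condition~(3) either, and the temporal-ordering argument then closes the proof. The same counting idea, applied to paths from $v$ into $T\setminus S$, handles the symmetric case and justifies the second clause of condition~(3).
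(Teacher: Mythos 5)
Your proof is correct and follows essentially the same route as the paper's (much terser) argument: by the move-once lemma, if the token on $v$ moved, the chain of clearing/filling moves through $v$ would trace directed paths from $S \setminus T$ to $v$ and from $v$ to $T \setminus S$, which conditions (1)--(2) forbid outright and condition (3) forbids because the at least $\ell+1$ obstacles on such a path would each require their own move. Your explicit orbit-of-$\pi$ analysis with the cyclic-precedence contradiction, and the obstacle-counting step, simply make rigorous what the paper's proof leaves implicit.
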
  

\begin{proof}
By Lemma~\ref{lemma-move-once}, in any shortest transforming sequence, either $v$ is the 
source of a clearing move and the target of a filling move, or it blocks the move of another token. 
Because each pair of clearing and filling moves forms a directed path from a vertex in $S \setminus T$ to a 
vertex in $T \setminus S$ (as does any move with source vertex in $S \setminus T$ and target 
vertex in $T \setminus S$), under the first two conditions $v$ can neither be part of such a move nor block such a move.

For the third condition, we observe that $\ell$ moves are not sufficient either to move enough other obstacles for there 
to be a clear path between $v$ and a vertex in $T \setminus S$, nor for $v$ to be the only obstacle blocking a path.
\end{proof}

\begin{lemma}\label{lem-circle-instance}
If there exist instances $(D, S, T)$ of \textsc{UDTM} such that for every transforming sequence $\alpha$ of minimum length, 
$D_{\alpha}$ is not a forest, then at least one of those instances must be a contracted circle instance. 
\end{lemma}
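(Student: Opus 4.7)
The plan is to take a counterexample $(D, S, T)$ that minimizes $|V(D)|$ with ties broken by $|E(D)|$, and to show that such a minimum counterexample must already be a contracted circle instance.

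First I would establish that $S \cup T = V(D)$. Suppose to the contrary that some vertex $v \notin S \cup T$ exists. Lemma~\ref{lemma-cliquify} produces an equivalent contracted instance $(D', S, T)$ with strictly fewer vertices. To derive a contradiction with minimality I would show that $(D', S, T)$ is also a counterexample. The strategy: given any hypothetical minimum sequence $\alpha'$ in $D'$ with $D'_{\alpha'}$ a forest, lift $\alpha'$ to $D$ by replacing each ``shortcut'' arc (added in the contraction of $v$) by the corresponding path through $v$, and argue that the lifts can be chosen so that $D_\alpha$ remains a forest, which would produce a minimum forest-inducing sequence in $D$ and contradict the counterexample status of $(D,S,T)$.

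Next I would apply Lemma~\ref{lem-clean-stuff} to eliminate obstacles satisfying any of its three conditions. Each such removal yields an equivalent instance with strictly fewer vertices, and the counterexample property transfers cleanly: under any of the three conditions the removed vertex cannot appear in any shortest sequence, so a forest-inducing minimum sequence in the reduced instance lifts directly (and unchanged) to a forest-inducing minimum sequence in the original. Minimality therefore forces no such obstacle to exist.

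With these reductions in hand I would analyze the structure of the resulting contracted, ``clean'' minimum counterexample. Pick any minimum $\alpha$ with $D_\alpha$ containing a cycle. Edge-minimality forces $E(D) = \bigcup_{\alpha^\ast \text{ min}} E(D_{\alpha^\ast})$, since any edge unused by every minimum sequence can be deleted while preserving both the minimum length and the counterexample property. From this I would argue that the underlying undirected graph of $D$ has exactly one cycle: if there were two, a rerouting argument would allow removal of an edge from one without destroying the counterexample property, contradicting edge-minimality. The remaining vertices must then form directed trees attached to cycle vertices, and the uniform orientation within each tree (all arcs toward the root or all away) follows by tracking the direction of token flow through each branch under Lemma~\ref{lemma-move-once}. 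This matches the definition of a circle instance.

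The hard part is the contraction step: when several shortcut arcs through the same $v$ appear in $D'_{\alpha'}$, naively lifting to $D$ can introduce a cycle through $v$ even though $D'_{\alpha'}$ itself was a forest. Dealing with this case will likely require either a careful rerouting of paths through $v$ that exploits the freedom in choosing lifts, or an inductive argument that uses the minimality assumption to produce an even smaller counterexample embedded in the lift.
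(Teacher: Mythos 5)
Your overall strategy (take a minimal counterexample and show it is already a contracted circle instance) is the same as the paper's, but the two places you leave open are exactly where the work lies, so the proposal has genuine gaps. First, the contraction step: you yourself flag that lifting a forest-inducing minimum sequence from the contracted instance $(D',S,T)$ back to $D$ can create cycles through a contracted vertex, and you only say this ``will likely require'' a rerouting or an inductive argument. That unresolved transfer is the crux of your plan, and nothing in the proposal supplies it. The paper sidesteps it by working in the opposite order: it keeps the original minimal counterexample (minimizing $|V(D)|$, then $|E(D)|$, and, crucially, also the length of a shortest transforming sequence $\alpha$ -- a third criterion you drop), first pins down the structure of $D_\alpha$ -- the cycle is closed by the \emph{last} move (otherwise $(D_\alpha, S, S_m)$ is a smaller counterexample), $D_\alpha$ is connected (moves in other components can be deferred past the cycle-closing move), and it contains a single cycle -- and only afterwards appeals to Lemma~\ref{lemma-cliquify}, observing that the contraction preserves this connected single-cycle structure.

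Second, your route to ``exactly one cycle'' is not sound as stated. After your pruning, every remaining edge is used by some minimum sequence, so you cannot in general ``remove an edge from one of the cycles without destroying the counterexample property'': deleting such an edge may lengthen the shortest transforming sequence, after which the reduced instance is neither equivalent nor necessarily a counterexample. The paper's argument for this step is a different surgery: if the last move closes more than one cycle, take three consecutive intersection points $v_0, v_1, v_2$ of that move's path with the previously induced graph, delete the two path segments between them, and add a single arc from $v_0$ to $v_2$; since the replaced path has at least three arcs this strictly decreases $|E(D)|$ while preserving the instance, contradicting edge-minimality. Finally, the uniform orientation of the attached trees is obtained in the paper from the observation that, by minimality of $\alpha$, every move passes through the cycle, so each outside vertex lies on a directed path into or out of the cycle; your appeal to ``tracking the direction of token flow'' under Lemma~\ref{lemma-move-once} does not, as written, rule out a branch carrying arcs in both directions.
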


\begin{proof}
Of all those instances, let us consider an instance $(D, S, T)$ where $|V(D)|$ is minimized, $|E(D)|$ is minimized, 
and the length of a shortest transforming sequence $\alpha$ is minimized (in that order). We call such an instance a \emph{minimal instance}.

We first show that in a minimal instance,  the last move in the shortest transforming sequence is the one that induces one or more cycles in $D_{\alpha}$.
Suppose to the contrary that the first cycle is formed by a move $m$ that is not the last in the sequence. 
We then define $S_m$ to be the configuration after $m$ is executed. 
The instance $(D_{\alpha}, S, S_m)$ satisfies the statement of the lemma and the shortest transforming sequence consists 
of at least one fewer moves, contradicting the minimality of the instance.

%that the subsequence of moves $\beta$ starting at the first move of a token and ending at the first move that completes one or more cycles.
%We can represent $\alpha$ as $m_0, m_1, m_2, \ldots, m_{\ell-1}, m_\ell$, where move $m_i = (s_i,t_i)$. 
%Let $S'$ denote the configuration before $m_0$ is executed and $T'$ denote the configuration after $m_\ell$ is executed. 
%The instance $(D_\alpha, S', T')$ 

We now show that $D_{\alpha}$ cannot contain more than one connected component nor more than one cycle. 
If instead $D_{\alpha}$ contains more than one connected component, since the moves in each component are independent of each other, the 
moves forming the components not containing the cycle could have all taken place after the last move in $\alpha$, contradicting 
the minimality of the length of $\alpha$. 
If the last move $m_\ell = (s_\ell,t_\ell)$ closes more than one cycle then we let $D'$ be the 
graph induced on all moves prior to $m_\ell$ and we let $P_m$ denote the path induced by $m_\ell$. 
We let $v_0$, $v_1$, and $v_2$ be the first, second, and third intersections of $P_m$ with $D'$ such that $v_0v_1 \not\in E(D')$ and
$v_1v_2 \not\in E(D')$, respectively. Such vertices must exist since $m_\ell$ is assumed to close more than one cycle. 
We note that we could have longer paths between $v_0$ and $v_1$ or between $v_1$ and $v_2$ but the same argument still applies. 
We create a new instance by deleting all edges between $v_0$ and $v_1$ and all edges between $v_1$ and $v_2$ and we instead add an edge from $v_0$ to $v_2$; 
since the path from $v_0$ to $v_2$ has at least three edges, we have found a 
smaller instance that uses fewer edges, contradicting the minimality of $|E(D)|$. 
Hence, in a minimal instance, $D_{\alpha}$ consists of a single component containing a single cycle.

We can assume that each vertex is the source or target of a move, as the transformation in the proof of Lemma~\ref{lemma-cliquify} results 
in a connected graph with a single cycle. 
Moreover, due to the minimality of $\alpha$, $\alpha$ cannot contain any move that does not pass through $C$. 
Consequently, any vertex outside of $C$ must either be connected by a directed path to a vertex in $C$ or connected 
by a directed path from a vertex in $C$, forming a contracted circle instance.
%
%If $(u,v)$ is the arc in $D$ then notice that the only moves relevant to the cycle are those that are exiting via $u$. 
%In other words, there must be some sources in $T_u$ which will fill some targets not in $T_u$. 
%For each such source, we can create a new pendant in-neighbor of $v$. 
%Symmetrically, when $(v,u)$ is the arc in $D$ the only moves relevant to the cycle are those that are entering via $u$. 
%In other words, there must be some targets in $T_u$ which will be filled by some sources not in $T_u$. 
%For each such target, we can create a new pendant out-neighbor of $v$. 
%We repeat this procedure for every tree that is attached to the cycle, which completes the proof.
\end{proof}

We use Lemma~\ref{lem-circle-instance} in the proof of
Lemma~\ref{lemma-directed-forest}, where we use the structure of a
circle graph to form a transforming sequence.  We number the cycle
vertices as $v_1$, $v_2$, $\ldots$, $v_q$ in clockwise order, observe
that for each $i$, either $(v_i,v_{i+1})$ or $(v_{i+1},v_i)$ is an arc
in $D$. We say that a subsequence of vertices $v_a, v_{a+1}, \ldots,
v_b$ is a \emph{forward cycle segment} if $(v_i,v_{i+1})$ is an arc for each
$a \le i < b$ and a \emph{backward cycle segment} if $(v_{i+1},v_i)$ is an
arc for each $a \le i < b$.  Since the paths of moves are directed
paths, no move can use edges in more than one cycle segment.

Of particular interest are the \emph{junction vertices} shared by two
consecutive cycle segments, where each is either a source or a sink in
both cycle segments.  We refer to the \emph{in-pool} of a source
junction vertex $v$ as the set containing $v$  and any tree vertex that can reach $v$ by a directed path, and the \emph{out-pool} of a sink junction vertex $v$ as the set containing $v$ and any tree vertex that can be reached by a directed path from $v$.  Since each cycle segment starts and ends with a junction vertex, we can refer without ambiguity to the in-pool and out-pool of a cycle segment.  By partitioning the moves by cycle segments, we form a sequence of directed trees, thereby allowing us to apply Lemma~\ref{lem-dir-tree}.

\begin{lemma}\label{lem-dir-tree}
Given a directed tree $D$, two configurations $S$ and $T$ of $D$ such that every leaf of $D$ is in $S \Delta T$, 
and a one-to-one mapping $\mu$ from $S$ to $T$ such that there is directed path from each $s \in S$ to $\mu(s) \in T$ (and $s \neq \mu(s)$ for all $s$), then 
there exists a transformation from $S$ to $T$ in $D$.
\end{lemma}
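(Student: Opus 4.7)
The plan is to reduce to the case where $D$ is an out-arborescence (all arcs directed away from the root); the in-arborescence case is symmetric, handled by reversing all arcs and swapping the roles of $S$ and $T$ (inverting $\mu$). In an out-arborescence, the existence of a directed path from $s$ to $\mu(s)$ means $\mu(s)$ is a proper descendant of $s$, and the unique $s$-to-$\mu(s)$ directed path is the descendant chain in the tree.

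First I will argue that we may replace $\mu$ by a \emph{non-crossing} mapping: one such that there is no pair $s_1, s_2 \in S$ with $s_1$ a strict ancestor of $s_2$ while $\mu(s_2)$ is a strict ancestor of $\mu(s_1)$. Given any such crossing $(s_1, s_2)$, I swap the two images by setting $\mu'(s_1) := \mu(s_2)$, $\mu'(s_2) := \mu(s_1)$, and leaving the other assignments fixed. The ancestor--descendant relations forced by the crossing guarantee that both new $s$-to-$\mu'(s)$ directed paths still exist (for instance, $\mu(s_2)$ is a descendant of $s_2$ and hence of $s_1$), that the result remains a bijection $S \to T$, and that $\mu'(s_i) \ne s_i$ (the crossing places $\mu(s_1)$ and $\mu(s_2)$ strictly below both $s_1$ and $s_2$). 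For termination I will use the integer-valued potential $P(\mu) = \sum_{s \in S} d(s) \cdot d(\mu(s))$, where $d(\cdot)$ denotes depth; the swap changes $P$ by $(d(s_1) - d(s_2))(d(\mu(s_1)) - d(\mu(s_2)))$, which is strictly negative by the crossing conditions $d(s_1) < d(s_2)$ and $d(\mu(s_1)) > d(\mu(s_2))$, and $P$ is bounded below by $0$.

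With a non-crossing $\mu$ in hand, I will produce a transforming sequence by processing the tokens in strictly decreasing order of depth (breaking ties arbitrarily), executing the move $(s, \mu(s))$ for the current deepest unprocessed token $s$. To verify validity: the intermediate vertices and endpoint $\mu(s)$ of the path are strict descendants of $s$, so no still-unprocessed token in $S$ (all of depth at most $d(s)$) can obstruct them. Any already-placed token sits at some $\mu(s_j)$ for a previously-processed $s_j$; I split into two cases. If $s_j$ is not in the subtree of $s$, then $\mu(s_j)$ lies entirely outside that subtree (using that tree subtrees are either nested or disjoint, and that $s_j$ being deeper than $s$ rules out $s_j$ being an ancestor of $s$), so $\mu(s_j)$ misses the path. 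If $s_j$ is in the subtree of $s$, then $s$ is a strict ancestor of $s_j$ and the non-crossing property forces $\mu(s_j)$ to not be a strict ancestor of $\mu(s)$, ruling out $\mu(s_j)$ lying on the ancestor--descendant chain from $s$ to $\mu(s)$.

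The main obstacle is the first step: verifying that each swap preserves bijectivity, the directed-path condition, and the constraint $\mu(s) \ne s$, and exhibiting a potential that strictly decreases to guarantee termination of the exchange argument. Once a non-crossing $\mu$ is in hand, the deepest-first execution is a mechanical consequence of the tree structure combined with the non-crossing property; the lemma's hypothesis that every leaf of $D$ lies in $S \Delta T$ rules out degenerate cases (for example, a non-root leaf in $S$ would force $\mu(s) = s$, violating the setup).
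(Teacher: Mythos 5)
Your argument only covers the case where $D$ is an in- or out-arborescence, and that is a genuine gap rather than a harmless normalization. The opening claim ``reduce to the case where $D$ is an out-arborescence; the in-arborescence case is symmetric'' silently assumes that every directed tree is uniformly oriented away from or towards a single root, but here ``directed tree'' means an arbitrary orientation of a tree (e.g.\ the zigzag $a \to b \leftarrow c \to d$ is neither an in- nor an out-arborescence). This is not a corner case one can wave away: in the paper the lemma is invoked, in the proof of Lemma~\ref{lemma-directed-forest}, on the tree obtained from a circle graph after discarding $P'$, and the cycle of a circle graph consists of alternating forward and backward segments meeting at source/sink junction vertices, so the tree genuinely has mixed arc directions. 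Your whole machinery --- depth, descendants, subtrees being nested or disjoint along directed paths, the non-crossing exchange, and the deepest-first schedule --- is built on the arborescence picture (``$\mu(s)$ is a proper descendant of $s$'') and has no direct analogue once arcs point both ways; tokens whose paths live in differently oriented parts of the tree can block one another at the junctions, and no ordering by depth is even defined. The paper instead argues by induction on $|V(D)|$, peeling off a leaf of $T \setminus S$ (serve it first by moving the nearest token on the path into it) or, if none exists, a leaf of $S \setminus T$ (serve it last), re-wiring $\mu$ locally in each case; that argument is orientation-agnostic, which is exactly what the application needs.

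Within the arborescence special case your argument is essentially sound, apart from a sign slip: swapping a crossing changes the potential $P(\mu)=\sum_s d(s)\,d(\mu(s))$ by $(d(s_2)-d(s_1))(d(\mu(s_1))-d(\mu(s_2)))>0$, so $P$ strictly \emph{increases}; since $P$ is bounded (finitely many bijections), termination still follows, so this is cosmetic. To repair the proof you would need an additional idea handling mixed orientations --- for instance, decomposing the tree into maximal uniformly oriented pieces and scheduling across the junction vertices, which is in effect what the paper's leaf-based induction accomplishes.
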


\begin{proof}
We proceed by induction on the size of the tree $D$, observing that if 
$|V(D)| = 2$ then the statement clearly holds. 

If there exists a leaf $v \in T \setminus S$, we let $u = \mu^{-1}(v)$ and we let $P_{uv}$ be the unique directed path from $u$ to $v$. 
We let $w$ be the closest vertex to $v$ on $P_{uv}$ such that $w \in S$. Note that $w$ could be equal to $u$. Moreover, all 
vertices in $P_{wv}$ are in $T \setminus S$. Then, we form a smaller instance $D'$ and configurations $S'$ and $T'$ as follows. 
If $w \neq u$ then we delete $v$ from $D$ (and $T$), we remove $w$ from $S$, and we set $\mu(u) = \mu(w)$; since we have a directed path 
from $u$ to $w$ and a directed path from $w$ to $\mu(w)$, we have a directed path from $u$ to $\mu(w)$.  
If $w = u$ then we simply delete $v$ from $S$ and if $w$ becomes a leaf we also delete $w$. When $w$ does not become a leaf it becomes a free vertex (not in $S \cup T$). 
To form a transforming sequence from $S$ to $T$ in $D$, we start with the move $(w,v)$ and then add the transforming sequence for the smaller instance.

Now we consider the situation in which all the leaves of $D$ are in $S \setminus T$. For a leaf $u \in S \setminus T$, we let 
$v = \mu(u)$ and, again, we let $P_{uv}$ be the unique directed path from $u$ to $v$. 
We find the closest vertex $w$ to $v$ on $P_{uv}$ such that $w \in T$. All the vertices in $P_{uw}$ are in $S \setminus T$. 
We let $\mu(w) = x$ (assuming $w \in S \cap T$) and $\mu^{-1}(w) = y$. We can form a smaller instance by removing $u$, and form a 
transforming sequence from the transforming sequence of the smaller instance followed by the move $(u,w)$. 
To do so, we set $\mu(y) = x$ and $\mu(w) = v$; by construction, there is a directed path from $w$ to $v$, from $w$ to $x$, and from $y$ to $w$. 
If $w \in T \setminus S$ then we let $\mu^{-1}(w) = y$ and form the smaller instance in a similar manner but setting $\mu(y) = v$. 
\end{proof}

\begin{figure}[h]
\begin{centering}
\includegraphics[scale=.3]{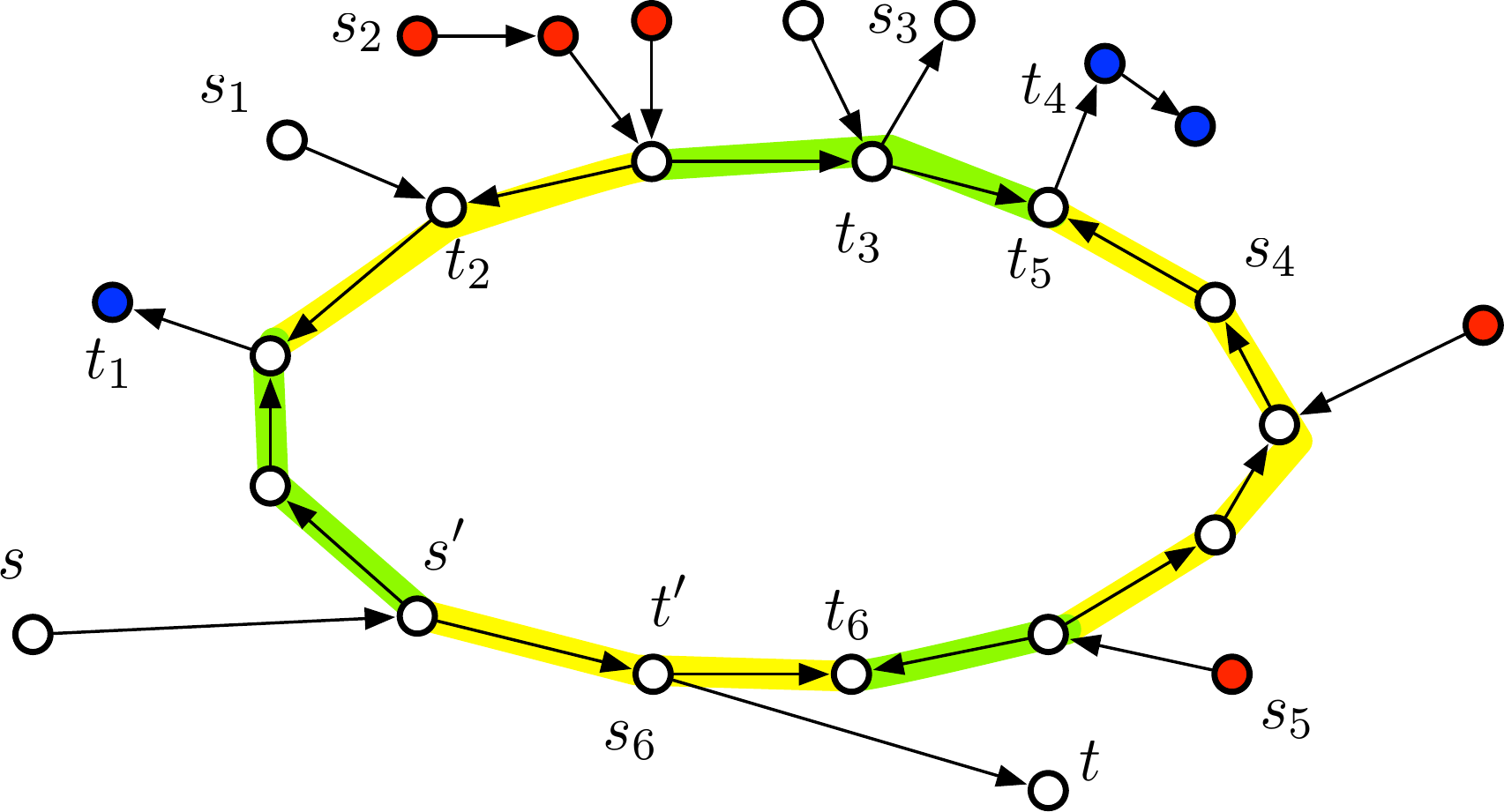}
\caption{Example circle graph with cycle segments highlighted, in-pools marked in red, and out-pools marked in blue, and sources and sinks of linking moves shown}
\label{fig-circlegraph}
\end{centering}
\end{figure}

\begin{lemma}\label{lemma-directed-forest}
For any contracted yes-instance $(D, S, T, \ell)$ of \textsc{UDTM}, there exists a 
transforming sequence $\alpha$ of minimum length such that $D_{\alpha}$ is a directed forest. 
\end{lemma}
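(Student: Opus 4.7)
The plan is to argue by contradiction. Suppose the lemma fails; by Lemma~\ref{lem-circle-instance}, a minimal counterexample exists as a contracted circle instance $(D, S, T, \ell)$, and by the minimization in that lemma's proof every transforming sequence $\alpha$ of minimum length induces a graph $D_\alpha$ consisting of a single connected component with a single cycle, namely the underlying cycle of $D$. My goal is to construct an alternative transforming sequence $\alpha'$ of the same minimum length whose induced graph omits at least one arc of the cycle, hence is a directed forest, giving the desired contradiction.

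To do so, I would exploit the circle graph structure. The cycle splits into maximal cycle segments between consecutive junction vertices; source junctions carry in-pools and sink junctions carry out-pools. Since a directed path cannot reverse direction at a non-junction cycle vertex, each move of $\alpha$ lies within exactly one segment's system --- the segment together with its in-pool, out-pool, and the flowing-in/out trees attached to its intermediate cycle vertices. This partitions the moves of $\alpha$ by segment; for each segment $P$, the assigned moves induce a one-to-one matching $\mu_P$ from certain sources in $S$ to certain targets in $T$, with a directed path in the directed tree $T_P$ (formed by $P$ together with its pools and attached trees) witnessing each pair. Invoking Lemma~\ref{lem-dir-tree} on each $T_P$ yields a subsequence $\alpha_P$ realizing $\mu_P$, and concatenating the $\alpha_P$s in an order consistent with token-availability at the shared junction pools produces $\alpha'$; its total length equals $|\alpha|$ since by Lemma~\ref{lemma-move-once} each moved token is counted exactly once overall.

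The crucial final step is to re-choose the matchings $\mu_P$ so that at least one cycle arc is unused. The key flexibility is that an in-pool at a source junction $v$ is shared between the two segments emerging from $v$, so each token in that pool may be routed along either segment (and symmetrically for out-pools at sink junctions). By concentrating all routing through one side of a well-chosen junction, I would force the first arc of the other adjacent segment to be unused, breaking the cycle in $D_{\alpha'}$. The main obstacle is making this re-routing watertight: I must verify that at some junction one can feasibly funnel all required flow through a single incident segment without increasing the total number of moves. I anticipate handling this via a flow-balance case analysis on the cycle, using Lemma~\ref{lem-clean-stuff} to rule out redundant obstacles, and treating the degenerate case in which the underlying cycle is a directed cycle (no junctions) separately by deleting a single cycle arc not required by a carefully chosen minimum-length sequence and applying Lemma~\ref{lem-dir-tree} on the resulting directed path together with its attached trees.
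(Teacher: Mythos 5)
Your setup (reduce to a contracted circle instance via Lemma~\ref{lem-circle-instance}, split the cycle into segments with shared in-/out-pools, realize per-segment matchings via Lemma~\ref{lem-dir-tree}) matches the paper's framework, but the decisive step is missing, and the specific plan you give for it does not work as stated. In a minimal counterexample, \emph{every} minimum-length sequence uses every arc of the cycle; so the existence of a ``well-chosen junction'' through which all flow can be funnelled, or of ``a single cycle arc not required by a carefully chosen minimum-length sequence'' in your directed-cycle case, is exactly what must be proved and cannot be assumed -- the latter phrasing is outright circular under the contradiction hypothesis. Moreover, the flexibility you invoke (a token in an in-pool of a source junction can go down either incident segment) says nothing about sources and targets sitting in trees attached to \emph{intermediate} cycle vertices of a segment, which can force that segment's arcs to be used; your deferred ``flow-balance case analysis'' is where all the difficulty lives, and nothing in the proposal shows it can be carried out without increasing the number of moves.

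The paper resolves this differently: it does not try to avoid an arbitrary arc chosen by a balance argument, but anchors the rerouting on the last move $(s,t)$ of $\alpha$, the move that closes the unique cycle. The cycle is split into $P'$ (the part lying on the path of $(s,t)$) and the complementary path $P$; since every cycle arc is used by some move of $\alpha$, one can string together a chain of \emph{linked} moves along $P$ (within each cycle segment, and across consecutive segments via the shared in-/out-pools), and then re-pair sources to targets by shifting along this chain: $s$ is matched to the first target of the chain, each $s_i$ to $t_{i+1}$, and the last source of the chain to $t$. Lemma~\ref{lem-dir-tree} then yields a valid transforming sequence of the same length whose induced graph avoids $P'$ entirely, breaking the cycle. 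If you want to salvage your approach, you would need an analogue of this shift re-pairing (or some other concrete mechanism) showing, from the structure of the circle instance itself, that some portion of the cycle can always be bypassed at zero extra cost; the junction-funnelling idea as described does not yet provide that.
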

\begin{proof}
For the sake of contradiction, we assume that $(D, S, T, \ell)$ is an instance for which for any $\alpha$ of minimum length, $D_{\alpha}$ contains a cycle. 
By Lemma~\ref{lem-circle-instance} we can assume that $(D, S, T, \ell)$ is a contracted circle instance, $(s,t)$ is the last move in $\alpha$, $D_{\alpha}$ contains a single cycle, $\beta$ is formed by removing $(s,t)$ from $\alpha$, and $D_{\beta}$ is a forest.

%  We know that $s$ cannot reach $t$ without closing the cycle so the orientations on the cycle cannot all be going from $s$ to $t$. 
%Now assume we can find $s'$ and $t'$ such that $s$ can reach $t'$ and $s'$ can reach $t$ without using the 
%last part of the cycle then we are done. We will do almost that. Note that we can ignore the number of moves and focus on feasibility since 
%we can assume that all tokens have to move (and each token moves at most once).

As in the proof of Lemma~\ref{lemma-forest}, we decompose the cycle into two undirected paths $P$ and $P'$ in the underlying undirected graph, both between the vertices $s'$ and $t'$, where $s'$ and $t'$ are both on the path for the move $(s,t)$.  We focus on the path $P$ that is not part of the move $(s,t)$.  We break $P$ into cycle segments, and show how to create a sequence of moves $\alpha'$ that results in the same configuration as $\alpha$ but does not use $P'$.  In forming $\alpha'$, we will find a sequence of linked moves, where for any two moves $m_i$ and $m_{i+1}$, we will say that $m_i$ and $m_{i+1}$ are \emph{linked} if there is directed path from $s_i$ to $t_{i+1}$.

%Given a sequence $\gamma$ of moves that transform a configuration $C$ to a configuration $C'$, we wish to find an 
%alternate sequence of moves $\delta$ that can also be used to transform $C$ to $C'$.

In choosing linked moves, we consider only moves that use arcs of $P$.  Since each move can use arcs of only a single cycle segment, we can partition the moves into sets $\Sigma_1, \Sigma_2, \ldots, \Sigma_r$, where $\Sigma_1$ corresponds to the segment that includes $s'$ and $\Sigma_r$ the cycle segment that includes $t'$.  We will form a linked sequence starting with a move with source $s$, ending with a move with target $t$, and consisting of a sequence of moves for each cycle segment.

Because every arc in the cycle must be part of a move, for any forward cycle segment, we can find a sequence of moves such that the source of the first move is in the in-pool, the target of the last move is in the out-pool, and all moves are linked.   Similarly, for any backward cycle segment, we can find a sequence where the target of the first move is in the out-pool, the source of the last move is in the in-pool, and all moves are linked.  Further, we can link the first and last moves of the sequences of linked moves associated with consecutive cycle segments, 
since targets of moves in consecutive cycle segments using the same out-pool can be reached by the sources of both moves, and sources of moves in consecutive cycle segments using the same in-pool can reach the targets of both moves.

We now form a sequence $\alpha'$ of the same length as $\alpha$, but which does not use $P'$. By Lemma~\ref{lem-dir-tree}, it suffices to form a mapping $\mu$ between sources and sinks such that there is a directed path between each $v$ and $\mu(v)$.  We observe that there is a path from $s$ to a vertex in the out-pool of the first backward cycle segment, allowing us to map $s$ to the target of the first move for that cycle segment.  Using the definition of linked moves, for each pair of linked moves $m_i$ and $m_{i+1}$, we can
set $\mu(s_i)$ to $t_{i+1}$ for each $i$.
Finally, since there is path to $t$ from the in-pool of the last forward segment,
$\mu^{-1}(t)$ is set to the
source of the last move for that cycle segment.
%\todo[inline]{NN: We need to argue why all the moves in $\alpha'$ are valid.  The way we constructed the sequence did not follow the order of moves in $\alpha$.}
%\todo[inline]{NN: We need to argue why there are at most $2 \ell$ vertices.}
%\todo[inline]{NN: I removed the $2 \ell$ vertices from this lemma. we already state it elsewhere.}
\end{proof}

To check if $(D, S, T, \ell)$ is a contracted yes-instance, it suffices to determine whether or not a labelled version of $D$ contains a subgraph of the correct form to be $D_\alpha$.
%\todo[inline]{NN: Why is $|S \Delta T| = 2|S \setminus T|$?}
%\todo[inline]{Amer: Because $S\setminus T = T \setminus S$?}
%\todo[inline]{NN: Sorry!  My brain interpreted $\Delta$ as $\cap$. Sigh.}
We assign labels to vertices of $D$ such that the vertices of $S \setminus T$ are labelled from $s_1$ to $s_\Delta$, the vertices of $T \setminus S$ are labelled from $t_1$ to $t_\Delta$, and all other vertices are assigned label $\Delta$. Thus, we have $|S \Delta T| + 1$ distinct labels. We 
say that $D$ is \emph{($\Delta + 1$)-labelled} and use 
$\mbox{lab}(v)$ to denote the label of vertex $v$.
In $D_\alpha$, all vertices not in $S \Delta T$ receive label $\Delta$. 

When $(D, S, T, \ell)$ is a contracted yes-instance with $\alpha$ a transforming sequence of minimum length, $D_\alpha$ has at most $|S \Delta T| + \ell - |S \setminus T| = \ell + |S \setminus T| \leq 2\ell$ vertices, 
as otherwise $\ell + 1$ moves are required.  Thus, 
we can enumerate 
all possible $(\Delta + 1)$-labelled directed graphs of size at most $2\ell$ and check whether any one of them implies a yes-instance and can be found as a subgraph of $D$.
For $H$ a $(\Delta + 1)$-labelled directed forest,
%on $q$ vertices, where $|S \Delta T| + 1 \leq q \leq |S \Delta T| + \ell - |S \setminus T|$; 
%note that we can always assume that at least one obstacle has to move. 
we let $S' = \{v \in V(H) \mid \mbox{lab}(v) \in \{s_1, \ldots , s_\Delta, \Delta\}\}$ and $T' = \{v \in V(H) \mid \mbox{lab}(v) \in \{t_1, \ldots , t_\Delta, \Delta\}\}$. Then, $H$ is said to be a \emph{witness} for 
$(D, S, T, \ell)$ if $(H, S', T', \ell)$ is a yes-instance and a subgraph isomorphic to $H$ can be found in $D$ such that the labelling of the vertices 
in $S \Delta T$ is respected.

\begin{theorem}\label{thm-dir-ell}
\textsc{UDTM} is in FPT when parameterized by $\ell$.
\end{theorem}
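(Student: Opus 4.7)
The plan is to certify a contracted yes-instance by exhibiting a witness $H$ as defined just before the theorem statement: a $(\Delta+1)$-labelled directed forest on at most $2\ell$ vertices such that $(H, S', T', \ell)$ is itself a yes-instance and such that $H$ embeds into $D$ in a label-respecting way. By Lemma~\ref{lemma-directed-forest} and the cardinality bound $|V(D_\alpha)| \leq 2\ell$ noted in the paragraphs preceding the theorem, a contracted yes-instance always admits such a witness (namely $D_\alpha$ itself), and conversely any witness yields a transforming sequence of length at most $\ell$ in $D$.

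First I would use Lemma~\ref{lemma-cliquify} to reduce to a contracted instance, then apply Lemma~\ref{lem-clean-stuff} to discard obstacles that cannot participate in any transforming sequence of length at most $\ell$. Next I would enumerate every $(\Delta+1)$-labelled directed forest $H$ on at most $2\ell$ vertices; this produces at most $g(\ell)$ candidates for some computable $g$. For each candidate $H$, the yes-instance check for $(H, S', T', \ell)$ is easy because $|V(H)| \leq 2\ell$: we can invoke Lemma~\ref{lem-dir-tree} component-wise once we guess, in $f(\ell)$ time, a one-to-one mapping $\mu$ from $S'$ to $T'$ witnessed by directed paths in the components of $H$.

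The substantial step is the subgraph-isomorphism test: we must decide whether $H$ occurs in $D$ as a subgraph so that every vertex of $H$ whose label is not $\Delta$ is mapped to the unique vertex of $D$ carrying the same label, while vertices of $H$ with label $\Delta$ are mapped to some label-$\Delta$ vertex of $D$. Plain use of Theorem~\ref{thm-alon} does not enforce this, so I would adapt the color-coding machinery. My plan is to work with $q = |V(H)|$ colors: fix, for each label different from $\Delta$, a dedicated color and assign it irrevocably to the unique $D$-vertex carrying that label (and to the corresponding $H$-vertex); on the remaining label-$\Delta$ vertices of $D$ use a random coloring from the remaining $q - 2\Delta$ colors. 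The dynamic program that underlies Theorem~\ref{thm-alon}, which builds colorful copies of a rooted forest by intersecting color sets on disjoint subtrees, transfers verbatim: the only change is that the forced vertices of $H$ contribute a single, pre-assigned option rather than ranging over all $n$ vertices. Derandomization via an $(n, q)$-perfect hash family yields a deterministic $2^{O(\ell)} \cdot n^{O(1)}$ running time.

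The main obstacle I anticipate is arguing correctly that the label-constrained variant of color coding still achieves the same asymptotic running time; this reduces to checking that the product of (i) the number of candidate forests $H$, (ii) the cost of the yes-instance test on $H$, and (iii) the adapted color-coding cost is of the form $f(\ell) \cdot n^{O(1)}$. Since each factor is separately bounded in this form, combining them establishes fixed-parameter tractability of \textsc{UDTM} parameterized by $\ell$.
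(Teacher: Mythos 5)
Your overall route is the paper's route: contract the instance (Lemma~\ref{lemma-cliquify}), clean it with Lemma~\ref{lem-clean-stuff}, use Lemma~\ref{lemma-directed-forest} plus the $O(\ell)$ bound on $|V(D_\alpha)|$ to justify enumerating all $(\Delta+1)$-labelled directed forests as candidate witnesses, solve each candidate internally in $f(\ell)$ time, and then test label-respecting subgraph isomorphism into $D$. The one place you genuinely diverge is the last step. The paper does not modify the color-coding algorithm at all: it attaches weighted pendant in-arcs and out-arcs to the labelled vertices of both $D$ and $H$ (arc weights chosen as multiples of $q$ encoding the label) and invokes Theorem~\ref{thm-alon} verbatim, using its maximum-weight feature so that the heaviest copy of the padded $H$ is forced to match labelled vertices to labelled vertices. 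You instead open up the dynamic program and pin each labelled vertex of $H$ to the unique identically labelled vertex of $D$ via dedicated colors, randomly coloring only the label-$\Delta$ vertices, and derandomize with a perfect hash family. Both are sound; the paper's trick buys black-box use of the stated theorem at the cost of a gadget construction, while yours avoids the gadget but requires you to actually verify that the pinned-vertex variant of the DP is correct (a routine but real obligation, since Theorem~\ref{thm-alon} as stated does not enforce label constraints). One small caveat on your internal check of $(H,S',T',\ell)$: Lemma~\ref{lem-dir-tree} assumes every leaf is in $S'\Delta T'$ and that $s\neq\mu(s)$ for all $s$, and it certifies existence of a transformation without directly counting moves, so for arbitrary candidate forests it does not by itself decide whether the budget $\ell$ suffices; since $|V(H)|\le 2\ell$, plain brute force over short move sequences (as the paper does) is both simpler and unconditionally correct, and you should fall back to that.
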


\begin{proof}
  For any instance $(D, S, T, \ell)$ of \textsc{UDTM}, we can transform the instance using Lemmas~\ref{lemma-cliquify} and~\ref{lem-clean-stuff} to form a contracted instance such that for every vertex $v \in S \cup T$ (that is, every vertex $v$ in the graph), $v$ is on a directed path from a vertex in $S \setminus T$ to a vertex in $T \setminus S$.

By Lemma~\ref{lemma-directed-forest}, we know that if $(D, S, T, \ell)$ is a yes-instance, then there exists 
a transforming sequence $\alpha$ of minimum length such that $D_{\alpha}$ is a directed forest. 
For any contracted yes-instance of \textsc{UDTM}, at most $\ell - |S \setminus T|$ moves can be of tokens outside of $S \setminus T$.  Hence, the total number of vertices in $D_{\alpha}$ is at least $|S \Delta T|$ and at most $|S \Delta T| + \ell - |S \setminus T|$.

To check if $(D, S, T, \ell)$ is a yes-instance, we search for a witness
by enumerating all $(\Delta + 1)$-labelled directed forests on $q$ vertices, for each value $|S \Delta T| \leq q \leq |S \Delta T| + \ell - |S \setminus T|$.
We observe that the total number of such forests is bounded by a function of $\ell$.  For each forest $H$, we determine whether $H$ is a witness by determining whether $(H, S', T', \ell)$ is a yes-instance (which can be achieved in FPT by brute force) and whether there is a subgraph of $D$ isomorphic to $H$ that respects the labels.

To check subgraph isomorphism, we create edge-weighted digraphs $D'$ and $H'$ with vertices corresponding to those in $D$ and $H$. We assign all edges in $D'$ weight one.  Each vertex in $D'$ and $H'$ corresponding to a vertex with label $s_i$ is given $i$ degree-one in-neighbors, connected by arcs with weight $iq$, and each vertex in $D'$ and $H'$ corresponding to a vertex with label $t_j$ is given  $|S \setminus T| + j$ degree-one out-neighbors, connected by arcs with weight $(|S \setminus T| + j)q$.

We can now invoke Theorem~\ref{thm-alon} on $D'$ and $H'$. If $D$ is isomorphic to $H$, then the copy of $H'$ in $D'$ with maximal total weight will include all the edges of weight greater than one, thereby providing a label-preserving isomorphism; otherwise, we have a no-instance. 
\end{proof}

\section{Hardness results}\label{sec-hardness}

\subsection{Preliminaries}

To strengthen some of our hardness results, we prove that for any instance of \textsc{UDTM} we can find an equivalent instance that is of 
degree at most three (Lemma~\ref{lemma-degree}) or 2-degenerate (Lemma~\ref{lemma-subdivide}). 

\begin{lemma}\label{lemma-degree}
For any instance $(D, S, T)$ of \textsc{UDTM}, we can form an equivalent instance $(D', S', T')$ such that $D'$ has maximum degree three. 
\end{lemma}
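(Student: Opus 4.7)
Plan. For each vertex $v \in V(D)$ with in-neighbors $u_1, \ldots, u_p$ and out-neighbors $w_1, \ldots, w_q$, I would replace $v$ with fresh vertices $a_1^v, \ldots, a_p^v, b_1^v, \ldots, b_q^v$ connected by the directed path $a_1^v \to a_2^v \to \cdots \to a_p^v \to b_1^v \to \cdots \to b_q^v$. Each original in-arc $u_i \to v$ is rerouted to enter $a_i^v$, and each out-arc $v \to w_j$ is rerouted to leave $b_j^v$. Performing this simultaneously at every vertex yields a digraph $D'$ in which each vertex carries at most one arc to or from outside its own gadget plus at most two internal arcs, hence has total degree at most three. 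Vertices with $p=0$ or $q=0$, as well as isolated vertices, are handled by truncated versions of the same gadget.

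Configurations and forward direction. I would define $S' = \{a_p^v : v \in S\}$ and $T' = \{a_p^v : v \in T\}$, placing the single token on the remaining terminal vertex of the chain whenever $p$ or $q$ is zero. Any transforming sequence $\alpha$ of length $\ell$ for $(D, S, T)$ lifts to one of the same length in $(D', S', T')$: each move $(s,t)$ with free directed path $s = x_0, x_1, \ldots, x_r = t$ in $D$ becomes the move $(a_p^s, a_p^t)$ in $D'$ along the directed path that exits the gadget of $s$ via its $b$-chain, enters the $a$-chain of $x_1$, continues to $a_p^{x_1}$, exits via its $b$-chain, and so on through $x_{r-1}$, finally arriving at $a_p^t$ via the $a$-chain of $t$. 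Because each $x_i$ is free in $D$ at the moment of the move, the vertex $a_p^{x_i}$ is free in $D'$, and every other internal gadget vertex is permanently token-free by construction, so the lifted path is free. Hence the shortest transforming sequence in $D'$ is at most as long as that in $D$.

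Backward direction and the main obstacle. For the reverse inequality I would invoke Lemma~\ref{lemma-move-once} on $(D', S', T')$ and fix a shortest sequence $\alpha'$ in which no token moves more than once. Every token in $S' \cup T'$ sits on some $a_p^v$-vertex, so the unique move of each token (if any) must take it from $a_p^v$ with $v \in S$ to $a_p^u$ with $u \in T$, and cannot end strictly inside a gadget. I would then project each such move $(a_p^v, a_p^u)$ to the move $(v, u)$ in $D$ along the directed walk obtained by collapsing each gadget on the lifted path back to its original vertex; freeness transfers because an intermediate vertex $x$ of $D$ is token-free precisely when $a_p^x$ is token-free in $D'$. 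The main obstacle is exactly this no-intermediate-stop property: absent Lemma~\ref{lemma-move-once}, a token could settle on some $b_j^v$ at the end of $\alpha'$, which has no counterpart as a terminal position in $D$, and the projection would be ill-defined. The once-per-token bound is what rules this out, making the projection exact and establishing the equivalence of $(D, S, T)$ and $(D', S', T')$.
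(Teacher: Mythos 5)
Your proposal is correct and is essentially the paper's construction: each vertex is replaced by a directed-path gadget of in-vertices followed by out-vertices, arcs of $D$ are rerouted between gadgets, tokens sit on one distinguished vertex per gadget, and equivalence follows from the correspondence between free paths in $D$ and free paths through the gadgets in $D'$ (the paper justifies the backward direction via this path correspondence, with Lemma~\ref{lemma-move-once} playing the role you assign it). The only cosmetic difference is that the paper inserts an explicit degree-two \emph{central vertex} between the in- and out-vertices and places the token there, whereas you place it on the last in-vertex $a_p^v$; this does not change the argument (though note the real issue Lemma~\ref{lemma-move-once} rules out is a token pausing on an internal gadget vertex \emph{mid-sequence}, not at the end of $\alpha'$, where the configuration must equal $T'$).
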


\begin{proof}
We form $D'$ from $D$ by constructing a gadget for each vertex, where the gadget for a vertex $v$ with in-degree $d_i$ and 
out-degree $o_i$ will consist of a directed path of $d_i + o_i + 1$ vertices (Figure~\ref{fig-broomextras} (a)). The first $d_i$ vertices 
are the \emph{in-vertices} of $v$, the next vertex is the \emph{central vertex} of $v$, and the remaining $o_i$ vertices are the \emph{out-vertices} of $v$.

For any arc $(x,y)$ in $D$, we connect the gadgets for $x$ and $y$ by adding an arc from an out-vertex 
of $x$ to an in-vertex of $y$.  Because each vertex has one in-vertex for each in-neighbor in $D$ and 
one out-vertex for each out-neighbor in $D$, each of the in-vertices and out-vertices have degree 
exactly three, and the central vertex has degree exactly two.

Finally, to form $S'$ and $T'$, we choose the central vertices for each vertex in $S$ and $T$, respectively. 
The equivalence of the instances follows directly from the fact that any path $P$ in $D$ corresponds to a path 
$P'$ in $D'$, where $P'$ passes through a sequence of central vertices corresponding to the intermediate vertices in $P$.
\end{proof}

\begin{lemma}\label{lemma-subdivide}
For any instance $(G, S, T)$ of \textsc{UUTM} or any instance $(D, S, T)$ of \textsc{UDTM}, we can form an equivalent 
instance $(G', S', T')$ or $(D', S', T')$ such that $G'$ or $D'$ is 2-degenerate.
\end{lemma}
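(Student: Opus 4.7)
The idea is the standard one: turn every edge into a short free pathway. I would form $G'$ (respectively, $D'$) from $G$ (respectively, $D$) by subdividing every edge (arc) exactly once, inserting a fresh \emph{subdivision vertex} of degree two for each original edge. In the directed case, an arc $(x,y)$ becomes a pair $(x,z),(z,y)$ for a new vertex $z$, which then has in-degree one and out-degree one. I would set $S' := S$ and $T' := T$, keeping tokens only on the original vertices.

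\textbf{2-degeneracy.} To verify that $G'$ (or the underlying undirected graph of $D'$) is 2-degenerate, I would exhibit an elimination order: first remove all subdivision vertices, each of which has degree exactly two and whose removal does not increase any other degree; this leaves only the original vertices, all of which are isolated in $G'$ (resp.\ in the underlying graph of $D'$) because every original edge was subdivided. Hence every subgraph contains a vertex of degree at most two.

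\textbf{Equivalence.} The key observation is that subdivision vertices are never in $S'\cup T'$, so under any reachable configuration they are free. Consequently a path in $G$ (or directed path in $D$) between original vertices is free under a configuration $C\subseteq S\cup T$ if and only if the corresponding subdivided path in $G'$ ($D'$) is free under the same configuration $C$. Thus the validity of a move $(s,t)$ with $s,t$ original is identical on both sides, and no move in a transforming sequence on either side ever has a subdivision vertex as source or target (they carry no token). This gives a bijection between transforming sequences of $(G,S,T)$ and those of $(G',S',T')$ of the same length, so the instances are equivalent.

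\textbf{Expected obstacle.} The argument is almost mechanical; the only subtlety is making sure that the notion of ``equivalent instance'' from the paper (same minimum transforming-sequence length) is preserved when we subdivide, and in particular that no shortcut is introduced or destroyed. This is handled by the bijection above, since the set of free paths between original vertices is preserved edge-by-edge under any configuration supported on $S\cup T$. No new obstacles arise in the directed case because subdivision replaces a single arc by a length-two directed path, which preserves reachability between original vertices exactly.
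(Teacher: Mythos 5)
Your construction (subdivide every edge once, keep $S$ and $T$ on the original vertices) and the 2-degeneracy check are fine and match the paper's construction. The gap is in the equivalence argument: you claim that subdivision vertices are free under \emph{any} reachable configuration and that \emph{no} transforming sequence on either side ever uses a subdivision vertex as a source or target, and from this you infer a length-preserving bijection between all transforming sequences. That claim is false as stated: in $G'$ (or $D'$) a transforming sequence is perfectly entitled to move a token onto a free subdivision vertex as a temporary parking spot, after which that vertex does carry a token and your ``configurations supported on $S\cup T$'' assumption breaks. Your parenthetical ``(they carry no token)'' is circular --- it is exactly what you need to prove for the direction from $(G',S',T')$ back to $(G,S,T)$, i.e.\ that subdividing cannot make the optimum shorter by exploiting the new vertices.

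The direction you do get for free is that any sequence in $G$ simulates in $G'$ with the same length, so the optimum cannot increase. For the converse you need the paper's one ingredient: by Lemma~\ref{lemma-move-once} there is a \emph{shortest} transforming sequence in $(G',S',T')$ (resp.\ $(D',S',T')$) in which no token moves more than once; in such a sequence every move takes a token from its position in $S'$ to its position in $T'$, so all sources and targets are original vertices and every intermediate configuration is contained in $S\cup T$. Only then does your edge-by-edge correspondence of free paths apply, letting you map that sequence back to an equally long sequence in $G$ (resp.\ $D$). So the fix is small --- restrict attention to a shortest sequence with the single-move property (or give an explicit exchange argument eliminating moves onto subdivision vertices) --- but as written the key step is unjustified, and this appeal to Lemma~\ref{lemma-move-once} is precisely how the paper closes it.
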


\begin{proof}
We simply subdivide each edge any constant amount of times (maintaining direction in the case of directed graphs). 
By Lemma~\ref{lemma-move-once} we know that the ``new'' vertices are never the source vertex nor the target vertex of a move.
\end{proof}  

\subsection{Parameter $\ell - |S \setminus T|$}

\begin{theorem}\label{thm-broom}
The problems \textsc{UUTM}, \textsc{UDTM}, \textsc{LUTM}, and \textsc{LDTM}
are W[2]-hard when parameterized by $\ell - |S \setminus T|$.
\end{theorem}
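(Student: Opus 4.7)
The plan is to reduce from \textsc{Set Cover}, which is W[2]-hard when parameterized by the solution size. Given an instance $(U,\mathcal{F},k)$ with $U=\{u_1,\ldots,u_n\}$ and $\mathcal{F}=\{F_1,\ldots,F_m\}$, I would construct an undirected graph $G$ with configurations $S$ and $T$ so that, up to a constant multiple of $k$, a transforming sequence with $\ell-|S\setminus T|\le k$ exists if and only if a cover of size at most $k$ exists. For each element $u_j$, introduce a source vertex $s_j\in S\setminus T$ and a target vertex $t_j\in T\setminus S$; for each set $F_i$, introduce a \emph{broom} gadget consisting of a central gateway $g_i\in S\cap T$ carrying an obstacle, a parking neighbor $p_i$, and any auxiliary vertices needed to preserve $|S|=|T|$. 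Attach the broom of $F_i$ to the elements it covers by adding edges $s_jg_i$ and $g_it_j$ for every $u_j\in F_i$. By construction, the only way a source token at $s_j$ can reach a vertex of $T$ is via some gateway $g_i$ with $u_j\in F_i$, and to traverse or occupy $g_i$ the obstacle there must first be displaced.

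For the forward direction, given a cover $\mathcal{F}'$ of size $k$, I would displace the obstacle at each $g_i$ with $F_i\in\mathcal{F}'$ into $p_i$ (contributing a constant number of excess moves per broom), route each $s_j$ through a cleared gateway to $t_j$, and resolve the remaining auxiliary tokens using the short edges within each broom, yielding a total length of $|S\setminus T|+O(k)$. For the reverse direction in the unlabelled variants, I would invoke Lemma~\ref{lemma-move-once} so that each token moves at most once; then every move beyond the $|S\setminus T|$ mandatory ones is the displacement of some gateway obstacle, and since each $s_j$ must terminate on a $T$-vertex while its only accessible $T$-vertices lie past the gateways $\{g_i:u_j\in F_i\}$, the displaced gateways form a set cover of $U$ of size at most $k$.

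To extend to \textsc{UDTM}, I would orient all broom arcs consistently (for instance $s_j\to g_i\to t_j$ and $g_i\to p_i$) so every required free directed path exists; the rest of the argument is unchanged. For \textsc{LUTM} and \textsc{LDTM}, I would give each obstacle a distinct label that must remain at $g_i$ in $T$, so any displaced obstacle is forced to return; each activated broom then contributes at least two excess moves, doubling the per-broom cost but keeping the reduction parameter-preserving without relying on Lemma~\ref{lemma-move-once}. The hard part will be engineering the broom so that $|S|=|T|$ holds, each activation contributes a fixed constant number of excess moves, and no alternative routing around a gateway lets a source reach $T$ without paying for that broom; once these structural properties are secured, the counting argument is immediate.
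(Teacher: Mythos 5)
Your overall strategy coincides with the paper's: a broom-style reduction from a W[2]-hard covering problem (your \textsc{Set Cover} is interchangeable with the paper's \textsc{Red-Blue Dominating Set}), with gateway obstacles in $S \cap T$ that must be displaced to open free paths from sources to targets, parking vertices for the labelled variants, and Lemma~\ref{lemma-move-once} driving the unlabelled reverse direction. The only structural difference is that you attach each element-source directly to its covering gateways, whereas the paper funnels all sources through a single path $W$ whose endpoint sees all of $B$; your variant would also enforce coverage. However, two details --- precisely the ones you defer as ``the hard part'' --- are genuine gaps as written. First, for \textsc{LDTM} you orient only $g_i \to p_i$. A displaced labelled obstacle then has no directed path back to $g_i$, yet its label forces it to end on $g_i$; since every source token must cross some temporarily vacated gateway, the reduced instance is always a no-instance and the forward direction collapses. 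You need a directed $2$-cycle between $g_i$ and $p_i$, which is exactly why the paper puts length-two cycles between $B$ and $H$.

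Second, your excess-move accounting is inconsistent in the unlabelled variants. Your forward direction parks each chosen obstacle at $p_i$; since $p_i \notin T$, that token must eventually move again (or the obstacle must return), so each activated broom costs two excess moves and you need $\ell = |S\setminus T| + 2k$. But your reverse direction charges only one excess move per displaced gateway (in a move-once shortest sequence the displaced gateway token can go directly to a target and a source token can refill $g_i$), so a budget of $2k$ excess moves only yields a cover of size $2k$, not $k$, and the reduction is no longer an equivalence. The repair is the paper's: for \textsc{UUTM}/\textsc{UDTM} drop (or simply never use) the parking vertices, have each activated gateway token move onto a target and a source token refill the gateway, and set $\ell = |S\setminus T| + k$, so activation costs exactly one; reserve the parking vertices and $\ell = |S\setminus T| + 2k$ for \textsc{LUTM}/\textsc{LDTM}, where the distinct label on the obstacle forces cost exactly two. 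With these two fixes (and noting that in your construction every path leaving $s_j$ or entering $t_j$ necessarily crosses a gateway covering $u_j$, so the ``alternative routing'' worry you raise is already excluded), your reduction does establish the theorem by essentially the same argument as the paper's.
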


\begin{proof}
  We demonstrate parameterized reductions from \textsc{Red-Blue Dominating Set}, known to be W[2]-hard~\cite{DF97}, which determines for a bipartite graph $G_D = (V_B \cup V_R, E)$ of blue and red vertices and an integer $k_D$, whether $G_D$ contains a subset $D \subseteq V_B$ of size at most $k_D$ such that each vertex in $V_R$ is the neighbor of a vertex in $V_B$.

Each of the reductions makes use of a different modification
to the ``broom graph'' construction of
C{\u{a}}linescu et al.~\cite{DBLP:journals/siamdm/CalinescuDP08}.
For  \textsc{LDTM}, we form a graph $D$ such that $V(D) = B \cup R \cup H \cup W$, where $B$ and $R$ contain one vertex for each vertex in $V_B$ and $V_R$, respectively, $|H| = |B|$, and $W$ is a directed path of length $|R|$.  There is an arc from the last vertex in $W$ to all vertices in $B$, arcs from the vertex in $B$ corresponding to $v \in V_B$ to all the vertices in $R$ corresponding to the neighbors of $v$ in $G_D$, and a cycle of length two between each vertex in $B$ and a distinct vertex in $H$ (Figure~\ref{fig-broomextras} (b)).

We set $S = W \cup B$, $T = B \cup R$, and $\ell = |R| + 2k_D$.  Since $|R| = |S \setminus T|$, $2k_D = \ell - |S \setminus T|$.

\begin{figure}[h]
\begin{centering}
\includegraphics[scale=.3]{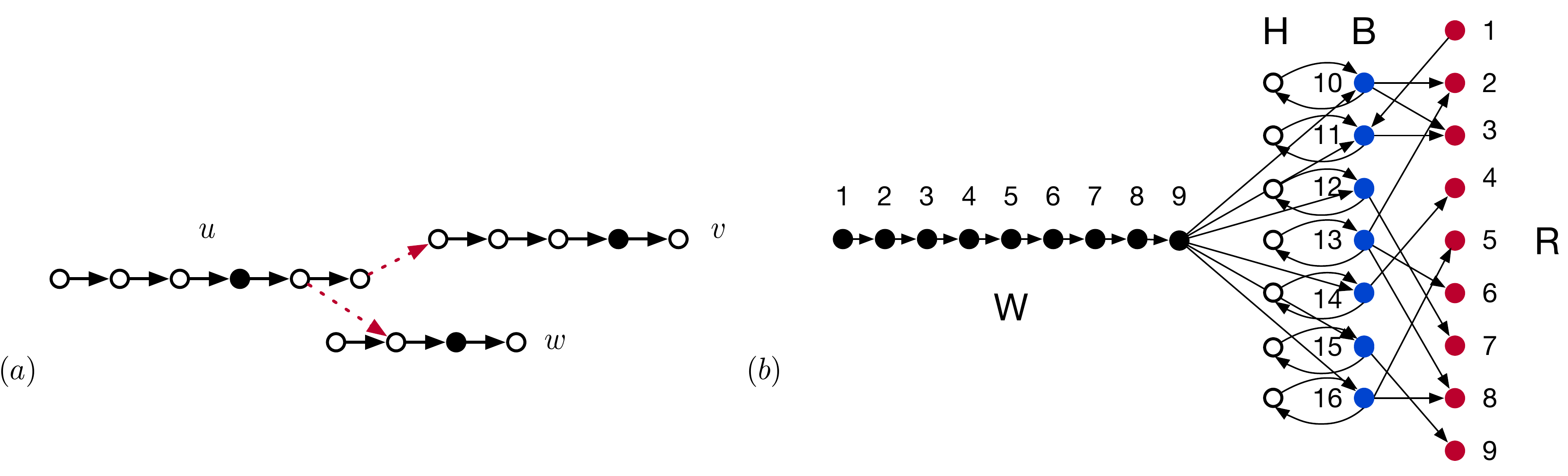}
\caption{(a) Gadgets for vertices $u$, $v$, and $w$ as well as the connections for the arcs $(u,v)$ and $(u,w)$
  (b) Broom graph reduction for \textsc{LDTM}, where $S = W \cup B$ and $T = B \cup R$}
\label{fig-broomextras}
\end{centering}
\end{figure}

By Lemma~\ref{lemma-move-once}, we can assume that the moves consist of $|S \setminus T|$ moves of tokens from $W$ to $R$, $k_D$ 
clearing moves, and $k_D$ filling moves; by construction, clearing moves will move tokens from $B$ to $H$.  A selection of $k_D$ 
vertices in $B$ for the clearing and filling moves will be possible only when $(G_D, k_D)$ is a yes-instance 
of \textsc{Red-Blue Dominating Set}, since otherwise it will not be possible to create free paths from all of $W$ to all of $R$.

The reductions for the remaining problems are similar.  For \textsc{LUTM}, all edges are undirected; it 
thus suffices to have a single edge between each vertex in $B$ and its corresponding vertex in $H$. 
For \textsc{UUTM} and \textsc{UDTM}, $H$ is empty and $\ell = |R| + k_D$.  The $k_D$ tokens in $B$ are moved to $R$, and the 
remaining moves are $\ell - k_D$ moves from $W$ to $R$ followed by $k_D$ filling moves from $W$ to $B$.  As for the other reductions, a 
transforming sequence is possible only when $(G_D, k_D)$ is a yes-instance.
\end{proof}  

Finally, we use Lemmas~\ref{lemma-degree} and~\ref{lemma-subdivide} to obtain the following results.
%\todo[inline]{Amer: the lemmas do not apply equally to all problems so I broke it down to the next two results}

\begin{corollary}
\textsc{UUTM} is W[2]-hard when parameterized by $\ell - |S \setminus T|$, even when restricted to 2-degenerate graphs.
\end{corollary}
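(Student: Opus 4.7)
The plan is to compose the W[2]-hardness reduction from Theorem~\ref{thm-broom} with the edge-subdivision construction of Lemma~\ref{lemma-subdivide}. Starting from a \textsc{Red-Blue Dominating Set} instance $(G_D, k_D)$, the broom-graph reduction produces a \textsc{UUTM} instance $(G, S, T, \ell)$ with $\ell = |R| + k_D$, so the parameter $\ell - |S \setminus T| = k_D$. I would then apply Lemma~\ref{lemma-subdivide} to $(G, S, T, \ell)$ to produce an equivalent instance $(G', S', T', \ell)$. Since the new subdivision vertices are not placed in either configuration, $S' = S$ and $T' = T$, hence $|S' \setminus T'| = |S \setminus T|$, and by the equivalence of instances $\ell$ is unchanged; therefore the parameter value $k_D$ is preserved exactly.

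The remaining task is to confirm that subdividing every edge a single time already suffices to make the graph 2-degenerate. After one subdivision, $G'$ is bipartite with one side consisting of the original vertices of $G$ and the other side consisting of the new subdivision vertices, each of which has degree exactly two. Any nonempty subgraph $H$ of $G'$ either contains a subdivision vertex (of degree at most two in $H$) or is a subgraph of the original side, in which case it has no edges at all. Hence every subgraph of $G'$ has a vertex of degree at most two, witnessing 2-degeneracy.

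Putting the two ingredients together gives a parameterized reduction from \textsc{Red-Blue Dominating Set} to \textsc{UUTM} restricted to 2-degenerate graphs, with parameter value $\ell - |S \setminus T| = k_D$ on both sides and instance size growing only linearly; W[2]-hardness of the target problem follows. I anticipate no real obstacle, since both steps are already supplied by earlier results in the paper; the only point worth stating carefully is the parameter preservation, which holds because subdivision neither alters $S$ and $T$ nor alters the minimum number of moves guaranteed by Lemma~\ref{lemma-move-once}.
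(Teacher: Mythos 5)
Your proposal is correct and follows essentially the same route as the paper, which obtains this corollary by composing the broom-graph reduction of Theorem~\ref{thm-broom} with the subdivision argument of Lemma~\ref{lemma-subdivide}. Your explicit checks (one subdivision per edge already yields 2-degeneracy, and the parameter $\ell - |S \setminus T| = k_D$ is unchanged since subdivision vertices lie outside $S \cup T$) are exactly the details the paper leaves implicit.
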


\begin{corollary}
\textsc{UDTM} is W[2]-hard when parameterized by $\ell - |S \setminus T|$, even when restricted to 2-degenerate graphs of maximum degree three.
\end{corollary}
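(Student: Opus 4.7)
The plan is to compose the W[2]-hard reduction of Theorem~\ref{thm-broom} with the structural transformations provided by Lemmas~\ref{lemma-degree} and~\ref{lemma-subdivide}. Starting from a \textsc{Red-Blue Dominating Set} instance $(G_D, k_D)$, the broom construction for \textsc{UDTM} produces an instance $(D, S, T, \ell)$ with $\ell = |R| + k_D$ and $|S \setminus T| = |R|$, so that $\ell - |S \setminus T| = k_D$. The goal is to transform $D$ into a 2-degenerate digraph of maximum degree three without altering the value $\ell - |S \setminus T|$, so that the reduction remains a parameterized reduction.

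First I would apply Lemma~\ref{lemma-degree} to obtain an equivalent instance $(D', S', T', \ell)$ of maximum degree three. Since the lemma places exactly one central vertex in $S'$ (respectively $T'$) for each vertex of $S$ ($T$), and these central vertices are distinct, we have $|S'| = |S|$, $|T'| = |T|$, and $|S' \setminus T'| = |S \setminus T|$. Because equivalence preserves the length of a shortest transforming sequence, $\ell$ is unchanged, so the parameter value $\ell - |S' \setminus T'| = k_D$ is preserved.

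Next I would apply Lemma~\ref{lemma-subdivide} with subdivision parameter one, subdividing each arc of $D'$ exactly once to produce an equivalent instance $(D'', S'', T'', \ell)$. Each subdivision vertex has in-degree one and out-degree one, so its undirected degree is two, while the original vertices retain their degree; the maximum degree therefore remains at most three. For 2-degeneracy, let $H$ be any subgraph of the underlying undirected graph of $D''$: if $H$ has no edges, every vertex has degree zero; otherwise $H$ contains at least one subdivision vertex (because every arc of $D''$ is incident to one), and that subdivision vertex has degree at most two in $H$. Since subdivision vertices are never added to $S''$ or $T''$ (they are not source or target vertices by Lemma~\ref{lemma-move-once}, which is what allows the subdivision step to preserve equivalence), we again have $|S'' \setminus T''| = |S' \setminus T'|$, so the parameter remains $k_D$.

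Chaining the three steps yields a parameterized reduction from \textsc{Red-Blue Dominating Set} to \textsc{UDTM} restricted to 2-degenerate digraphs of maximum degree three, with the parameter $\ell - |S \setminus T|$ equal to $k_D$, establishing W[2]-hardness. The only point requiring genuine care is confirming that both $\ell$ and $|S \setminus T|$ are preserved at each step so the parameter does not inflate; this is immediate once one observes that the vertices introduced by either lemma (gadget in-vertices, out-vertices, and subdivision vertices) lie outside $S \cup T$ while the central vertices stand in bijection with the original configuration vertices.
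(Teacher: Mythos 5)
Your proposal is correct and follows exactly the paper's (implicit) argument: the corollary is obtained by composing the broom-graph reduction of Theorem~\ref{thm-broom} with the degree-reducing and subdividing transformations of Lemmas~\ref{lemma-degree} and~\ref{lemma-subdivide}, noting that both preserve equivalence and leave $\ell$ and $|S \setminus T|$ (hence the parameter $k_D$) unchanged. Your explicit checks that the central vertices carry the configurations, that subdivision keeps the maximum degree at three, and that every subgraph with an edge contains a degree-two subdivision vertex are precisely the routine verifications the paper leaves to the reader.
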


\subsection{Parameter $\ell + f$}

We now show that \textsc{LUTM} and \textsc{LDTM} are W[1]-hard parameterized by $\ell$ or $\ell + f$ on general graphs. 
We give reductions from the \textsc{Multicolored Subgraph Isomorphism} problem, which determines whether there is a subgraph 
of a vertex-colored graph $G_M$ that is isomorphic to a vertex-colored graph $H$. 
The problem is W[1]-hard when parameterized by solution size, even when $H$ is a 3-regular connected bipartite graph~\cite{DBLP:journals/toc/Marx10}. 
We define $H$ to be a connected 3-regular bipartite graph such that 
$V(H) = [c]$, and use $\mbox{col}(u) \in [c]$ to denote the color of a vertex $u \in V(G_M)$.

Both of our reductions create a \emph{node-vertex} $v(w)$ for each
$w \in V(G_M)$ and then use the structure of $H$ to group node-vertices by color to form \emph{supernodes}.  By judicious assignment of labels to tokens moving between supernodes, we ensure that reconfiguration can occur only if we can clear tokens on a set of node-vertices that form a subgraph of $G_M$ isomorphic to $H$.  We start with the directed case and then explain  modifications for the undirected case. \\

\subsubsection{\textsc{Labelled Directed Token Moving}}

We explain the reduction in three steps: (1) forming a DAG $H'$ from $H$ to provide the structure of $D$, (2) creating and connecting supernodes, and (3) adding gadgets to constrain movement of tokens.

%By judicious assignment of labels to tokens moving from  \emph{source gadgets} to \emph{target gadgets} attached to supernodes, we can ensure that reconfiguration is possible only when we allow exactly one clearing
%move per supernode (using a \emph{storage gadget}), and the source vertices of the clearing moves correspond to a subgraph of $G_M$ isomorphic to $H$.  

\noindent \textbf{Step 1.} To form a DAG $H'$, we create a vertex $h_i \in V(H')$ for each vertex $i \in V(H)$, making use of breadth-first search to assign each vertex to a level.  Choosing an arbitrary vertex $r$ of $H$ for the sole vertex $h_r$ at level 0 in $H'$, we assign each remaining vertex $h_i$ to the minimum level $p$ such that there is a path of $p$ vertices from $r$ to $i$ in $H$. By adding the edges forming the breadth-first search tree and directing them from vertices at smaller levels to larger levels, we form a directed acyclic graph (Figure~\ref{fig-handhprime} (c)).   By our construction, each arc connects vertices in adjacent levels.

\begin{figure}[h]
\begin{centering}
\includegraphics[scale=.4]{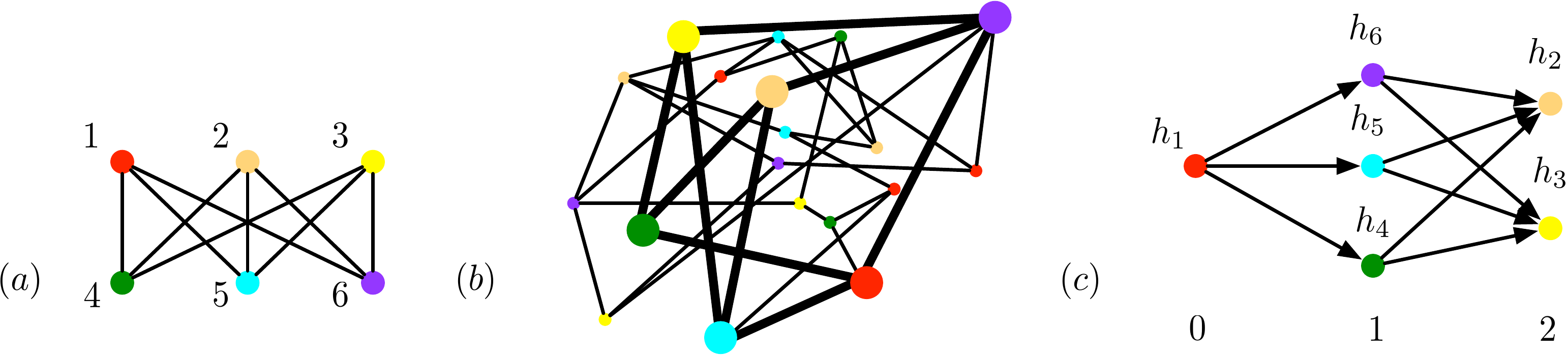}
\caption{(a) $H$ (b)  $G_M$, with a subgraph isomorphic to $H$ shown with larger circles and thicker lines (c) $H'$, where $r = 1$, showing level numbers.}
\label{fig-handhprime}
\end{centering}
\end{figure}

\noindent \textbf{Step 2.}  We can safely ignore any edge in $G_M$ between vertices of the same color or between colors not connected by an edge in $H$, as no such edge can form part of a subgraph of $G_M$ isomorphic to $H$.   For each color $i$, supernode $D_i$ consists of all node-vertices $v(w)$ such that $\mbox{col}(w) = i$ in $G_M$; we consider $D_i$ to be at the same level as $h_i$ in $H'$.  To form $D$, we add an arc between any vertex $v(x) \in D_i$ and $v(y) \in D_j$ such that $(h_i,h_j)$ is an arc in $H'$ and $(x,y)$ is an edge in $G_M$ (Figure~\ref{fig-directed} (a)).  When there exists an arc between vertices in $D_i$ and $D_j$, we say that $D_i$ is a \emph{super-in-neighbor} of $D_j$ and that $D_j$ is a \emph{super-out-neighbor} of $D_i$.

\begin{figure}[h]
\begin{centering}
\includegraphics[scale=.32]{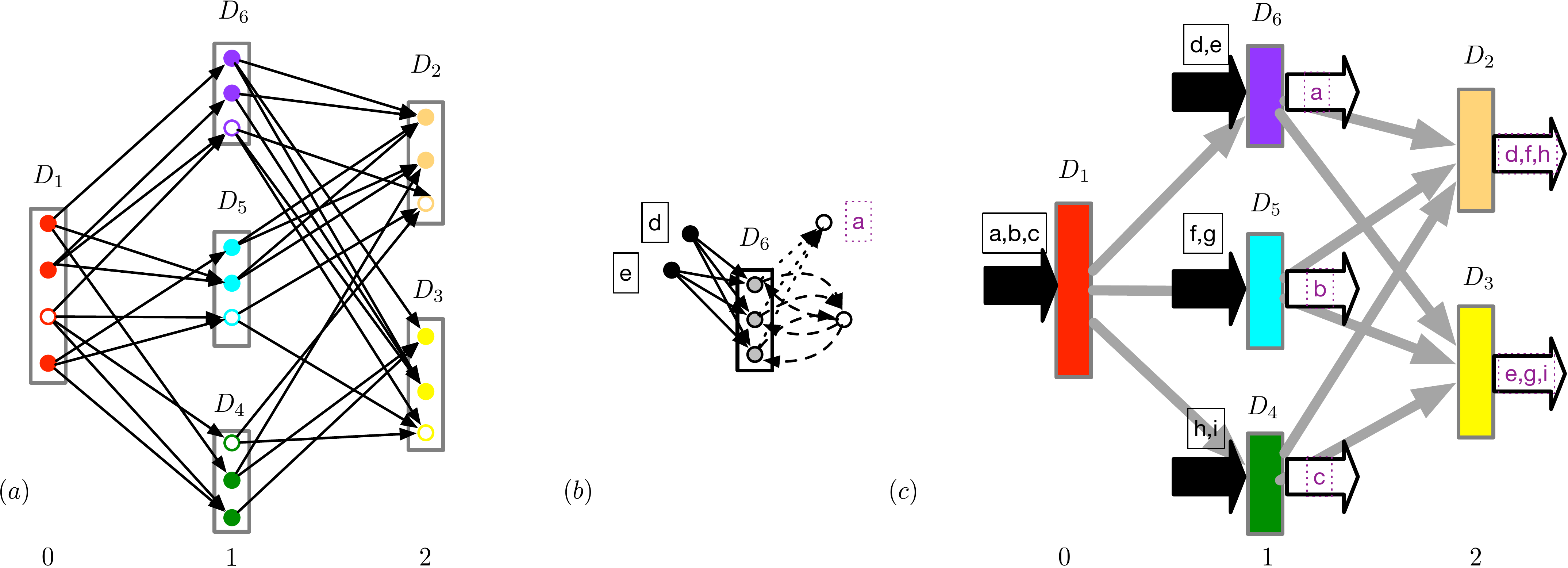}
\caption{(a) Supernodes of $D$ with node-vertices corresponding to a subgraph isomorphic to $H'$ shown as unfilled circles (b) Source, target, and storage gadgets for $D_6$ with edges shown using solid, dashed, and dotted arcs, respectively (c) A schematic diagram showing supernodes as rectangles and source gadgets and target gadgets as labelled black and white arrows with token labels. For example, he source gadget attached to $D_6$ has two tokens with target vertices in out-neighbors $D_2$ and $D_3$, and the target gadget attached to $D_6$ has one vertex, the target vertex of a token in in-neighbor $D_1$.
}
\label{fig-directed}
\end{centering}
\end{figure}

\noindent \textbf{Step 3.}  In order to ensure that each edge is traversed, we associate a token with each arc in $H'$, where for arc $(h_i, h_j)$, in $S$ the token is assigned to a \emph{source gadget} in $D_i$, and in $T$, the token is assigned to a \emph{target gadget} in $D_j$.  Accordingly, we choose token labels so that for each supernode, there is one token in its source gadget for each super-out-neighbor, and there is one vertex in its target gadget for each super-in-neighbor.  For each token in $S \setminus T$, its source vertex and target vertex are in consecutive levels. 

For \textsc{LDTM}, the source gadget attached to a supernode consists of a set of vertices with tokens connected by arcs into each node-vertex in the supernode and the target gadget consists of a set of vertices without tokens connected by arcs from each node-vertex in the supernode (Figure~\ref{fig-directed} (b)).
In addition, associated with each supernode is a \emph{storage gadget} consisting of a single vertex that is the in-neighbor and out-neighbor of every node-vertex in the supernode.  As their names suggest, the union of the vertices in the source gadgets equals $S \setminus T$ and the union of the vertices in the target gadgets equals $T \setminus S$.  

To complete the construction for \textsc{LDTM}, we set $S \setminus T $ to all vertices in source gadgets, $T \setminus S$ to all vertices in 
target gadgets, and $S \cap T$ to all node-vertices. By construction, $|S \setminus T| = |E(H)|$ and $f = |V(H)|$; we set $\ell = |E(H)| + 2k$.

\begin{lemma}\label{lem-forward-general}
If $(G_M,H)$ is a yes-instance of \textsc{Multicolored Subgraph Isomorphism}, then $(D,S,T,\ell)$ is a yes-instance of \textsc{LDTM}.
\end{lemma}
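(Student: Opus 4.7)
The plan is to translate the chosen isomorphism into a transforming sequence of length exactly $\ell$. Let $w_1, \ldots, w_c$ be the vertices of a subgraph of $G_M$ isomorphic to $H$, indexed so that $\mathrm{col}(w_i) = i$, and write $v(w_i) \in D_i$ for the corresponding node-vertex inside supernode $D_i$. The sequence will proceed in three phases: clearing, routing, and refilling, keyed to this choice of representatives.

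In the \emph{clearing} phase, for each color $i \in [c]$ I would move the obstacle token on $v(w_i)$ to the storage-gadget vertex attached to $D_i$ in a single step along the direct arc, consuming $c = k$ moves and leaving every $v(w_i)$ free. In the \emph{routing} phase, for each arc $(h_i,h_j) \in E(H')$ I would move the corresponding token from its source-gadget vertex $s_{ij} \in D_i$ to its target-gadget vertex $t_{ij} \in D_j$ along the directed path $s_{ij} \to v(w_i) \to v(w_j) \to t_{ij}$. The first and last arcs exist by the definitions of source- and target-gadgets, and the middle arc $(v(w_i),v(w_j))$ exists precisely because $(w_i,w_j) \in E(G_M)$ (from the isomorphism) and $(h_i,h_j) \in E(H')$ (from the construction of $D$). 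All $|E(H)|$ routing moves can be performed in any order, since no routing move ever places a token onto a node-vertex, and distinct routes use distinct source- and target-gadget endpoints, so the node-vertex ``highways'' $v(w_i)$ remain free throughout. In the \emph{refilling} phase, for each $i$ I would move the stored token from the storage vertex of $D_i$ back onto the still-empty $v(w_i)$ along the direct arc, using another $k$ moves.

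The total count is $k + |E(H)| + k = |E(H)| + 2k = \ell$, as required. The labels line up by construction: the token initially at $s_{ij}$ carries precisely the label that $T$ places at $t_{ij}$, so at the end of the routing phase every label sits at its correct target, while the refilling phase restores the obstacle tokens on exactly the node-vertices in $S \cap T$. The main obstacle is pure bookkeeping rather than conceptual depth, namely verifying that the routing phase never blocks itself; this reduces to checking the two invariants above (node-vertices remain free, gadget endpoints do not collide), which in turn rely squarely on the fact that the subgraph isomorphism produces one representative per supernode together with every required middle arc realized in $D$, without which the routing paths would not exist.
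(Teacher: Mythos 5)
Your proposal is correct and follows essentially the same approach as the paper: $k$ clearing moves of the chosen node-vertices into their (distinct) storage gadgets, then the $|E(H)| = |S \setminus T|$ source-to-target gadget moves routed through the cleared node-vertices, then $k$ filling moves back, totalling $\ell = |E(H)| + 2k$. The extra bookkeeping you supply (explicit routing paths and non-interference of the routing moves) is a more detailed version of the paper's observation that clearing the endpoints of an edge of $G_M$ for each arc of $H'$ unblocks all gadget-to-gadget moves.
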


\begin{proof}
  We will specify a sequence of $\ell$ moves, consisting of $k$ clearing moves, $S \setminus T$ moves of vertices in $S \setminus T$ to vertices in $T \setminus S$, and finally $k$ filling moves.  
  
  Using the fact that $(G_H, H)$ is a yes-instance, we define $V_H$ as the set vertices of $G_M$ in the subgraph isomorphic to $H$. For our clearing moves, we move the tokens on $v(w)$ for each $w \in V_H$ to adjacent storage gadgets.  Since each vertex $w$ has a different color, each $v(w)$ is in a different supernode, and hence the storage gadgets are all distinct.

  Moreover, for each arc $(h_i,h_j)$ in $H'$, we have cleared the endpoints of an edge from $D_i$ to $D_j$ in $D$, allowing all the moves from $S \setminus T$ to $T \setminus S$ to take place.  Finally, the $k$ tokens moved in the clearing moves can be returned to their original locations.
\end{proof}

\begin{lemma}\label{lem-backward-general}
If $(D,S,T,\ell)$ is a yes-instance of \textsc{LDTM}, then $(G_M,H)$ is a yes-instance of \textsc{Multicolored Subgraph Isomorphism}.
\end{lemma}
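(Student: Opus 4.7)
The plan is to reconstruct, from any transforming sequence of length at most $\ell = |E(H)| + 2k$, a color-preserving subgraph of $G_M$ isomorphic to $H$. I would first pin down which tokens ever move, then show that for each arc of $H'$ the corresponding source-gadget token is forced to travel along a very specific short path, and finally read the isomorphism off of these paths.

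First I would set up a counting argument. Every source-gadget vertex lies in $S \setminus T$, so its labelled token must move at least once, contributing at least $|E(H)|$ unavoidable moves. Every node-vertex lies in $S \cap T$ under a fixed label, so if its token ever leaves it must eventually return, contributing at least two moves per ever-cleared node-vertex. Letting $s$ be the total number of source-token moves and $p$ the number of distinct node-vertices ever cleared, the budget $s + 2p \le \ell = |E(H)| + 2k$ together with $s \ge |E(H)|$ gives $p \le k = c$. On the other hand, since $H$ is connected with at least two vertices, every $h_i$ is incident to some edge of $H$, so at least one source-gadget token enters or exits $D_i$. Because the only arcs out of a source-gadget vertex (resp.\ into a target-gadget vertex) go into (resp.\ come from) node-vertices of its own supernode, any such traversal uses a node-vertex of $D_i$ that is free at the instant of the move, and that node-vertex must be ever-cleared. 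Hence every supernode contributes at least one ever-cleared vertex, forcing $p = c$ and $s = |E(H)|$: there is exactly one ever-cleared node-vertex $v(x_i) \in D_i$ per supernode, and each source-gadget token moves exactly once.

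I would then analyze the unique move of the source-gadget token assigned to an arc $(h_i, h_j) \in E(H')$, running from some source-gadget vertex $u$ attached to $D_i$ to the target-gadget vertex $u'$ attached to $D_j$. The first step of this simple directed path must land on a free node-vertex of $D_i$, which can only be $v(x_i)$; symmetrically, the last step into $u'$ must come from $v(x_j)$. Since node-vertices of a single supernode share no direct arcs (they can communicate only through the storage vertex, which becomes a dead end inside a simple path), the path cannot revisit $D_i$ after leaving it. Moreover, the BFS level structure of $H'$ means that all inter-supernode arcs in $D$ go strictly forward in level; once the path leaves $D_i$ into a super-out-neighbor at level $\mathrm{lev}(h_i) + 1$, there is no way to come back to that same level to reach $D_j$. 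The only viable choice from $v(x_i)$ is therefore the direct arc $v(x_i) \to v(x_j)$, which exists in $D$ exactly when $(x_i, x_j) \in E(G_M)$. Collecting this over every arc of $H'$ shows that the map $i \mapsto x_i$ is a color-preserving subgraph isomorphism from $H$ into $G_M$, so $(G_M, H)$ is a yes-instance.

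The main obstacle I expect is the last step: ruling out the possibility that a source-token path makes a long detour through several supernodes or into a storage gadget before landing on its target. Simplicity of paths, the one-way level structure of $H'$, and the absence of direct arcs between node-vertices of the same supernode together rule out all such detours, but bundling these observations cleanly is the delicate part of the write-up.
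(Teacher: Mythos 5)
Your proof is correct and takes essentially the same route as the paper's: a move-budget count forcing at most (and hence exactly) one cleared node-vertex per supernode, combined with the structural observation that the free path of each source-to-target move must cross an arc of $D$ between cleared node-vertices of colours $i$ and $j$, which yields an edge of $G_M$ for every edge of $H$. The only difference is presentational: you spell out the detour-exclusion (via the level structure and storage-gadget dead ends) that the paper states as an immediate consequence of the construction.
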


\begin{proof}
%By Lemma~\ref{lemma-move-once}, we can assume that no token moves more than once, and 
%hence that due to sizes of $S \setminus T$ and $\ell$, the transforming sequence consists of  $k$ clearing moves, 
%$\ell - 2k$ moves of vertices in $S \setminus T$ to vertices in $T \setminus S$, and $k$ filling moves, although the order 
%among them is unconstrained.  
Due to sizes of $S \setminus T$ and $\ell$, the transforming sequence consists of  $k$ clearing moves, 
$\ell - 2k$ moves of vertices in $S \setminus T$ to vertices in $T \setminus S$, and $k$ filling moves, although the order among them is unconstrained.  
As a consequence of the construction, we observe that 
before the move of a token from a source gadget attached to $D_i$ to a target gadget attached to $D_j$, there must be two clearing moves, one 
of a node-vertex $v(x)$ in $D_i$ and one of a node-vertex $v(y)$ in $D_j$ such that $(x,y) \in E(G_M)$.

To be able to clear enough vertices to allow all $S \setminus T$ moves of tokens from source gadgets to target gadgets, we must 
be able to select $k$ node-vertices to move such that for every edge $(h_i, h_j)$ in $H'$, there 
is an edge $(x,y) \in E(G_M)$ such that $\mbox{col}(x) = i$, $\mbox{col}(y) = j$, and the vertices $v(x)$ and $v(y)$ have both been cleared. 
By definition, the selected node-vertices correspond to the vertices in a subgraph of $G_M$ isomorphic to $H'$, and hence to $H$, proving 
that $(G_M, H)$ is a yes-instance of the \textsc{Multicolored Subgraph Isomorphism} problem.
\end{proof}

Combining Lemmas~\ref{lem-forward-general} and~\ref{lem-backward-general} with the fact that $\ell + f = O(|V(H)| + |E(H)|) = O(k)$, we obtain the following theorem. 

\begin{theorem}\label{thm-dir-free}
\textsc{LDTM} is W[1]-hard when parameterized by $\ell + f$. 
\end{theorem}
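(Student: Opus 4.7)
The plan is to show that the construction described in Steps~1--3 together with Lemmas~\ref{lem-forward-general} and~\ref{lem-backward-general} constitutes a parameterized reduction from \textsc{Multicolored Subgraph Isomorphism} (known to be W[1]-hard when parameterized by $k = |V(H)|$, even for $3$-regular connected bipartite $H$) to \textsc{LDTM} parameterized by $\ell + f$. First I would verify that the reduction can be carried out in polynomial time: the BFS used to build $H'$ runs in time polynomial in $|V(H)|$, building a node-vertex for every $w \in V(G_M)$ and including only the arcs between supernodes whose colors are adjacent in $H'$ is polynomial in $|V(G_M)| + |E(G_M)|$, and each supernode $D_i$ is equipped with a source gadget of size at most $\deg_{H'}(h_i)$, a target gadget of size at most $\deg_{H'}(h_i)$, and a single storage vertex.

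Next I would invoke Lemma~\ref{lem-forward-general} to conclude that every yes-instance of \textsc{Multicolored Subgraph Isomorphism} produces a yes-instance of \textsc{LDTM} under the construction, and Lemma~\ref{lem-backward-general} for the converse direction. Taken together, these give an equivalence between $(G_M, H)$ and $(D, S, T, \ell)$ with respect to being yes-instances, so the reduction is correct.

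The final step, and the only remaining content of the theorem, is the parameter bookkeeping. By construction the only vertices of $D$ that lie outside $S \cup T$ are the storage-gadget vertices, one per supernode, so $f = |V(H)|$. Setting $\ell = |E(H)| + 2k$ with $k = |V(H)|$, and using the fact that $H$ is $3$-regular so that $|E(H)| = 3|V(H)|/2$, we obtain $\ell + f = |E(H)| + 3|V(H)| = O(|V(H)|) = O(k)$. Hence the parameter $\ell + f$ of the produced instance is linearly bounded in the parameter $k$ of the source instance, establishing that the reduction is a parameterized reduction and that \textsc{LDTM} is W[1]-hard under $\ell + f$.

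The step most deserving of care is the parameter bound: all of the conceptual work has been absorbed into the two lemmas, but one must be careful to count $f$ correctly (in particular, to confirm that the source-gadget and target-gadget vertices are respectively in $S \setminus T$ and $T \setminus S$ rather than outside $S \cup T$, so that only the storage-gadget vertices contribute to $f$) and to use the $3$-regularity of $H$ to bound $|E(H)|$ by $O(k)$. Once those two points are checked, the theorem follows immediately.
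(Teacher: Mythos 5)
Your proposal is correct and matches the paper's own argument: the theorem is obtained exactly by combining Lemmas~\ref{lem-forward-general} and~\ref{lem-backward-general} with the observation that the reduction is polynomial and that $\ell + f = O(|V(H)| + |E(H)|) = O(k)$. Your extra care in confirming that only the storage-gadget vertices contribute to $f$ (so $f = |V(H)|$) is precisely the bookkeeping the paper relies on.
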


\subsubsection{\textsc{Labelled Undirected Token Moving}}

We use the same basic structure as in the previous reduction, but need extra machinery to ensure that a token moving from $D_i$ to $D_j$ is unable to find a route that avoids all edges corresponding to edges in $G_M$ with endpoints of colours $i$ and $j$.   To this end, we introduce superedges and a clock gadget, defined below, and for each arc $(h_i,h_j) \in E(H')$, specify the numbers of tokens to move from $D_i$ to $D_j$ and $D_j$ to $D_i$ based on the level.  Source gadgets, target gadgets, and storage gadgets all consist of vertices connected by a single undirected edge to all node-vertices in a supernode.

To construct $G$ for \textsc{LUTM}, we construct supernodes in the same way as in $D$, and although $G$ is undirected, use the 
terms level, super-in-neighbor, and super-out-neighbor based on the structure of $H'$.  
In addition, we form an \emph{edge-path} $p(x,y)$ of length $K$ (to be defined later) for each edge $(x, y)$ in $G_M$, and 
for each arc $(h_i,h_j)$ in $H'$, we form a \emph{superedge} $D_{i,j}$ consisting of all edge-paths $p(x,y)$ such 
that $\mbox{col}(x) = i$ and $\mbox{col}(y) = j$.  The level of $D_{i,j}$ is considered to be the same as the level of $D_i$. 
To connect the node-vertices and edge-paths in $G$ (Figure~\ref{fig-undirected}), 
for each superedge $D_{i,j}$ and each edge-path $p(x,y)$ in $D_{i,j}$, we add an edge from $v(x)$ to one end of $p(x,y)$ 
and an edge from the other end of $p(x,y)$ to $v(y)$.  

\begin{figure}[h]
\begin{centering}
\includegraphics[scale=.25]{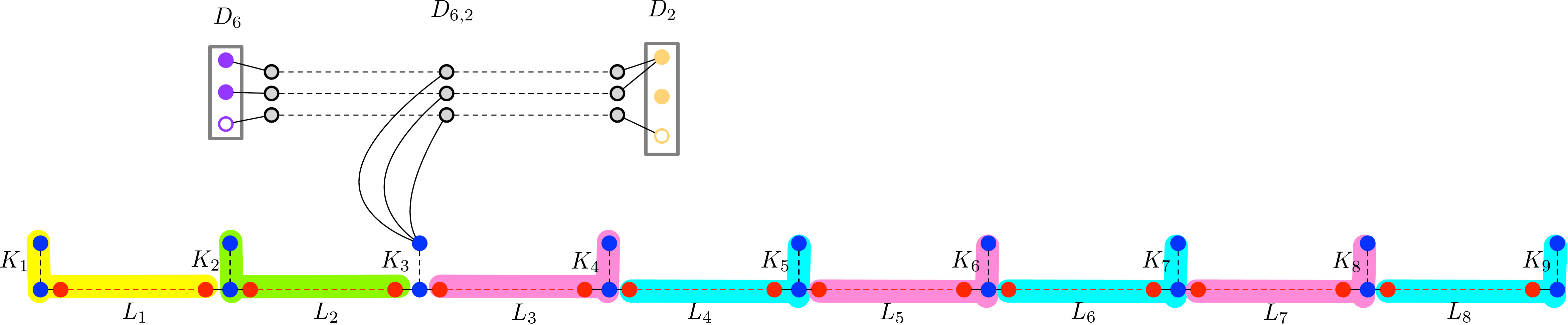}
\caption{Edge-paths for superedge $D_{6,2}$ connected to $K_3$ in the clock gadget (not to scale); the first two clock segments have moved to target vertices
}

\label{fig-undirected}
\end{centering}
\end{figure}

For convenience, we refer to the tokens in the source gadget attached to 
$D_i$ destined for the target gadget attached to $D_j$ as the \emph{$D_i$-$D_j$ tokens}.  Our proofs hinge on showing that the \emph{$D_i$-$D_j$} moves pass through superedge $D_{i,j}$.  To limit possible paths, for each supernode $D_{i,j}$ at level $r$, we create $\ell_r$ $D_i$-$D_j$ tokens and $\ell_r$ $D_j$-$D_i$ tokens, where for the last level $z$, $\ell_z = Q = 3k^2/2$ and for any level $y$, $\ell_y = Q \cdot \ell_{y+1}$.  

The \emph{clock gadget} is designed to allow the freeing of edge-paths in superedges one at a time in increasing order of level, which we 
call the \emph{clock numbering} (or just \emph{numbering}) of the superedges.  For large values $K$ and $L$ to be defined later, the clock 
gadget consists of $|E(H)|$ \emph{storage paths} $K_1, \ldots, K_{|E(H)|}$, each of length $K$,  and $|E(H)|-1$ \emph{linking paths} $L_1, \ldots, L_{|E(H)|-1}$, each of length $L$.  Referring to the two endpoints of each storage path as the \emph{top end} and the \emph{bottom end} and the two endpoints of each linking path as the \emph{left end} and the \emph{right end}, for each $i$, we add edges connecting the bottom end of $K_i$ to the right end of $L_{i-1}$ (if it exists) and to the left end of $L_{i}$ (if it exists).

We view tokens for the clock gadget in $S$ as being 
grouped into $|E(H)|-1$ \emph{clock segments}, each containing $K+L$ tokens, such that $K_1$ is free before the movement of the
 tokens in the first clock segment from source to target 
vertices, $K_2$ is free between the movements of the first and second clock segments, and $K_{|E(H)|}$ is free only after the movement of all of the clock segments. 
All vertices in $K_1$ are in $T \setminus S$, all vertices in $K_{|E(H)|}$ are in $S \setminus T$, and for clock segment $i$, the tokens are on $L_i$ and $K_{i+1}$ in $S$, in order from left to right and bottom to top, and on $K_i$ and $L_i$ in $T$, in order from top to bottom and left to right. We say that the \emph{clock is at position $p$} whenever the top end of $K_p$ does not have a token.  Finally, we connect the top end of each $K_i$ to the middle vertex in each edge-path in the superedge numbered $i$.

%To form the clock gadget for \textsc{LUTM},
%we form $E(H')$ \emph{clock gadget paths} paths of length $K$, add edges between the first vertices in consecutive pairs of paths.  We order the superedges by nonincreasing level,

%and connect clock gadget path $p$ to the first vertex in each edge-path in the superedge in position $b$ in the ordering (Figure~\ref{fig-undirected} (b)).  There are tokens in $S$ on all clock gadget path except the path in the first position; each token moves to the clock gadget path in position one smaller, in reverse order.

%\todo[inline]{NN: I can explain this better when I have a bit more time.}

To complete the construction, we set $S$ to consist of vertices in the source gadgets, source vertices in the clock gadget, and all node-vertices and 
edge-paths, and $T$ to consist of vertices the target gadgets, target vertices in the clock gadget, and all node-vertices and edge-paths. The number of free vertices in $G$ is thus the total size of the storage gadgets, or $f = k$.

We now choose large enough values of $K$ and $L$ to control token movement.
We set $K = 2Q^* + k + 1$, where $Q^*$ is the total number of vertices in source gadgets in all supernodes.
To determine the number of moves $\ell$, we sum $(|E(H)| - 1)(K + L)$ moves for the $|E(H)|-1$ clock segments, $Q^*$ moves of tokens from source to target gadgets, $2k$ moves to clear and fill the $k$ node-vertices, and $2K|E(H)|$ to clear and fill one edge-path in each superedge, or $\ell = (|E(H)| - 1)(K + L) + Q^* + 2k + 2K|E(H)|$. 
We let $L = (|E(H)| - 1)K + Q^* +  2k + 2K|E(H)|  + 1$, so that 
written in a more convenient form, we have $\ell = |E(H)|L - 1$.
%$\ell = (|E(H)| - 1)(L + 1) + L$.

\begin{lemma}\label{lem-forward-undir}
If $(G_M,H)$ is a yes-instance of \textsc{Multicolored Subgraph Isomorphism}, then $(G,S,T,\ell)$ is a yes-instance of \textsc{LUTM}.
\end{lemma}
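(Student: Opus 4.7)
The plan is to exhibit an explicit transforming sequence of length exactly $\ell = |E(H)|L - 1$, consisting of (i) $k$ clearing moves of node-vertices at the start, (ii) for each of the $|E(H)|$ superedges, $2K$ moves to clear and then refill one chosen edge-path, (iii) $Q^*$ moves carrying the source-gadget tokens through those cleared edge-paths, (iv) $|E(H)|-1$ clock advances of $K+L$ moves each, and (v) $k$ filling moves that restore the node-vertices at the end. A direct computation then gives $2k + 2K|E(H)| + Q^* + (|E(H)|-1)(K+L) = \ell$ by the chosen value of $L$.

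First, I would use the hypothesized isomorphism to pick a set $V_H \subseteq V(G_M)$ of $k$ vertices, exactly one of each color, inducing a copy of $H$ in $G_M$, and then execute $k$ clearing moves sending each $v(w)$ with $w \in V_H$ to the unique free storage vertex attached to its supernode; these are single-edge moves and hence valid.

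Next, I would process the superedges in clock order, maintaining the invariant that the storage path $K_p$ is empty just before handling the superedge numbered $p$ (this holds initially for $p=1$ and is re-established by each clock advance). For the arc $(h_i,h_j)$ of $H'$ numbered $p$ with $D_i$ at level $r$, the chosen isomorphism yields an edge $(x,y)$ of $G_M$ with $x,y \in V_H$ of colors $i$ and $j$, so the edge-path $p(x,y) \subseteq D_{i,j}$ already has both of its node-neighbors $v(x)$ and $v(y)$ cleared. I would clear $p(x,y)$ in $K$ moves by evacuating its tokens into $K_p$, starting with the middle vertex and alternating outward along the edge-path; each such move uses the already-free portion of $p(x,y)$ together with the still-empty prefix of $K_p$ as a free path. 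With $p(x,y)$, $v(x)$, and $v(y)$ all free, I would then perform the $2\ell_r$ moves of the $D_i$-$D_j$ and $D_j$-$D_i$ tokens one by one along (source-gadget vertex) $\to v(x) \to p(x,y) \to v(y) \to$ (target-gadget vertex), refill $p(x,y)$ in $K$ further moves reversing the clearing, and, if $p < |E(H)|$, advance the clock by executing clock segment $p$: a sequence of $K+L$ moves that slides each of its tokens $K$ positions to the left along the spine $K_p \cup L_p \cup K_{p+1}$. Taking the moves in increasing order of target position, a short induction on the token index should verify that each move has a free target and a free path, leaving $K_{p+1}$ empty and re-establishing the invariant for step $p+1$.

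After the last superedge I would do the $k$ filling moves that send each storage token back to its node-vertex, reaching $T$ in exactly $\ell$ valid moves. The principal obstacle is not the arithmetic but the low-level verification that every prescribed move is valid, particularly during the clock advance: one must order the $K+L$ moves so that each token's source is still occupied, its target is still free, and no intermediate vertex of the spine blocks the path. This in turn hinges on $K_p$ being empty when the advance begins, which depends on having refilled the edge-path of superedge $p$ immediately before; the corresponding bookkeeping for the clear-and-refill of the edge-path into $K_p$ and for the source-to-target moves through the cleared path is simpler but must also be made explicit.
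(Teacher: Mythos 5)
Your proposal is correct and follows essentially the same route as the paper's proof: clear one node-vertex per supernode using the copy of $H$ in $G_M$, then process superedges in clock order by evacuating the chosen edge-path into $K_p$, passing the $D_i$-$D_j$ and $D_j$-$D_i$ tokens through it, refilling it, and advancing clock segment $p$, before finally restoring the node-vertices, with the identical move count $2k + 2K|E(H)| + Q^* + (|E(H)|-1)(K+L) = \ell$. Your additional bookkeeping for ordering the clearing and clock-advance moves is finer-grained than what the paper spells out, but it is consistent with the construction and does not change the argument.
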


\begin{proof}
We specify a sequence of moves that clears one node-vertex in each supernode, completes all 
the moves between $S \setminus T$ and $T \setminus S$, and then fills the cleared node-vertices. 
As in the proof of Lemma~\ref{lem-forward-general}, we define $V_H$ as the set of vertices of $G_M$ 
in the subgraph isomorphic to $H$, and to clear node-vertices, we move the tokens on $v(w)$ for each $w \in V_H$ to adjacent storage gadgets.  

  Clearing node-vertices is not sufficient; the $D_i$-$D_j$ moves are still blocked by tokens in $D_{i,j}$. 
  For each superedge numbered $p$, we clear the edge-path corresponding to the edge in $V_H$ between 
  vertices of colors $i$ and $j$ by moving the tokens into $K_p$, execute the  $D_i$-$D_j$ moves, return the tokens 
  from $K_p$ to the edge-path, and then move the tokens in clock segment $p$ to their target vertices (except for when $p = |E(H)|$).  

Using the node-vertices corresponding to the subgraph of $G_M$ isomorphic to $H$, we can accomplish the transformation by $k$ clearing moves of the 
node-vertices, $2K$ moves for each superedge, $(|E(H)| - 1)(K + L)$ moves for the clock gadget tokens, $Q^*$ moves between source and target gadgets of supernodes,
and finally $k$ moves to fill the node-vertices.  
The total number of moves will be $\ell = (|E(H)| - 1)(K + L) + Q^* + 2k + 2K|E(H)|$.
\end{proof}

Before we prove the reverse direction, we first show that for a yes-instance of \textsc{LUTM}, the clock gadget behaves as required:  the clock can be in only one position at a time,  ``time'' cannot go backwards, and at any point in the transformation, we can have at most one superedge with an edge-path free of tokens. 

\begin{lemma}\label{lem-good-storage}
If $(G,S,T,\ell)$ is a yes-instance of \textsc{LUTM} then we cannot have $K_p$ and $K_{p'}$ such that the top ends of both are free of tokens. 
\end{lemma}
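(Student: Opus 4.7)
The plan is to argue by contradiction. Suppose some transforming sequence of length at most $\ell$ has a time $t$ at which the top ends of $K_p$ and $K_{p'}$ are both free, with $p < p'$. First I would unpack the tight move budget: decomposing $\ell = |E(H)|L - 1 = (|E(H)|-1)(K+L) + Q^* + 2k + 2K|E(H)|$ gives exactly the minimum cost of any transformation, namely $(|E(H)|-1)(K+L)$ single moves for the clock-segment tokens, $Q^*$ single moves for the source-gadget tokens, $2k$ moves for clearing and refilling the $k$ node-vertices, and $2K|E(H)|$ moves for clearing and refilling one edge-path per superedge. Because these minima sum to exactly $\ell$, in any valid sequence no clock or source-gadget token may move more than once, no extra node-vertex may be cleared, and at most one edge-path may be freed per superedge.

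Next I would exploit the structure of the clock gadget: the storage paths and linking paths form a tree whose backbone is the concatenation of the $L_i$ joined at the bottoms of interior $K_i$, with each $K_i$ hanging off as a branch and the top of $K_i$ attached externally only to the middles of edge-paths in superedge $i$. Let $\tau_{p'}$ denote the clock token initially on top of $K_{p'}$ (well-defined because $p' > 1$); its target in $T$ is the right end of $L_{p'-1}$. By the one-move budget $\tau_{p'}$ must reach its target in a single jump, and any such single-move path has to descend through $K_{p'}$ into $L_{p'-1}$: the only alternative exits through an edge-path in superedge $p'$, which either leaves $\tau_{p'}$ at a non-target vertex (forcing a second move of $\tau_{p'}$) or requires freeing an additional edge-path beyond the $2K|E(H)|$ allotment. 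Hence the chain path must be free at the time of $\tau_{p'}$'s move, so all $K-1$ other vertices of $K_{p'}$ are already empty, which cascades backward: segment $p'-1$'s tokens must have reached their targets inside $L_{p'-1}$, which in turn required $L_{p'-1}$'s originals to have reached $K_{p'-1}$, and so on down to segment $1$.

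This cascade closes the contradiction. If $p = 1$, the cascade forces segment $1$ to have already shifted, so the first token of segment $1$ sits on top of $K_1$, contradicting that top of $K_1$ is free at $t$. If $p > 1$, the cascade forces segment $p$ to have shifted, placing its first token on top of $K_p$; for top of $K_p$ to be free at $t$, that token must have moved a second time, violating the one-move budget. The main obstacle is to make the cascading-shift argument fully rigorous, in particular to rule out exotic interleavings in which clock tokens, source-gadget tokens, and edge-path detours collectively conspire to free top of $K_{p'}$ without shifting entire earlier segments. I would handle this by induction on the temporal order of clock-token moves: each single move of a clock token requires a free chain path whose interior vertices can only have been vacated by the single moves of tokens in the immediately preceding segment, which compels segment $i$ to complete before any token of segment $i+1$ moves, and rules out edge-path detours by reusing the edge-path budget argument from the first paragraph.
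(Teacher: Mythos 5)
Your argument hinges on the claim in the first paragraph that the four quantities $(|E(H)|-1)(K+L)$, $Q^*$, $2k$, and $2K|E(H)|$ are each forced lower bounds whose sum is exactly $\ell$, so that every category is tight and, in particular, no clock token may ever move twice. The first three are indeed forced (every clock token and source-gadget token is labelled and has a target distinct from its source, and each supernode needs a node-vertex cleared and refilled), but the term $2K|E(H)|$ --- clearing and refilling one edge-path in \emph{every} superedge --- is not something you may assume here: that every superedge must be opened is essentially the conclusion of Lemma~\ref{lem-backward-undir}, whose proof relies on the present lemma (together with Lemmas~\ref{lem-good-clock} and~\ref{lem-one-edge}), so invoking it is circular. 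At this stage the only provable lower bound is $\ell - 2K|E(H)|$, leaving a slack of $2K|E(H)|$ moves, which is enormous (recall $K = 2Q^*+k+1$). Consequently your ``one-move budget'' for individual clock tokens does not follow: counting alone does not exclude that a few clock tokens move twice, that extra node-vertices are cleared, or that some superedge is opened more than once, as long as the total overspend stays below $2K|E(H)|$. Since every later step of your proposal (the single-jump requirement for $\tau_{p'}$, the exclusion of detours through superedge $p'$, and the cascading induction) rests on this per-token tightness, the cascade is unsupported; the obstacle you flag at the end is exactly where the argument breaks, and patching it ``by reusing the edge-path budget argument from the first paragraph'' reuses the flawed premise.

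The paper's proof works at the granularity of whole clock segments rather than single tokens: if the top ends of both $K_p$ and $K_{p'}$ are free at the same moment, then the tokens of at least one entire clock segment must move twice, so the clock gadget alone consumes at least $|E(H)|(K+L) > |E(H)|L - 1 = \ell$ moves. The point is that an extra $K+L$ moves exceeds the available slack, since $L > Q^* + 2k + 2K|E(H)|$ by construction, whereas a single extra move does not --- this is precisely the distinction your per-token accounting misses. To salvage your approach you would need to upgrade the cascade to a statement of the form ``two simultaneously free top ends force some whole segment to relocate twice'' and then apply this coarse count; tightness at the level of individual moves cannot be established at this point in the reduction.
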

\begin{proof}
If there exist $K_p$ and $K_{p'}$ such that the top ends of both are free of tokens, then this implies that the tokens of at least 
one clock segment must move twice. Hence the number of moves required in the clock gadget will be at least $|E(H)|(K + L) > |E(H)|L - 1 = \ell$.
\end{proof}

\begin{lemma}\label{lem-good-clock}
If $(G,S,T,\ell)$ is a yes-instance of \textsc{LUTM}, then after the clock reaches position $p$ it can never go back to position $p - 1$ (or any earlier position). 
\end{lemma}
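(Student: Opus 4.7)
The plan is to prove this by contradiction. Assume a transforming sequence of length at most $\ell$ has the clock visit position $p$ at some time $T_1$ and then a position $p' \le p - 1$ at a later time $T_2$. The goal is to exhibit at least $2K$ extra moves forced by the regression, exceeding the tight budget $\ell$.

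First I will pin down the configurations at $T_1$ and $T_2$ using Lemma~\ref{lem-good-storage}: at $T_1$ only $K_p$ is empty, so $K_{p-1}$ is full; at $T_2$ only $K_{p'}$ is empty, so $K_p$ is full. Consequently, between $T_1$ and $T_2$ there must be at least $K$ moves with source on $K_{p'}$ (to empty it) and at least $K$ moves with target on $K_p$ (to refill it). Since $p' \le |E(H)| - 1$, the storage path $K_{p'}$ is full in $T$, forcing at least $K$ further moves with target $K_{p'}$ after $T_2$.

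Next I will compare with the forward-only baseline. The identity $\ell = (|E(H)|-1)(K+L) + Q^{*} + 2k + 2K|E(H)|$, obtained by unpacking $\ell = |E(H)|L - 1$, combined with the construction in Lemma~\ref{lem-forward-undir}, shows that the budget is tight: in any length-$\ell$ transforming sequence each clock-gadget token must move exactly once, each of $k$ cleared node-vertices contributes exactly two moves, each cleared edge-path of $K$ tokens contributes exactly $2K$ moves, and each of the $Q^{*}$ source-gadget tokens makes exactly one move. I will then focus on the $K$ labeled tokens of clock segment $p-1$ whose targets lie on $K_{p-1}$. A cascade argument along the chain $K_{p-1}$--$L_{p-1}$--$K_p$ shows that emptying $K_p$ by $T_1$ forces the conveyor-belt shift of segment $p-1$ through $L_{p-1}$, placing exactly these $K$ tokens on $K_{p-1}$ by $T_1$. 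The regression then forces each of them to leave $K_{p-1}$ before $T_2$ and, since their targets are on $K_{p-1}$, to return before the end of the sequence. Each such token therefore executes at least three moves rather than the single move charged by the baseline, contributing $\ge 2K$ extra moves and pushing the total above $\ell$.

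The main obstacle is justifying the cascade conclusion that at $T_1$ the vertices of $K_{p-1}$ carry precisely the segment-$(p-1)$ tokens with target $K_{p-1}$. Alternative routings could in principle push $K_p$'s tokens into the edge-paths of superedge $p$ or along $L_p$ instead of through $L_{p-1}$. I plan to rule these out by observing that any such detour disturbs an edge-path or a second linking path of length $L$, each of which itself costs at least $2K$ additional labeled-token moves that are not absorbed by the tight baseline, again exceeding $\ell$ and yielding the desired contradiction.
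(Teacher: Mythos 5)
Your plan has a genuine gap, and it is both structural and quantitative. The ``tight baseline'' you rely on---every clock token moves exactly once, $2k$ node-vertex moves, $2K$ moves for a cleared edge-path in each of the $|E(H)|$ superedges, $Q^*$ source-token moves---is exactly the accounting that is only established later, in Lemma~\ref{lem-backward-undir}, whose proof in turn uses the present lemma (together with Lemmas~\ref{lem-good-storage} and~\ref{lem-one-edge}). At this point you may only force, for an arbitrary transforming sequence of length at most $\ell$, the $(|E(H)|-1)(K+L)$ clock-token moves and the $Q^*$ source-token moves (these tokens have distinct source and target vertices); the remaining slack in the budget is $2k + 2K|E(H)|$. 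An overshoot of $2K$ therefore does not exceed $\ell$: a hypothetical regressing solution could, for all you have shown, skip clearing some edge-paths and absorb your $2K$ extra moves. Exhibiting the forward schedule of Lemma~\ref{lem-forward-undir} that meets $\ell$ does not show that every solution must spend the budget in that way, so the tightness claim is circular, and the same objection defeats your fallback that any ``detour'' through an edge-path or a second linking path costs another $2K$ moves ``not absorbed by the baseline''---clearing an edge-path for $2K$ moves is precisely what the budget anticipates, to the tune of $2K|E(H)|$.

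The paper's proof sidesteps all of this by extracting an overshoot on the scale of $L$, not $K$: if the clock reaches position $p$, falls back to $p-1$, and must still end at position $|E(H)|$ to realize $T$, then a full clock segment of $K+L$ tokens (crucially including the $L$ tokens that traverse the linking path $L_{p-1}$) is moved back and then forward again, so the clock tokens alone require at least $(|E(H)|+1)(K+L) > |E(H)|L - 1 = \ell$ moves; since $L$ exceeds $(|E(H)|-1)K + Q^* + 2k + 2K|E(H)|$ by construction, this contradicts the budget with no further bookkeeping. Your count never touches the linking paths, which is where the quantitative strength of the gadget lies. Two further slips: ``the clock is at position $p$'' only means the top end of $K_p$ is token-free, and Lemma~\ref{lem-good-storage} only forbids two free top ends, so your inferences ``at $T_1$ only $K_p$ is empty, hence $K_{p-1}$ is full'' and ``at $T_2$ $K_p$ is full, hence $K$ refilling moves occurred'' do not follow; and the cascade claim that at $T_1$ the vertices of $K_{p-1}$ carry exactly the segment-$(p-1)$ tokens destined for $K_{p-1}$ is unsupported---you flag it yourself, but the proposed repair again costs only $O(K)$ and so cannot close the argument.
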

\begin{proof}
Assume that the clock reaches position $p$ and goes back to position $p - 1$. 
This implies that the clock would have to alternate from position $p$, to $p - 1$, and back to $p$. 
Hence, the total number of moves to transform the tokens on the clock to their target positions would be $(|E(H)| + 1)(K + L) > |E(H)|L - 1 = \ell$, a contradiction. 
\end{proof}

\begin{lemma}\label{lem-one-edge}
If $(G,S,T,\ell)$ is a yes-instance of \textsc{LUTM} then at any point in the transformation we can have at most one superedge with an edge-path free of tokens. 
Morever, whenever superedge numbered $p$ has a free edge-path, then the clock must be at position $p$. 
\end{lemma}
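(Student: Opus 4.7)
The plan is to prove the second claim (whenever superedge $p$ has a free edge-path, the clock is at position $p$) first, and then derive the first from it together with Lemma~\ref{lem-good-storage}.

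I would begin with a tight move-budget analysis. By Lemmas~\ref{lem-good-storage} and~\ref{lem-good-clock} the clock contributes exactly $(|E(H)|-1)(K+L)$ moves; each of the $Q^*$ source-to-target tokens must move at least once; in each supernode at least one node-vertex must be cleared and later refilled so that source-to-target traffic can route through it, contributing at least $2k$; and each superedge must have at least one of its edge-paths cleared and refilled, with each of the $K$ tokens on it moving at least twice, contributing at least $2K|E(H)|$. These lower bounds sum to exactly $\ell$, so every inequality is tight: each of the $K$ tokens on any cleared edge-path moves exactly twice, once off to a temporary vertex and once back, and no other obstacle token moves at all.

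Next I would bound the non-$K_p$ free-vertex capacity accessible to the $K$ displaced tokens of an edge-path in superedge $p$. The $k$ storage-gadget vertices are free; at any instant the source and target gadgets jointly provide exactly $Q^*$ free vertices (every source-to-target token is at either its source or its target, leaving one vertex of the pair free); and at most $k$ node-vertices are cleared at once. Lemma~\ref{lem-good-storage} forces any other free region of the clock gadget to lie in a single storage path $K_q$ with $q\neq p$, but the top end of $K_q$ connects only to middle vertices of superedge $q$'s edge-paths and so is unreachable from superedge $p$'s edge-path. Hence the total non-$K_p$ capacity is at most $Q^*+2k$. Since $K = 2Q^*+k+1 > Q^*+2k$, if the edge-path in superedge $p$ is free at a moment $\tau$ then at least one of its $K$ displaced tokens must reside inside $K_p$ at $\tau$.

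That token entered $K_p$ through its top end at some earlier moment and must later exit through the same top end on its return to the edge-path, so the top end of $K_p$ is free at both moments, i.e.\ the clock is at $p$ at both. Lemma~\ref{lem-good-clock} prohibits regression of the clock, so the clock remains at $p$ throughout the interval between entry and return and in particular at $\tau$, proving the second claim. The first claim follows: if two distinct superedges $p\neq p'$ had free edge-paths simultaneously, the clock would be at both $p$ and $p'$, contradicting Lemma~\ref{lem-good-storage}.

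The main obstacle is defending the $Q^*+2k$ capacity bound carefully. One must rule out multi-hop routings that would funnel displaced tokens through $v(x)$ or $v(y)$ into other supernodes, or into another edge-path whose own tokens have not been displaced, since such routings would either force extra obstacle tokens to move (breaking the tight budget) or would attempt to traverse a vertex that is provably still occupied. The deliberately large value $K = 2Q^*+k+1$, which strictly exceeds every alternative combined capacity, is precisely what makes the counting argument decisive.
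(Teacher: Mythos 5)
Your core counting step coincides with the paper's: since $K = 2Q^* + k + 1$ exceeds the room available outside the clock gadget, at least one of the $K$ displaced tokens of a free edge-path in superedge $p$ must sit inside the clock, and since the top end of $K_p$ is the only clock vertex adjacent to that edge-path, the clock must be at position $p$; the first claim then follows with Lemma~\ref{lem-good-storage}. The genuine gap is in the scaffolding you wrap around this count. Your ``tight move-budget'' takes as a lower bound that every superedge has one of its edge-paths cleared and refilled, contributing $2K|E(H)|$ moves. That is not an a priori lower bound: at this point nothing rules out the $D_i$-$D_j$ tokens routing around $D_{i,j}$ through other superedges, which is precisely the behaviour the clock gadget is designed to exclude and precisely what Lemma~\ref{lem-backward-undir} proves \emph{later, using the present lemma}. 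Assuming it here is circular. And you do rely on the resulting tightness in an essential way: ``no other obstacle token moves at all,'' ``every source-to-target token is at either its source or its target,'' and ``at most $k$ node-vertices are cleared at once'' are all consequences of the tight budget, and they are what you invoke to defend the $Q^* + 2k$ capacity bound and to rule out multi-hop routings or parking displaced tokens on node-vertices and on other edge-paths. So the argument as written does not go through.

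The repair is the paper's own, cruder count, which needs no tightness at all: the source gadgets, target gadgets, and storage gadgets together contain only $2Q^* + k$ vertices, which is already smaller than $K$, so at least one displaced token must have moved into the clock gadget, and it can only have entered through the top end of $K_p$, which was therefore free at that moment; combining this with Lemma~\ref{lem-good-clock} (the clock never regresses) and Lemma~\ref{lem-good-storage} yields both claims. Your refinement that the displaced token must both enter and later exit $K_p$ through its top end, so that the clock sits at position $p$ throughout the interval during which the edge-path is free (and not merely at the instant of the move), is a nice sharpening of the paper's terser wording, but it does not depend on, and cannot rescue, the budget step.
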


\begin{proof}
  By construction, each edge-path in a superedge contains $K = 2Q^* + k + 1$ tokens. Since the total number of vertices in source gadgets, target gadgets, and storage gadgets is $2Q^* + k$, at least one token of the free edge-path must have moved to the clock gadget. Since the top of $K_p$ is the only vertex in the clock gadget connected to the edge-path, the top of $K_p$ must have been free at the time of the move.
\end{proof}  

\begin{lemma}\label{lem-backward-undir}
If $(G,S,T,\ell)$ is a yes-instance of \textsc{LUTM}, then $(G_M,H)$ is a yes-instance of \textsc{Multicolored Subgraph Isomorphism}.
\end{lemma}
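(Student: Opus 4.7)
The plan is to extract, from any valid transforming sequence of length at most $\ell = |E(H)|L - 1$, a choice of $k = |V(H)|$ node-vertices in $G_M$ inducing a subgraph isomorphic to $H$. The strategy combines the three clock lemmas (Lemmas~\ref{lem-good-storage}, \ref{lem-good-clock}, and~\ref{lem-one-edge}) with a tight budget analysis that forces a unique clearing pattern.

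First, I would use the clock lemmas to conclude that the clock must advance monotonically through all positions $1, 2, \ldots, |E(H)|$, with each clock segment moving exactly once. By Lemma~\ref{lem-one-edge}, an edge-path in the superedge numbered $p$ can be free only during the interval when the clock is at position $p$; and for the $D_i$-$D_j$ and $D_j$-$D_i$ source-to-target tokens associated with the arc numbered $p$ to reach their targets, a free edge-path in $D_{i,j}$ is needed. Hence the clock cannot skip any position.

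Next, I would do the budget accounting. The clock segments consume $(|E(H)|-1)(K+L)$ moves and the source-to-target moves consume $Q^*$ moves, leaving $2k + 2K|E(H)|$ for clearing and refilling node-vertices and edge-paths. A full edge-path clearing needs at least $K$ moves, and by the capacity inequality $2Q^{*} + k < K$ it can only be accomplished by dumping tokens into the clock storage $K_p$, which is available only at clock position $p$; this limits each superedge to one fully-cleared edge-path. The combined need for a cleared edge-path in every $D_{i,j}$ plus cleared node-vertex endpoints forces exactly one cleared edge-path per superedge and exactly $k$ cleared node-vertices in total.

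For the combinatorial extraction, the endpoints of the single cleared edge-path in each $D_{i,j}$ must be cleared node-vertices in $D_i$ and $D_j$. Since $H$ is connected and each supernode is incident to at least one arc of $H'$, every supernode supplies at least one cleared node-vertex; with only $k$ clearings available and $k$ supernodes, there is exactly one cleared node-vertex $v(x_i)$ per supernode $D_i$. For every edge $(h_i, h_j) \in E(H)$, the cleared edge-path of $D_{i,j}$ must then be $p(x_i, x_j)$, witnessing the edge $(x_i, x_j) \in E(G_M)$; so $\{x_1, \ldots, x_k\}$ induces a subgraph of $G_M$ isomorphic to $H$.

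The main obstacle is rigorously ruling out detours: since $G$ is undirected, a token starting in the source gadget of $D_i$ could in principle try to reach the target gadget of $D_j$ by traversing several superedges through intermediate supernodes. I would block this by combining Lemma~\ref{lem-one-edge} (at most one superedge has a free edge-path at any instant) with the capacity inequality $2Q^{*} + k < K$, showing that any multi-hop scheme demands simultaneous free edge-paths in distinct superedges that the budget cannot afford. The geometric growth $\ell_r = Q \cdot \ell_{r+1}$ serves as an additional backstop, amplifying the cost of any such aggregation at higher levels and ensuring that no single cleared edge-path can be re-used to serve the tokens of another superedge.
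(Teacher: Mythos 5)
Your proposal reproduces the outer skeleton of the paper's argument (the move-budget accounting, the clock behaviour via Lemmas~\ref{lem-good-storage}--\ref{lem-one-edge}, and the final extraction of one cleared node-vertex per supernode and one cleared edge-path per superedge), but it assumes away the step that constitutes almost the entire proof in the paper: showing that for every arc $(h_i,h_j)$ of $H'$, at least one $D_i$-$D_j$ or $D_j$-$D_i$ token actually traverses the superedge $D_{i,j}$. You assert in your first paragraph that ``a free edge-path in $D_{i,j}$ is needed'' for these tokens to reach their targets, which is precisely the claim to be proven, and your proposed mechanism for ruling out detours is not sound: a multi-hop route through other superedges does \emph{not} require simultaneously free edge-paths in distinct superedges. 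Since the superedges open one after another as the clock advances, detouring tokens can cross a single open superedge, park on free vertices of an intermediate supernode (its storage gadget, vacated source-gadget vertices, not-yet-filled target-gadget vertices, or cleared node-vertices), and wait for the next superedge to open; the inequality $2Q^*+k<K$ constrains clearing an edge-path, not this kind of staged migration, so Lemma~\ref{lem-one-edge} plus that inequality does not ``block'' anything.

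The paper closes this hole with a delicate argument that your sketch does not contain: it fixes the first unused superedge $D_{i,j}$ in clock order, analyses the level-$x$ and level-$(x+1)$ opening periods, exploits the $3$-regularity of $H$ (the case distinction on whether $D_b$ has two or three super-in-neighbors, and the observation that $D_i$ must have exactly two super-out-neighbors), uses the freedom to number same-level superedges consecutively per supernode, and then --- decisively --- uses the geometric multiplicities $\ell_x = Q\cdot\ell_{x+1}$ against the parking capacity of a supernode (at most $3\ell_{x+1}+1$ gadget vertices) to show the detouring $D_i$-$D_j$ tokens cannot be warehoused until the superedge into $D_j$'s level-$(x+2)$ neighbour opens. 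In your write-up the geometric growth appears only as a vague ``backstop,'' so the reduction's correctness is not established. (A minor additional point: the clock visiting every position needs no superedge argument at all --- each clock segment must reach its target configuration --- so your justification there is circular but also unnecessary.)
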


\begin{proof}
In counting the total number of moves in any transforming sequence, we need to allot at least $(|E(H)| - 1)(K + L)$ moves for transforming 
the clock gadget and at least $Q^*$ moves of vertices from source gadgets to target gadgets. 
Moreover, since each node-vertex blocks the moves from source gadgets to target gadgets, at least $2k$ moves are required to clear and then fill the node-vertices. 
Of the $\ell$ moves in total, at most $2K|E(H)|$ moves remain. 
If we can show that all of these moves are used to clear and fill an edge-path in each of the superedges, then we can conclude 
that there must be a subgraph of $G_M$ isomorphic to $H$. Hence, to complete the proof it suffices to show that each superedge 
is used for a move, namely, that for all $i$ and $j$, at least one $D_i$-$D_j$ or $D_j$-$D_i$ token passes through superedge $D_{i,j}$.

Suppose to the contrary that there exists a superedge $D_{i,j}$ that is not used for any move, and consider the first such superedge 
in clock numbering, say for $D_i$ at level $x$ and $D_j$ at level $x+1$. 
To determine the possible sequence of superedges in the path of $D_i$-$D_j$ tokens from $D_i$ to $D_j$ and $D_j$ to $D_i$, we observe 
that Lemmas~\ref{lem-good-storage},~\ref{lem-good-clock}, and~\ref{lem-one-edge} imply the following: the clock starts at position 1 and advances 
one position at a time;  %until it reaches position $|E(H)|$. We also know
at any point in time, we can have at most one superedge with an edge-path free of tokens; and whenever superedge numbered $p$ has a free edge-path, 
then the clock must be at position $p$. Due to the clock numbering of superedges, we can view each superedge as \emph{opening} and then \emph{closing}; 
before opening and after closing, a superedge cannot be used for movement of tokens.  Furthermore, since the superedges open and close in order by level, 
we can refer to the \emph{level $x$ opening period} as the sequence of clock moves during which each superedge of level $x$ is opened and closed. 

Since tokens can move only between supernodes at levels $x$ and $x+1$ during the level $x$ opening period, 
the $D_i$-$D_j$ token must move to level $x + 1$ by the end of the level $x$ opening period, and then using at least one $x+2$ supernode 
(and possibly an alternating path of level $x+1$ and $x+2$ supernodes) during the level $x+1$ opening period to reach $D_j$. 
As for the $D_j$-$D_i$ token, it must move to its target position, i.e., $D_i$, using an alternating path of level $x$ and level $x+1$ supernodes. 
This also needs to happen before the end of the level $x$ opening period. 

We first show that $D_i$ must have exactly two super-out-neighbors at level $x+1$. 
Since no edge-path in a superedge at level $x-1$ or earlier can be opened in the future, the first edge in the path used by the $D_i$-$D_j$ tokens is 
from $D_i$ to a neighbor $D_{b} \ne D_j$ at level $x+1$; as a consequence, $D_i$ must have at least two neighbors at level $x+1$. 
The supernode $D_i$ cannot be the supernode at level 0, since there does not exist an alternating path of superedges at level 0 to be used to move the $D_j$-$D_i$ tokens; 
this implies that $D_i$ cannot have three super-out-neighbors. 

During the level $x$ opening period, the $D_j$-$D_i$ tokens must move through a sequence of level $x$ superedges, alternating between level $x$ and $x+1$ supernodes, 
until reaching $D_i$ from super-out-neighbor $D_b \ne D_j$ at level $x+1$. In order for the tokens to reach $D_b$ from $D_j$, they must pass through a neighbor 
$D_a \ne D_i$ of $D_b$ at level $x$ (possibly after a longer sequence of superedges). 
This implies that the superedge $D_{a,b}$ opens before the superedge $D_{i,b}$, allowing the tokens to move from $D_a$ to $D_b$ to $D_i$.

We now consider two cases, depending on whether $D_b$ has two or three super-in-neighbors at level $x$.

\noindent{\bf Case 1: $D_b$ has two super-in-neighbors} 
First, we show that it is impossible for both the $D_i$-$D_j$ tokens to reach $D_j$ and the $D_j$-$D_i$ 
tokens to reach $D_i$ during the level $x$ opening period.
Assume $D_j$ has two super-in-neighbors at level $x$. Let $D_c$ be the other neighbors of $D_j$ at level $x$.
Then, for the $D_i$-$D_j$ tokens to move from $D_i$ to $D_j$ during the level $x$ opening 
period, the superedge $D_{i,b}$ must open before the superedge $D_{c,j}$. 
For the $D_j$-$D_i$ tokens to move from $D_j$ to $D_i$ during the level $x$ opening 
period, the superedge $D_{c,j}$ must open before the superedge $D_{i,b}$, a contradiction. 
Since $D_i$ has exactly two super-out-neighbors at level $x+1$, the same argument applies when $D_j$ has three super-in-neighbors at level $x$. 
We can then conclude that the $D_i$-$D_j$ tokens arrive at a level $x+1$ supernode, in particular the $D_b$ supernode, by the end of the level $x$ opening period, 
and then move from $D_b$ to $D_j$ during the level $x+1$ opening period.

Since the $D_j$-$D_i$ tokens need to be able to move out of $D_b$ and into $D_j$ during the level $x + 1$ opening period, 
both $D_j$ and $D_b$ must have two super-in-neighbors at level $x$, and hence one super-out-neighbor at level $x+2$. 
We denote these neighbors by $D_{j'}$ and $D_{b'}$, respectively. 
To move the $D_i$-$D_j$ tokens to $D_j$ at the opening of $D_{j,j'}$, all $\ell_x$ tokens must be stored in $D_{b'}$. 
However, $D_{b'}$ has at most $3\ell_{x+1} + 1$ vertices in its source, target, and storage 
gadgets, which is insufficient as $\ell_x = Q \cdot \ell_{x+1} > 3\ell_{x+1} + 1$. 
Thus, it is not possible to move the $D_i$-$D_j$ tokens in this case, as needed. 

\noindent{\bf Case 2: $D_b$ has three super-in-neighbors} 
As in the previous case, we first show that it is impossible for both the $D_i$-$D_j$ tokens to reach $D_j$ and the $D_j$-$D_i$ 
tokens to reach $D_i$ during the level $x$ opening period. Recall that when numbering the edges at level $x$ we are free to number the edges 
in any order. Hence, we assume that all the edges incident to a supernode at level $x+1$ must happen consecutively (for each supernode). 
This implies that all edges incident to $D_j$ will either open after or before the $D_i$-$D_j$ tokens have left $D_i$, a contradiction. 

When $D_b$ has three super-in-neighbors we know that $D_b$ has zero neighbors 
at level $x+2$. For the $D_j$-$D_i$ tokens to move from $D_j$ to $D_a$ to $D_b$ and finally to $D_i$ during the level $x$ opening, 
the superedge $D_{i,b}$ must open after the superedge $D_{a,b}$. This implies that by the end of the level $x$ opening period all 
the $D_i$-$D_j$ tokens will be either at supernode $D_b$ or will have moved to other supernodes at level $x+1$. 
However, since $D_b$ has no neighbors at level $x+2$, none of these tokens 
can remain at $D_b$ by the end of the level $x$ opening period. 
Let us denote by $D_1$, $D_2$, up to $D_k$, the at most $k$ supernodes at level $x+1$ that will contain the $D_i$-$D_j$ tokens 
by the end of the level $x$ opening period. Each one of those supernodes, except possibly the last one, will have 
one neighbor at level $x + 2$. Moreover, the number of $D_i$-$D_j$ tokens is $\ell_x = Q \cdot \ell_{x+1}$. 
Finally, we note that all of these tokens have to move to $D_{j'}$ prior to the edge $D_{j,j'}$ opening. 
However, $D_{j'}$ has at most $3\ell_{x+1} + 1$ vertices in its source, target, and storage 
gadgets, which is insufficient as $\ell_x = Q \cdot \ell_{x+1} > 3\ell_{x+1} + 1$. 
Thus, it is again not possible to move the $D_i$-$D_j$ tokens in this case, which completes the proof. 
\end{proof}

Combining Lemmas~\ref{lem-forward-undir} and~\ref{lem-backward-undir} with the fact that $f = O(k)$ and $\ell = k^{O(k)}$, we obtain the following theorem.

\begin{theorem}\label{thm-undir-free}
\textsc{LUTM} is W[1]-hard when parameterized by $\ell + f$. 
\end{theorem}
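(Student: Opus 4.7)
My plan is to prove the theorem simply by combining the reduction from \textsc{Multicolored Subgraph Isomorphism} developed in the preceding construction with the two directional lemmas. By \cite{DBLP:journals/toc/Marx10}, \textsc{Multicolored Subgraph Isomorphism} is W[1]-hard when parameterized by $k = |V(H)|$ even when $H$ is restricted to be a 3-regular connected bipartite graph, matching the assumption used throughout the reduction. Lemmas~\ref{lem-forward-undir} and~\ref{lem-backward-undir} together give that the produced instance $(G,S,T,\ell)$ of \textsc{LUTM} is a yes-instance if and only if the original \textsc{Multicolored Subgraph Isomorphism} instance is, so correctness of the reduction is already in hand.

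The only remaining task is to verify that the parameter $\ell + f$ in the produced instance is bounded by a computable function of $k$. By construction $f = k$, so I focus on $\ell$. The recipe for $\ell$ is $\ell = (|E(H)|-1)(K+L) + Q^* + 2k + 2K|E(H)|$, which is polynomial in $|E(H)|$, $K$, $L$, and $Q^*$. Since $H$ is 3-regular, $|E(H)| = O(k)$, and $K$ and $L$ are polynomial in $Q^*$ and $k$ by their definitions, so the whole computation reduces to bounding $Q^*$.

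The main obstacle, and the reason $\ell$ turns out to be so large, is the geometric blow-up in the per-superedge token counts $\ell_r$. The recurrence $\ell_y = Q \cdot \ell_{y+1}$ with base $\ell_z = Q = 3k^2/2$ and BFS depth at most $|V(H)| = k$ unrolls to $\ell_1 = Q^{O(k)} = k^{O(k)}$, and summing over all $O(|E(H)|) = O(k)$ superedges gives $Q^* = k^{O(k)}$. Plugging back in, $K = 2Q^* + k + 1 = k^{O(k)}$, and hence $L$ and finally $\ell$ are also $k^{O(k)}$. Therefore $\ell + f = k^{O(k)}$ is a computable function of $k$, which is precisely what is needed for the construction to qualify as a parameterized reduction, so \textsc{LUTM} inherits W[1]-hardness parameterized by $\ell + f$ from \textsc{Multicolored Subgraph Isomorphism}.
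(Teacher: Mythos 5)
Your proposal is correct and follows essentially the same route as the paper, which likewise obtains the theorem by combining Lemmas~\ref{lem-forward-undir} and~\ref{lem-backward-undir} with the bounds $f = O(k)$ and $\ell = k^{O(k)}$; your unrolling of the recurrence $\ell_y = Q\cdot\ell_{y+1}$ to justify $Q^* = k^{O(k)}$ simply makes explicit the parameter bound the paper states without detail.
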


\bibliography{atoms}
%\newpage
%\appendix
%\input{atoms_appendix}
\end{document}